\documentclass[11pt, letterpaper]{article}
\usepackage[utf8]{inputenc}
\usepackage{fullpage,times}

\bibliographystyle{plain}
\usepackage[utf8]{inputenc}
\usepackage[english]{babel}
\usepackage{amssymb,amsmath,amsthm,amsfonts}
\usepackage{algorithm}
\usepackage{enumitem}
\usepackage[noend]{algpseudocode}
\usepackage[margin=1in]{geometry}
\usepackage{graphicx}
\newtheorem{theorem}{Theorem}[section]
\newtheorem{claim}{Claim}
\newtheorem{lemma}{Lemma}[section]
\newtheorem*{question*}{Question}

\theoremstyle{definition}
\newtheorem{definition}{Definition}[section]
\newtheorem{remark}{Remark}
\newtheorem*{remark*}{Remark}
\newtheorem*{notation*}{Notation}
\usepackage{mdframed}
\usepackage{subcaption}
\usepackage{graphicx}
\usepackage{bm}
\usepackage{chngcntr}
\counterwithout{table}{section}
\def\multiset#1#2{\ensuremath{\left(\kern-.3em\left(\genfrac{}{}{0pt}{}{#1}{#2}\right)\kern-.3em\right)}}

\usepackage{mathtools}

\title{Lower Bounds for Dynamic Distributed Task Allocation\footnote{A preliminary version of this paper appeared in ICALP 2020.} }

\author{Hsin-Hao Su\\Boston College\and Nicole Wein\footnote{supported by an NSF Graduate Fellowship and NSF Grant CCF-1514339}\\MIT} 





\date{}

\begin{document}

\maketitle

\begin{abstract}
We study the problem of distributed task allocation in multi-agent systems.  
Suppose there is a collection of agents, a collection of tasks, and a \emph{demand vector}, which specifies the number of agents required to perform each task. The goal of the agents is to cooperatively allocate themselves to the tasks to satisfy the demand vector. We study the \emph{dynamic} version of the problem where the demand vector changes over time. Here, the goal is to minimize the \emph{switching cost}, which is the number of agents that change tasks in response to a change in the demand vector. The switching cost is an important metric since changing tasks may incur significant overhead. 

We study a mathematical formalization of the above problem introduced by Su, Su, Dornhaus, and Lynch~\cite{su2017ant}, which can be reformulated as a question of finding a low distortion embedding from symmetric difference to Hamming distance. In this model it is trivial to prove that the switching cost is at least 2. We present the first non-trivial lower bounds for the switching cost, by giving lower bounds of 3 and 4 for different ranges of the parameters.

 
\end{abstract}
\clearpage
\thispagestyle{empty}

\pagebreak
\setcounter{page}{1}
\section{Introduction}
Task allocation in multi-agent systems is a fundamental problem in distributed computing.
Given a collection of tasks, a collection of task-performing agents, and a \emph{demand vector} which specifies the number of agents required to perform each task, the agents must collectively allocate themselves to the tasks to satisfy the demand vector. This problem has been studied in a wide variety of settings. For example, agents may be identical or have differing abilities, agents may or may not be permitted to communicate with each other, agents may have limited memory or computational power, agents may be faulty, and agents may or may not have full information about the demand vector. See Georgiou and Shvartsman's book~\cite{georgiou2011cooperative} for a survey of the distributed task allocation literature. See also the more recent line of work by Dornhaus, Lynch and others on algorithms for task allocation in ant colonies~\cite{cornejo2014task, su2017ant, dornhaus2018self, radeva2017costs}.

We consider the setting where the demand vector \emph{changes dynamically} over time and agents must redistribute themselves among the tasks accordingly. We aim to minimize the \emph{switching cost}, which is the number of agents that change tasks in response to a change in the demand vector. The switching cost is an important metric since changing tasks may incur significant overhead. Dynamic task allocation has been extensively studied in practical, heuristic, and experimental domains. 
For example, in swarm robotics, there is much experimental work on heuristics for dynamic task allocation (see e.g. \cite{krieger2000ant, csahin2004swarm, mclurkin2005dynamic, mclurkin2004stupid, lerman2006analysis, macarthur2011distributed}). Additionally, in insect biology it has been empirically observed that demands for tasks in ant colonies change over time based on environmental factors such as climate, season, food availability, and predation pressure~\cite{oster1979caste}. Accordingly, there is a large body of biological work on developing hypotheses about how insects collectively perform task allocation in response to a changing environment (see surveys \cite{beshers2001models, robinson1992regulation}). 

Despite the rich experimental literature, to the best of our knowledge there are only two works on dynamic distributed task allocation from a theoretical algorithmic perspective. Su, Su, Dornhaus, and Lynch~\cite{su2017ant} present and analyze gossip-based algorithms for dynamic task allocation in ant colonies. Radeva, Dornhaus, Lynch, Nagpal, and Su~\cite{radeva2017costs} analyze dynamic task allocation in ant colonies when the ants behave randomly and have limited information about the demand vector.


\subsection{Problem Statement}\label{sec:state}

We study the formalization of dynamic distributed task allocation introduced by Su, Su, Dornhaus, and Lynch~\cite{su2017ant}.

\subparagraph{Objective:} Our goal is to minimize the \emph{switching cost}, which is the number of agents that change tasks in response to a change in the demand vector. 
\subparagraph{Properties of agents:}
\begin{enumerate}
\item the agents have complete information about the changing demand vector
\item the agents are heterogeneous
\item the agents cannot communicate
\item the agents are memoryless
\end{enumerate}
The first two properties specify \emph{capabilities} of the agents while the third and fourth properties specify \emph{restrictions} on the agents. Although the exclusion of communication and memory may appear overly restrictive, our setting captures well-studied models of both collective insect behavior and swarm robotics, as outlined in Section~\ref{sec:app}. 

From a mathematical perspective, our model captures the \emph{combinatorial} aspects of dynamic distributed task allocation. In particular, as we show in Section~\ref{sec:reform}, the problem can be reformulated as finding a \emph{low distortion embedding} from symmetric difference to Hamming distance. 

\subsubsection{Formal statement}
Formally, the problem is defined as follows. There are three positive integer parameters: $n$ is the number of agents, $k$ is the number of tasks, and $D$ is the target \emph{maximum switching cost}, which we define later. The goal is to define a set of $n$ deterministic functions $f_1^{n,k},f_2^{n,k},\dots, f_n^{n,k}$, one for each agent, with the following properties. 
\begin{itemize}
\item {\bf Input:} For each agent $a$, the function $f_a^{n,k}$ takes as input a \emph{demand vector} $\vec{v}=\{v_1,v_2,\dots,v_k\}$ where each $v_i$ is a non-negative integer and $\sum_i v_i=n$. Each $v_i$ is the number of agents required for task $i$, and the total number of agents required for tasks is exactly the total number of agents. 
\item {\bf Output:} For each agent $a$, the function $f_a^{n,k}$ outputs some $i\in [k]$. The output of $f_a^{n,k}(\vec{v})$ is the task that agent $a$ is assigned when the demand vector is $\vec{v}$. 
\item {\bf Demand satisfied:} For all demand vectors $\vec{v}$ and all tasks $i$, we require that the number of agents $a$ for which $f_a^{n,k}(\vec{v})=i$ is exactly $v_i$. That is, the allocation of agents to tasks defined by the set of functions $f_1^{n,k},f_2^{n,k},\dots, f_n^{n,k}$ exactly satisfies the demand vector.
\item {\bf Switching cost satisfied:} The \emph{switching cost} of a pair $(\vec{v},\vec{v'})$ of demand vectors is defined as the number of agents $a$ for which $f_a^{n,k}(\vec{v})\not=f_a^{n,k}(\vec{v'})$; that is, the number of agents that switch tasks if the demand vector changes from $\vec{v}$ to $\vec{v'}$ (or from $\vec{v'}$ to $\vec{v}$). We say that a pair of demand vectors $\vec{v}$, $\vec{v'}$ are \emph{adjacent} if $|\vec{v}-\vec{v'}|_1=2$; that is, if we can get from $\vec{v}$ to $\vec{v'}$ by moving exactly one unit of demand from one task to another. The \emph{maximum switching cost} of a set of functions $f_1^{n,k},f_2^{n,k},\dots, f_n^{n,k}$ is defined as the maximum switching cost over all pairs of adjacent demand vectors; that is, the maximum number of agents that switch tasks in response to the movement of a single unit of demand from one task to another. We require that the maximum switching cost of $f_1^{n,k},f_2^{n,k},\dots, f_n^{n,k}$ is at most $D$.
\end{itemize}


\begin{mdframed}
\begin{question*}Given $n$ and $k$, what is the minimum possible maximum switching cost $D$ over all sets of functions $f_1^{n,k},\dots, f_n^{n,k}$?
\end{question*}
\end{mdframed}

\subsubsection{Remarks}

\begin{remark}
The problem statement only considers the switching cost of pairs of  \emph{adjacent} demand vectors. We observe that this also implies a bound on the switching cost of non-adjacent vectors: if every pair of adjacent demand vectors has switching cost at most $D$, then every pair of demand vectors with $\ell_1$ distance $d$ has switching cost at most $D(
d/2)$.
\end{remark}

\begin{remark}The problem statement is consistent with the properties of the agents listed above. In particular, the agents have complete information about the changing demand vector because for each agent, the function $f_a^{n,k}$ takes as input the current demand vector. The agents are heterogeneous because each agent $a$ has a separate function $f_a^{n,k}$. The agents have no communication or memory because the \emph{only} input to each function $f_a^{n,k}$ is the current demand vector. 
\end{remark}



\begin{remark} Forbidding communication among agents is crucial in the formulation of the problem, as otherwise the problem would be trivial. In particular, it would always be possible to achieve maximum switching cost 1: when the current demand vector changes to an adjacent demand vector, the agents simply reach consensus about which single agent will move. 
\end{remark}

\subsubsection{Applications}\label{sec:app}

\paragraph{Collective insect behavior}

There are a number of hypotheses that attempt to explain the mechanism behind task allocation in ant colonies (see the survey~\cite{beshers2001models}). One such hypothesis is the \emph{response threshold model}, in which ants decide which task to perform based on individual preferences and environmental factors. Specifically, the model postulates that there is an environmental stimulus associated with each task, and each individual ant has an internal threshold for each task, whereby if the stimulus exceeds the threshold, then the ant performs that task. The response threshold model was introduced in the 70s and has been studied extensively since (for comprehensive background on this model see the survey~\cite{beshers2001models} and the introduction of~\cite {duarte2012evolution}).

Our setting captures the essence of the response threshold model since agents are permitted to behave based on individual preferences (property 2: agents are heterogeneous) and environmental factors (property 1: agents have complete information about the demand vector). We study whether models like the response threshold model can achieve low switching costs.

Inspired by collective 
insect behavior, researchers have also studied the response threshold model in the context of swarm robotics~\cite{castello2013task, kim2014response, yang2009swarm}. Our setting also relates more generally to swarm robotics:





\paragraph{Swarm robotics}
There is a body of work in swarm robotics specifically concerned with property 3 of our setting: eliminating 
the need for 
communication (e.g. \cite{wang2015multi, chen2015cooperation, kernbach2013adaptive, penders2011robot}). In practice, communication among agents may be unfeasible or costly. In particular, it may be unfeasible to build a fast and reliable network infrastructure capable of dealing with delays and failures, especially in a remote location. 

Regarding property 4 of our setting (the agents are memoryless), it may be desirable for robots in a swarm to not rely on memory. For example, if a robot fails and its memory is lost, we may wish to be able to introduce a new robot into the system to replace it.

Concretely, dynamic task allocation in swarm robotics may be applicable to disaster containment~\cite{penders2011robot, zahugi2013oil}, agricultural foraging, mining, drone package delivery, and environmental monitoring~\cite{csahin2004swarm}.


\subsection{Past Work}\label{sec:past}
Our problem was previously studied only by Su, Su, Dornhaus, and Lynch~\cite{su2017ant}, who presented two upper bounds and a lower bound.

The first upper bound is a very simple set of functions $f_1^{n,k},\dots, f_n^{n,k}$ with maximum switching cost $k-1$. Each agent has a unique ID in $[n]$ and the tasks are numbered from 1 to $k$. The functions $f_1^{n,k},\dots, f_n^{n,k}$ are defined so that for all demand vectors, the agents populate the tasks in order from 1 to $k$ in order of increasing agent ID. That is, for each agent $a$, $f_a^{n,k}$ is defined as the task $j$ such that $\sum_{i=0}^{j-1} d_i<\text{ID}(a)$ and $\sum_{i=0}^{j} d_i\geq \text{ID}(a)$. Starting with any demand vector, if one unit of demand is moved from task $i$ to task $j$, the switching cost is at most $|i-j|$ because at most one agent from each task numbered between $i$ and $j$ (including $i$ but not including $j$) shifts to a new task. Thus, the maximum switching cost is $k-1$.

The lower bound of Su et al.~is also very simple. It shows that there does not exist a set of functions $f_1^{n,k},\dots, f_{n}^{n,k}$  with maximum switching cost 1 for $n\geq 2$ and $k\geq 3$. Suppose for contradiction that there exists a set of functions $f_1^{n,k},\dots, f_{n}^{n,k}$ with maximum switching cost 1 for $n=2$ and $k=3$ (the argument can be easily generalized to higher $n$ and $k$). 

Suppose the current demand vector is $[1,1,0]$, that is, one agent is required for each of tasks 1 and 2 while no agent is required for task 3. Suppose agents $a$ and $b$ are assigned to tasks 1 and 2, respectively, which we denote $[a,b,\emptyset]$. Now suppose the demand vector changes from $[1,1,0]$ to the adjacent demand vector $[1,0,1]$. Since the maximum switching cost is 1, only one agent moves, so agent $b$ moves to task 3, so we have $[a,\emptyset, b]$. Now suppose the demand vector changes from $[1,0,1]$ to the adjacent demand vector $[0,1,1]$. Again, since the maximum switching cost is 1, agent $a$ moves from task 1 to task 2 resulting in $[\emptyset,a,b]$. Now suppose the demand vector changes from $[0,1,1]$ to the adjacent demand vector $[1,1,0]$, which was the initial demand vector. Since the maximum switching cost is 1, agent $b$ moves from task 3 to task 1 resulting in $[b,a,\emptyset]$. 

The problem statement requires that the allocation of agents depends \emph{only} on the current demand vector, so the allocation of agents for any given demand vector must be the same regardless of the history of changes to the demand vector. However, we have shown that the allocation of agents for $[1,1,0]$ was initially $[a,b,\emptyset]$ and is now $[b,a,\emptyset]$, a contradiction. Thus, the maximum switching cost is at least 2.

The second upper bound of Su et al.~states that there exists a set of functions $f_1^{n,k},\dots, f_{n}^{n,k}$  with maximum switching cost 2 if $n\leq 6$ and $k=4$. They prove this result by exhaustively listing all 84 demand vectors along with the allocation of agents for each vector. 

\subsection{Our results}
We initiate the study of non-trivial lower bounds for the switching cost.
In particular, with the current results it is completely plausible that the maximum switching cost can always be upper bounded by 2, regardless of the number of tasks and agents. Our results show that this is not true and provide further evidence that the maximum switching cost grows with the number of tasks.


One might expect that the limitations on $n$ and $k$ in the second upper bound of Su et al.~is due to the fact the space of demand vectors grows exponentially with $n$ and $k$ so their method of proof by exhaustive listing becomes unfeasible. However, our first result is that the second upper bound of Su et al.~is actually \emph{tight} with respect to $k$. In particular, 
we show that achieving maximum switching cost 2 is impossible even for $k=5$ (for \emph{any} $n>2$).

\begin{theorem}\label{thm:3}
For $n\geq 3$, $k\geq 5$, every set of functions $f_1^{n,k},\dots, f_n^{n,k}$ has maximum switching cost at least 3.
\end{theorem}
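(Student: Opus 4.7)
The plan is to argue by contradiction: assume $f_1^{n,k}, \ldots, f_n^{n,k}$ achieves maximum switching cost $D \leq 2$, and derive a contradiction by exhibiting a collection of adjacent demand vectors whose assignments cannot consistently close up.

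The first step is a structural dichotomy for transitions. For adjacent demand vectors $\vec v, \vec v'$ differing by one unit of demand moving from task $i$ to task $j$, balancing net flow (net $-1$ at $i$, net $+1$ at $j$, net $0$ elsewhere) with exactly two agent moves forces either \emph{(Case I)} a single agent moves directly from $i$ to $j$, or \emph{(Case II)} one agent moves from $i$ to some intermediate task $m \notin \{i,j\}$ while another agent moves from $m$ to $j$. Every transition of switching cost $2$ must be one of these two types.

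Next, I focus on the base case $n = 3, k = 5$; larger $n, k$ will then follow by a reduction that pins surplus demand to designated tasks and restricts the varying portion of the demand to five designated tasks. In the base case, I consider demand vectors that are indicator functions $\chi_S$ of three-element subsets $S \subset \{1, \ldots, 5\}$, which form the $10$ vertices of the Johnson graph $J(5, 3)$. Encode each assignment $\pi_S$ as a permutation $\sigma_S \in S_3$ obtained by listing the three agents at the tasks of $S$ in sorted order. Each edge $S \to S'$ in $J(5, 3)$ then admits exactly three transitions (one from Case I and two from Case II, parameterized by the choice of intermediate $m \in S \cap S'$), and these correspond to three allowed values of a multiplier $\tau \in S_3$ such that $\sigma_{S'} = \sigma_S \tau^{-1}$. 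In particular, the three allowed multipliers on a given edge depend only on the ranks of $i$ in $S$ and $j$ in $S'$.

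The heart of the proof is to exhibit a closed walk in $J(5, 3)$ along which no choice of Case at each edge composes to the identity in $S_3$, giving the contradiction. The principal obstacle is that small individual cycles in $J(5, 3)$ typically admit closing choices (since each edge already offers three options), and the analogous labeling problem restricted to $J(4, 3)$ is in fact solvable (this is implicit in the $k = 4$ upper bound of Su et al.). Thus the obstruction must use the fifth task in an essential way, and will likely require combining constraints from several overlapping cycles through the new vertices of $J(5, 3)$, possibly augmented with constraints derived from non-indicator demand vectors such as those with some $v_i = 2$, whose transitions to neighboring $\chi_S$'s impose additional structure on the $\sigma_S$. Identifying the precise minimal gadget whose joint constraints are unsatisfiable -- and doing so in a way that also covers all $n \geq 3, k \geq 5$ via the padding reduction -- is the main technical challenge.
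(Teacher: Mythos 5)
Your setup for the base case is on the same track as the paper: after reformulating demand vectors as $3$-subsets of $[5]$ and assignments as permutations, the Case~I/Case~II split for a cost-$2$ transition is exactly the paper's notion of a ``direct'' move versus a move through an intermediate task. But there are two genuine gaps.

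First, you have sketched a strategy (find a closed walk in $J(5,3)$ whose edge constraints cannot compose to the identity in $S_3$) rather than a proof; you explicitly defer the key step of exhibiting the unsatisfiable gadget. The paper proves the $n=3,k=5$ case by a different and fully worked-out local rigidity argument: it shows that for each pair $\{x,y\}$, at least one of $x,y$ is ``frozen'' to a fixed position across all three permutations $\pi(\{x,y,\cdot\})$ (Lemma~\ref{lem:fix}), that the frozen position is globally consistent for each character (Lemma~\ref{lem:same}), and then derives a contradiction by pigeonhole since at least four of the five characters are frozen into only three available positions. Your closed-walk idea and the paper's freezing argument are both trying to detect an inconsistency forced by the many overlapping cycles of $J(5,3)$, but you would still need to actually find and verify the gadget (and rule out that every choice of Case~I/Case~II on each edge closes up) for your route to constitute a proof.

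Second, and more seriously, the proposed ``padding reduction'' to general $n\ge 3$ does not work. Extending to larger $k$ is trivial (this is Lemma~\ref{lem:k}), but extending to larger $n$ by freezing surplus demand on designated tasks fails precisely because the assignment functions see the whole demand vector and the set of agents serving the ``active'' five tasks is not fixed: when demand shifts between two active tasks, a switching-cost-$2$ transition may route through an intermediate task among the padded ones, so agents pour in and out of the padded tasks even though their demands are constant. The paper makes exactly this point in Section~\ref{sec:remain}: it exhibits a family $S_i$ of demand vectors where a single reservoir task $i$ allows \emph{every} adjacent pair to have cost $2$, so more agents genuinely give the adversary more room, and the $n=3,k=5$ instance does not embed inside a larger instance. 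To handle $n\ge 4$ the paper develops an entirely separate argument (the ``type~1/type~2 task'' analysis, Lemmas~\ref{lem:dest2}--\ref{lem:indre}) that identifies a type-$2$ task $t$ with intermediate task $i$, shows this structure is preserved as demand is pumped from $i$ to $t$, and reaches a contradiction when $i$ empties out and can no longer serve as the intermediary. Without something playing this role, your proposal proves at most the $n=3$ case.
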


We then consider the next natural question: \emph{For what values of $n$ and $k$ is it possible to achieve maximum switching cost 3?} Our second result is that maximum switching cost 3 is not always possible:

\begin{theorem}\label{thm:4}
There exist $n$ and $k$ such that every set of functions $f_1^{n,k},\dots, f_n^{n,k}$ has maximum switching cost at least 4.
\end{theorem}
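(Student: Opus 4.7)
The plan is to argue by contradiction: assume that for every $n$ and $k$, some set of functions achieves maximum switching cost at most 3, and then exhibit specific $n$ and $k$ where this fails. The strategy combines the cost-3 lower bound of Theorem \ref{thm:3} with a structural analysis of exactly how cost-3 transitions are realized.

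First, I would characterize cost-$\leq 3$ transitions concretely. Moving one unit of demand from task $i$ to task $j$ changes the per-task assignment by $-1$ at $i$, $+1$ at $j$, and $0$ elsewhere. If at most $3$ agents move, then the multiset of agent movements decomposes as one of: (i) a single agent going $i\to j$, (ii) a 2-agent handoff $i\to \ell \to j$, (iii) a 3-agent chain $i\to \ell\to m\to j$, or (iv) a single agent going $i\to j$ together with a 2-cycle swap between two other tasks. This dichotomy is the structural backbone of the argument.

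Next, I would embed multiple copies of the Theorem \ref{thm:3} configuration inside a single large demand space. On each 5-task subsystem, Theorem \ref{thm:3} rules out cost $\leq 2$ on certain edges, forcing them into option (iii) or (iv). I would then isolate a small closed walk of demand vectors (for instance a cycle of length $4$ or $6$) in which each transition is simultaneously forced by a different 5-task subsystem. Because the demand vectors close up, each agent's sequence of tasks along the walk must also close up, so the composite permutation of agents around the cycle must be the identity. The goal is to select the configuration so that the swaps and chains forced by each edge, when composed around the cycle, yield a non-identity permutation, contradicting the closure of agent trajectories.

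The main obstacle is pinpointing the right closed walk: unlike the 3-cycle $[1,1,0],[1,0,1],[0,1,1]$ used for the lower bound of $2$, the lower bound of $4$ must cope with genuine flexibility, since options (i)--(iv) are all permitted per edge. Overcoming this probably requires either a careful case analysis on a handful of demand vectors chosen to over-determine the agent assignments, or a pigeonhole argument bounding the number of agents available to serve as ``intermediaries'' in chains or as ``swap partners'' around the cycle. I expect the bulk of the work to lie in showing that the four local options cannot be patched together consistently over the chosen structure, so that no global assignment with maximum switching cost $\leq 3$ exists.
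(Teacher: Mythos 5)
Your proposal takes a genuinely different route from the paper, but it has a concrete gap: the step you flag as ``the main obstacle''---finding the closed walk and showing the forced local transitions cannot compose to the identity---is precisely the entire content of the theorem, and you have neither constructed such a walk nor given a reason it must exist. The proposal is a plan, not a proof, and the hard part is left open.

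Two specific issues. First, you write that on each $5$-task subsystem, Theorem~\ref{thm:3} ``rules out cost $\leq 2$ on certain edges, forcing them into option (iii) or (iv).'' But Theorem~\ref{thm:3} is existential: it guarantees \emph{some} pair of adjacent demand vectors in the subsystem has switching cost $\geq 3$, without telling you \emph{which} pair. To build a closed walk where each edge is ``forced,'' you would need to first pin down exactly where the expensive edges live, which itself requires a nontrivial structural analysis that the proposal does not supply. Second, the closed-walk argument that works at cost $1$ (the $[1,1,0]\to[1,0,1]\to[0,1,1]\to[1,1,0]$ triangle) is rigid because a single mover at each edge leaves no freedom; at cost $\leq 3$ each edge admits your options (i)--(iv), so the per-edge degrees of freedom are large, and there is no a priori reason a short cycle over-determines the assignment. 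You acknowledge this but do not resolve it.

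The paper's proof suggests the local cycle approach is unlikely to go through easily. Rather than patch together transitions along a closed walk, the paper works in the $n\ll k$ subset regime, defines ``freezing'' of characters to indices, proves (Lemma~\ref{lem:fix2}) that every size-$(n-1)$ set is either partially frozen (configuration~$1$) or semi-frozen (configuration~$2$), invokes hypergraph Ramsey theory to extract a large sub-alphabet $K'$ where all size-$(n-1)$ subsets land in the \emph{same} configuration, and then derives a contradiction by a global double-counting of ``irregular'' $(Q,a)$ pairs. The resulting $k$ is a tower $t_{n-1}(cn)$. That the authors needed Ramsey homogenization and a global counting argument---and paid an astronomical $k$---is circumstantial evidence that cost-$3$ functions carry enough local flexibility to defeat short cycle arguments. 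If you want to pursue your route, you would need to produce an explicit closed walk together with a complete case analysis showing every cost-$\leq 3$ realization of its edges composes to a non-identity permutation; absent that, the proposal does not establish Theorem~\ref{thm:4}.
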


The value of $k$ for Theorem~\ref{thm:4} is an extremely large constant derived from hypergraph Ramsey numbers. Specifically, there exists a constant $c$ so that Theorem~\ref{thm:4} holds for $n\geq 5$ and $k\geq t_{n-1}(cn)$ where the tower function $t_j(x)$ is defined by $t_1(x) = x$ and $t_{i+1}(x) = 2^{t_i(x)}$.


We remark that while our focus on small constant values of the switching cost may appear restrictive, functions with maximum switching cost 3 already have a highly non-trivial combinatorial structure.

\subsection{Our techniques}\label{sec:tech}

We introduce two novel techniques, each tailored to a different parameter regime. One parameter regime is when $n\ll k$ and the demand for each task is either 0 or 1. This regime seems to be the most natural for the goal of proving the highest possible lower bounds on the switching cost. 



\subsubsection{The $n\ll k$ regime} 
We develop a proof framework for the $n\ll k$ regime and use it to prove Theorem~\ref{thm:3} for $n=3$, $k=5$, and more importantly, to prove Theorem~\ref{thm:4}. We begin by supposing for contradiction that there exists a set of functions $f_1^{n,k},\dots, f_{n}^{n,k}$ with switching cost 2 and 3, respectively, and then reason about the structure of these functions. The main challenge in proving Theorem~\ref{thm:4} as compared to Theorem~\ref{thm:3} is that functions with switching cost 3 can have a much more involved combinatorial structure than functions with switching cost 2. In principle, our proof framework could also apply to higher switching costs, but at present it is unclear how exactly to implement it for this setting.


The first step in our proofs is to reformulate the problem as that of finding a low distortion embedding from symmetric difference to Hamming distance, which we describe in Section~\ref{sec:reform}. This provides a cleaner way to reason about the problem in the $n\ll k$ parameter regime. Our proofs are written in the language of the problem reformulation, but here we will briefly describe our proof framework in the language of the original problem statement.

The simple upper bound of $k-1$ described in Section~\ref{sec:past} can be viewed as each agent having a ``preference" for certain tasks. The main idea of our lower bound is to show that for \emph{any} set of functions $f_1^{n,k},\dots, f_{n}^{n,k}$ with low switching cost, many agents must have a ``preference" for certain tasks. More formally, we introduce the idea of a task being \emph{frozen} to an agent. A task $t$ is frozen to agent $a$ if for every demand vector in a particular large set of demand vectors, agent $a$ is assigned to task $t$. Our framework has three steps:
\begin{itemize}
\item In step 1, we show roughly that in total, many tasks are frozen to some agent. 
\item In step 2, we show roughly that for many agents $a$, only few tasks are frozen to $a$. 
\item In step 3, we use a counting argument to derive a contradiction: we count a particular subset of frozen task/agent pairs in two different ways using steps 1 and 2, respectively.
\end{itemize}

The proof of Theorem~\ref{thm:3} for $n=3$ and $k=5$ serves as a simple illustrative example of our proof framework, while the proof of Theorem~\ref{thm:4} is more involved. In particular, in step 1 of the proof of Theorem~\ref{thm:4}, we derive \emph{multiple} possible structures of frozen task/agent pairs. Then, we use Ramsey theory to show that there exists a collection of tasks that all obey only \emph{one} of the possible structures. This allows us to reason about each of the possible structures  independently in steps 2 and 3.


\subsubsection{The remaining parameter regime}\label{sec:remain}
In the remaining parameter regime, we complete the proof of Theorem~\ref{thm:3}. In the previous parameter regime, we only addressed the $n=3$, $k=5$ case, and now we need to consider all larger values of $n$ and $k$. Extending to larger $k$ is trivial (we prove this formally in Section~\ref{sec:3}). However, it is not at all clear how to extend a lower bound to larger values of $n$. In particular, our proof framework from the $n\ll k$ regime immediately breaks down as $n$ grows.

The main challenge of handling large $n$ is that having an abundance of agents can actually allow \emph{more} pairs of adjacent demand vectors to have switching cost 2, so it becomes more difficult to find a pair with switching cost greater than 2. To see this, consider the following example.

Consider the subset $S_i$ of demand vectors in which a particular task $i$ has an unconstrained amount of demand and each remaining task has demand at most $n/(k-1)$. We claim that there exists a set of functions $f_1^{n,k},\dots, f_{n}^{n,k}$ so that every pair of adjacent demand vectors from $S_i$ has switching cost 2. Divide the agents into $k-1$ groups of $n/(k-1)$ agents each, and associate each task except $i$ to such a group of agents. We define the functions $f_1^{n,k},\dots, f_{n}^{n,k}$ so that given any demand vector in $S_i$, the set of agents assigned to each task except $i$ is simply a subset of the group of agents associated with that task (say, the subset of such agents with smallest ID). This is a valid assignment since the demand of each task except $i$ is at most the size of the group of agents associated with that task. The remaining agents are assigned to task $i$. Then, given a pair $(\vec{v},\vec{v'})$ of adjacent demand vectors in $S_i$, whose demands differ only for tasks $s$ and $t$, their switching cost is 2 because the only agents assigned to different tasks between $\vec{v}$ and $\vec{v'}$ are: one agent from each of the groups associated with tasks $s$ and $t$, respectively.

Because it is possible for many pairs of adjacent demand vectors to have switching cost 2, finding a pair of adjacent demand vectors with larger switching cost requires reasoning about a very precise set of demand vectors. To do this, we use roughly the following strategy. We identifying a task that serves the role of $i$ in the above example and then successively move demand out of task $i$ until task $i$ is empty and can thus no longer fill this role. At this point, we argue that we have reached a pair of adjacent demand vectors with switching cost more than 2. 

\section{Problem reformulation}\label{sec:reform}

\subsection{Notation} Let $A$ and $B$ be multisets. The intersection of $A$ and $B$ denoted $A\cap B$ is the maximal multiset of elements that appear in both $A$ and $B$. For example, $\{a,a,b,b\}\cap\{a,b,b,c\}=\{a,b,b\}$. The symmetric difference between $A$ and $B$, denoted $A\oplus B$, is the multiset of elements in either $A$ or $B$ but not in their intersection. For example, $\{a,a,b,b\}\oplus\{a,b,b,c\}=\{a,c\}$ since we are left with $a$ after removing $\{a,b,b\}$ from $\{a,a,b,b\}$ and we are left with $c$ after removing $\{a,b,b\}$ from ${a,b,b,c}$. 

A permutation of a multiset $A$ is simply a permutation of the elements of the multiset. For example, one permutation of $\{a,a,b\}$ is $aba$. We treat permutation as strings and perform string operations on them. For strings $X$ and $Y$ (which may be permutations), let $d(X,Y)$ denote the \emph{Hamming distance} between $X$ and $Y$.  For example, $d(aba,bca)=2$.

\subsection{Problem statement}\label{sec:combs}
Given positive integers $n$, $k$, and $D$, the goal is to find a function $\pi_{n,k}$ with the following properties.
\begin{itemize}
\item Let $\mathcal{S}_{n,k}$ be the set of all size $n$ multisets of $[k]$. The function $\pi_{n,k}$ takes as input a set $S\in \mathcal{S}_{n,k}$ and outputs a permutation of $S$.  
\item We say that a pair $S,S'\in \mathcal{S}_{n,k}$ has \emph{distortion} $D'$ with respect to $\pi_{n,k}$ if $|S\oplus S'|=2$ and\\$d(\pi_{n,k}(S),\pi_{n,k}(S'))=D'$. In other words, a pair of multisets has distortion $D'$ if they have the \emph{smallest} possible symmetric distance but \emph{large} Hamming distance (at least $D'$). We say that $\pi_{n,k}$ has \emph{maximum distortion} $D'$ if the maximum distortion over all pairs $S,S'\in \mathcal{S}_{n,k}$ with $|S\oplus S'|=2$ is $D'$. We require that the function $\pi_{n,k}$ has maximum distortion at most $D$.
\end{itemize}
We are interested in the question of for which values of the parameters $n$, $k$, and $D$, there exists $\pi_{n,k}$ that satisfies the above properties. In particular, we aim to minimize the maximum distortion:
\begin{mdframed}
\begin{question*}Given $n$ and $k$, what is the minimum possible maximum distortion over all functions $\pi_{n,k}$?
\end{question*}
\end{mdframed}

In other words, the question is whether there exists a function $\pi_{n,k}$ such that \emph{every} pair $S,S'\in \mathcal{S}_{n,k}$ has distortion at least $D$. Our theorems are lower bounds, so we show that for every function $\pi_{n,k}$ there \emph{exists} a pair $S,S'\in \mathcal{S}_{n,k}$ with distortion at least $D$.




\subsection{Equivalence to original problem statement}
We claim that the new problem statement from Section~\ref{sec:combs} is equivalent to the original problem statement from Section~\ref{sec:state}. 

\begin{claim} 
Given parameters $n$ and $k$ (the same for both problem statements) there exists a function $\pi_{n,k}$ with maximum distortion $D$ if and only if there exists a set of functions $f_1^{n,k},\dots,f_n^{n,k}$ with maximum switching cost $D$.
\end{claim}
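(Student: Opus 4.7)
The plan is to exhibit an explicit bijection between sets of functions $f_1^{n,k},\dots,f_n^{n,k}$ satisfying the original problem statement and functions $\pi_{n,k}$ satisfying the reformulated problem statement, and then verify that under this bijection the maximum switching cost of the former equals the maximum distortion of the latter. The correspondence is the natural one: demand vectors $\vec v=(v_1,\dots,v_k)$ with $\sum_i v_i=n$ are in bijection with size-$n$ multisets of $[k]$ via $\vec v \leftrightarrow S_{\vec v}$, where $S_{\vec v}$ contains $v_i$ copies of $i$; and assignments of the $n$ agents to tasks are in bijection with length-$n$ strings over $[k]$ via $(f_1(\vec v),\dots,f_n(\vec v)) \leftrightarrow $ the string whose $a$-th character is $f_a(\vec v)$.

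Concretely, in the forward direction I would take a set $f_1^{n,k},\dots,f_n^{n,k}$ with maximum switching cost $D$ and define $\pi_{n,k}(S_{\vec v})$ to be the string $f_1(\vec v)f_2(\vec v)\cdots f_n(\vec v)$; in the reverse direction I would take $\pi_{n,k}$ with maximum distortion $D$ and define $f_a^{n,k}(\vec v)$ to be the $a$-th character of $\pi_{n,k}(S_{\vec v})$. Both constructions clearly invert each other, and the only thing to check is that the required properties line up.

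The checks are routine: the demand-satisfied condition (that for each $i$ exactly $v_i$ agents $a$ have $f_a(\vec v)=i$) is exactly the statement that the string $f_1(\vec v)\cdots f_n(\vec v)$ is a permutation of the multiset $S_{\vec v}$, which is exactly what is required of $\pi_{n,k}(S_{\vec v})$. The adjacency condition $|\vec v-\vec v'|_1=2$ says that one unit of demand moves from some task $j$ to some task $i$, which is precisely the statement that $|S_{\vec v}\oplus S_{\vec v'}|=2$ (with $S_{\vec v}\oplus S_{\vec v'}=\{i,j\}$). Finally, the switching cost of $(\vec v,\vec v')$, namely $|\{a : f_a(\vec v)\ne f_a(\vec v')\}|$, equals by construction the number of positions where $\pi_{n,k}(S_{\vec v})$ and $\pi_{n,k}(S_{\vec v'})$ differ, which is $d(\pi_{n,k}(S_{\vec v}),\pi_{n,k}(S_{\vec v'}))$. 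Taking maxima over all adjacent pairs in one statement and all symmetric-difference-$2$ pairs in the other gives the desired equality of the two optimization quantities.

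There is no real obstacle here; the proof is a bookkeeping exercise, and the only thing one might view as requiring a small sentence of care is confirming that the bijection between demand vectors and multisets, and the bijection between adjacent demand-vector pairs and symmetric-difference-$2$ multiset pairs, are in fact bijections (both are immediate from the definitions). The proof will end by observing that since both directions of the correspondence preserve the maximum switching cost / maximum distortion value, existence on one side for parameters $(n,k,D)$ is equivalent to existence on the other.
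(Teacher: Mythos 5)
Your proposal is correct and follows essentially the same route as the paper: both establish the natural bijection between demand vectors and size-$n$ multisets of $[k]$, between agent assignments and permutations of those multisets, and then verify that adjacency corresponds to symmetric difference $2$ and switching cost corresponds to Hamming distance.
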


We describe the correspondence between the two problem statements:

\begin{itemize}
\item {\bf Demand vector.} $\mathcal{S}_{n,k}$ is the set of all possible demand vectors since a demand vector is simply a size $n$ multiset of the $k$ tasks. For example, the multiset $S=\{1,1,3\}$ is equivalent to the demand vector $\vec{v}=[2,0,1]$; both notations indicate that task 1 requires two units of demand, task 2 requires no demand, and task 3 requires one unit of demand.
\item {\bf Allocation of agents to tasks.} If $\vec{v}$ is the demand vector representing the multiset $S\in \mathcal{S}_{n,k}$, a permutation $\pi_{n,k}(S)$ is an allocation $f_1^{n,k}(\vec{v}),\dots, f_n^{n,k}(\vec{v})$ of agents to tasks so that $\pi_{n,k}(S)[i]=f_i^{n,k}(\vec{v})$; that is, agent $i$ performs the task that is the $i^{th}$ element in the permutation $\pi_{n,k}(S)$. For example, $\pi_{3,3}(\{1,1,3\})=131$ is equivalent to the following: $f_1^{3,3}([2,0,1])=1$, $f_2^{3,3}([2,0,1])=3$, and $f_3^{3,3}([2,0,1])=1$; both notations indicate that agents 1 and 3 both performs task 1, while agent 2 performs task 2.

\item {\bf Switching cost.} If $\vec{v},\vec{v'}$ are the demand vectors representing the multisets $S,S'\in\mathcal{S}_{n,k}$ respectively, the value $d(\pi_{n,k}(S),\pi_{n,k}(S'))$ is the switching cost because from the previous bullet point, $\pi_{n,k}(S)[i]\not=\pi_{n,k}(S')[i]$ if and only if $f_a^{n,k}(\vec{v})\not=f_a^{n,k}(\vec{v'})$. 

\item 
{\bf Adjacent demand vectors.} The set of all pairs $S,S'\in \mathcal{S}_{n,k}$ such that $|S\oplus S'|=2$ is the set of all pairs of adjacent demand vectors. This is because $|S\oplus S'|=2$ means that starting from $S$, one can reach $S'$ by changing exactly one element in $S$ from some $i\in [k]$ to some $j\in [k]$. Equivalently, starting from the demand vector represented by $S$ and moving one unit of demand from task $i$ to task $j$ results in the demand vector represented by $S'$.


\item {\bf Maximum switching cost.} If $f_1^{n,k},\dots, f_n^{n,k}$ is the set of functions representing $\pi_{n,k}$, then $\pi_{n,k}$ has maximum distortion $D$ if and only if $f_1^{n,k},\dots, f_n^{n,k}$ has maximum switching cost $D$. This is because $S,S'\in \mathcal{S}_{n,k}$ has distortion $D$ if and only if $|S\oplus S'|=2$ and $d(\pi_{n,k}(S),\pi_{n,k}(S'))=D$ which is equivalent to saying that the demand vectors $\vec{v}$ and $\vec{v'}$ that represent $S$ and $S'$ are adjacent and have switching cost $D$.
\end{itemize}

\subsection{Restatement of results}

We restate Theorems~\ref{thm:3} and \ref{thm:4} in the language of the problem restatement.

\begin{theorem}[Restatement of Theorem~\ref{thm:3}]\label{thm:3r}
Let $n\geq 3$ and $k\geq 5$. Every function $\pi_{n,k}$ has maximum distortion at least 3. 
\end{theorem}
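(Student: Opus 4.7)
The plan is to prove Theorem~\ref{thm:3r} in two parameter regimes: the base case $n = 3$, $k = 5$, attacked via the three-step freezing framework from Section~\ref{sec:tech}; and the remaining regime $(n,k) \neq (3,5)$ with $n \geq 3, k \geq 5$, where growing $k$ is a trivial reduction to the base case and growing $n$ is handled by the reservoir argument of Section~\ref{sec:remain}. I assume for contradiction throughout that $\pi_{n,k}$ has maximum distortion at most $2$, so every pair of adjacent multisets is mapped to permutations at Hamming distance at most $2$.

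For the base case, I start with a local observation: for adjacent $3$-subsets $S = \{t, x, y\}$ and $S' = \{t, x, z\}$ of $[5]$, the bound $d(\pi(S), \pi(S')) \leq 2$ forces at least one of $t, x$ to be assigned to the same agent in both $\pi(S)$ and $\pi(S')$, since otherwise all three positions disagree. In Step~1, I fix a task $t$ and examine the six $3$-subsets containing it, which form the octahedron $J(4, 2)$. Chasing the local anchor around short cycles of this graph---and exploiting that only three agents exist---I deduce that for each $t$ there is an agent $a_t$ such that $\pi(S)$ assigns $t$ to $a_t$ across a large subfamily $\mathcal{F}_t$ of these six subsets. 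In Step~2, I show that no agent can serve as $a_{t_1} = a_{t_2}$ for two distinct tasks over overlapping frozen subfamilies, because any $3$-subset lying in $\mathcal{F}_{t_1} \cap \mathcal{F}_{t_2}$ contains both $t_1$ and $t_2$ and would force that agent to perform two tasks at once. Step~3 is a pigeonhole contradiction: Step~1 yields at least five frozen (task, agent) pairs, one per $t \in [5]$, while Step~2 caps the total at three.

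The extension to $n = 3$, $k > 5$ is immediate: restricting $\pi_{3, k}$ to $\mathcal{S}_{3, 5} \subseteq \mathcal{S}_{3, k}$ yields a function $\pi_{3, 5}$ of maximum distortion at most $2$, contradicting the base case. For $n > 3$, I adopt the reservoir strategy of Section~\ref{sec:remain}: identify a task $i$ whose demand is large enough that the surrounding demand vector admits the group-partition structure of the example there, so that adjacent moves among the other tasks can be realized with switching cost exactly $2$. Starting from such a vector, I push units of demand out of $i$ via a chain of adjacent moves of switching cost $\leq 2$; when $v_i$ first hits $0$, the reservoir is exhausted, and I argue that the next adjacent move necessarily incurs switching cost at least $3$.

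The hardest step will be Step~1 of the base case: the local anchor constraint is weak in that it only guarantees one of two shared elements anchors per edge, so a priori the agent playing $t$ could drift around $J(4, 2)$ by always letting a different shared element anchor each edge. Ruling this out requires a careful case analysis of how adjacent permutations align, crucially leveraging that two simultaneous non-$t$-anchors cascade into inconsistencies given only three agents. The large-$n$ extension is a secondary delicate challenge, requiring precise selection of the reservoir task and the chain of adjacent demand vectors so that an explicit bad pair emerges at the moment $v_i = 0$.
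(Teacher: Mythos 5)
Your outline mirrors the paper's two-phase plan (a freezing/pigeonhole argument for $n=3,k=5$, then a reservoir-draining argument for larger $n$), but both phases are left with substantive gaps. For the base case, the paper deliberately avoids the octahedron $J(4,2)$: it fixes a \emph{pair} $\{x,y\}$ and considers only the three $3$-sets through it, where each of the three edges forces $x$ or $y$ to anchor; the resulting triangle-chase (Lemma~\ref{lem:fix}) is a short two-case argument, and a separate easy lemma (Lemma~\ref{lem:same}) gives cross-pair consistency. Your Step~1 instead wants $t$ itself to anchor across a ``large'' subfamily of all six $3$-sets containing $t$, and you flag it as the hardest step with no proof sketched. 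Be careful: a single pair $\{t,x\}$ need not anchor $t$ at all --- $\pi(\{t,x,c\})=xtc$, $\pi(\{t,x,d\})=xdt$, $\pi(\{t,x,e\})=xte$ is pairwise Hamming distance $\le 2$ yet places $t$ at two different positions --- so $\mathcal{F}_t$ cannot in general comprise all six sets, and its guaranteed size must be pinned down. Moreover, your Step~2 depends on $\mathcal{F}_{t_1}\cap\mathcal{F}_{t_2}\neq\emptyset$, which does not come for free: if $t_1$ is frozen only by the pair $\{t_1,x\}$ and $t_2$ only by $\{t_2,y\}$ with $\{t_1,x\}$ and $\{t_2,y\}$ disjoint, the two triangles of $3$-sets are disjoint. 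The paper closes precisely this gap in its Step~3 by invoking Lemma~\ref{lem:fix} on the specific pair $\{t_1,t_2\}$, which forces one of them to be frozen by that very pair and produces a concrete clashing set $\{t_1,t_2,z\}$; your proposal does not take this step and needs it (or an equivalent).

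For $n>3$, what you have written is essentially a paraphrase of the paper's own informal overview in Section~\ref{sec:remain}, not a proof. The actual argument (Section~\ref{app:3}) introduces type-$1$/type-$2$ tasks, proves several structural facts by adjacency-table case analyses (Lemmas~\ref{lem:dest2}, \ref{lem:tech}, \ref{lem:ss}, \ref{lem:one2}, \ref{lem:all2int}), and runs a careful induction (Lemma~\ref{lem:ind}) showing that the intermediate task is preserved as demand is drained; only then does the contradiction fall out at the moment the reservoir empties. Saying ``I argue that the next adjacent move necessarily incurs switching cost at least $3$'' names the conclusion; the entire technical content of this half of the theorem lies in making that argument precise, and none of it appears in the proposal.
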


\begin{theorem}[Restatement of Theorem~\ref{thm:4}]\label{thm:4r}
There exist $n$ and $k$ so that every function $\pi_{n,k}$ has maximum distortion at least 4.
\end{theorem}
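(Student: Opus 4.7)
I plan to work in the reformulated language of Section~\ref{sec:combs} and argue by contradiction: assume that for all sufficiently large $n$ and $k$ there exists a function $\pi_{n,k}$ with maximum distortion at most $3$, and derive a contradiction for $n = 5$ (or some small constant) and $k = t_{n-1}(cn)$. I will restrict attention to the $\{0,1\}$-demand regime, i.e.\ to the subfamily $\binom{[k]}{n} \subseteq \mathcal{S}_{n,k}$ of simple $n$-subsets of tasks, since this preserves all the combinatorial difficulty while letting me treat $\pi_{n,k}(S)$ as an $n$-string of distinct tasks indexed by agents. Within this regime I will instantiate the three-step framework from Section~\ref{sec:tech}, calling a task $t$ \emph{frozen} to an agent $a$ on a family $\mathcal{F} \subseteq \binom{[k]}{n}$ if $\pi_{n,k}(S)[a] = t$ for every $S \in \mathcal{F}$ with $t \in S$.

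\emph{Step 1: producing many frozen pairs.} For every adjacent pair $S, S' = (S \setminus \{x\}) \cup \{y\}$, the distortion bound says that $\pi_{n,k}(S)$ and $\pi_{n,k}(S')$ differ in at most three positions, so the agents carrying $x$ and $y$ participate in a small ``rotation'' among at most three agents. Following this rotation across many successive single-element swaps, I will show that the agent carrying a task $y$ is forced to be the same across many multisets containing $y$, so each of many tasks gets frozen to some agent with respect to some large family. In contrast to the switching-cost-$2$ case, where the rotation is essentially a single transposition, a switching-cost-$3$ swap admits several possible local patterns (two-cycles, three-cycles, and stationary configurations of various shapes), so each ``freezing event'' will come decorated with a finite \emph{type}.

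\emph{Step 2 and Ramsey.} I will color each $n$-subset of $[k]$ (or each structured tuple, depending on the types that arise in step 1) by its type, and invoke a hypergraph Ramsey theorem to extract a monochromatic subfamily of size $\Omega(n)$ in a universe of tasks whose size is a tower of height $n$; this is where the bound $k \geq t_{n-1}(cn)$ enters. Inside this monochromatic subfamily I can argue uniformly: if too many tasks were frozen to a single agent $a$, then I could exhibit two multisets $S, S'$ in the subfamily that differ by swapping one such frozen task for another, and the agent $a$ would have to output two different tasks simultaneously in a way that, propagated through the rest of the assignment, forces a pair with distortion $\geq 4$. Thus few tasks can be frozen to any single agent within the subfamily.

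\emph{Step 3: double counting.} Finally, I will count the number of (agent, frozen task) pairs inside the monochromatic subfamily in two ways: step 1 gives a lower bound on the total count (many frozen tasks in total, summed over agents), while step 2 gives an upper bound of at most $n \cdot (\text{few})$. Choosing parameters so that these bounds are incompatible yields the contradiction. The main obstacle I anticipate is step 1: enumerating and controlling the local rotation types at switching cost $3$ is substantially more delicate than at switching cost $2$, because three-agent rotations can cascade across successive swaps in several inequivalent ways, and it is precisely this proliferation of types that forces the Ramsey-tower dependence of $k$ on $n$.
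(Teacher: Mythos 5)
Your high-level plan matches the paper's in several important respects: you restrict to subsets of $[k]$, you argue by contradiction, you divide the work into a structural step, a Ramsey step, and a double-counting step, and you correctly anticipate that the Ramsey step is responsible for the tower-type dependence of $k$ on $n$. However, there are two concrete gaps that keep the proposal from going through as sketched.

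First, ``several possible local patterns (two-cycles, three-cycles, \dots) with a finite type'' is where the real work is, and you have left it at the level of intuition. The paper's Lemma~\ref{lem:fix2} is the precise statement you would need: for every size-$(n-1)$ set $R$, either $R$ \emph{freezes} a size-$(n-3)$ subset $A_R\subset R$ (each character of $A_R$ occupies a single fixed index of $\pi(S)$ for all $S\in\mathcal{U}_R$), or $R$ is \emph{semi-frozen} (there is a one-to-one map $h_R:R\to[n]$ and a single wildcard index $w_R$ such that every $r\in R$ sits at either $h_R(r)$ or $w_R$ in every $\pi(S)$). This is a two-way dichotomy, not an open-ended family of rotation types, and establishing it (by picking a pair $S,S'\in\mathcal{U}_R$ at Hamming distance $3$ and a third $S''$ disagreeing with them, and analyzing the forced structure) is the main technical content of Step~1. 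The paper then colors the size-$(n-1)$ subsets of $[k]$ by which of these two configurations they fall into and applies the Erd\H{o}s--Rado bound on $r_{n-1}(k',k')$; your proposal to color $n$-subsets (or unspecified ``structured tuples'') is not the right hyperedge size, and without the explicit dichotomy you would not know there are only two colors.

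Second, and more seriously, your counting argument measures the wrong quantity. You count pairs of the form (agent position, frozen character) and aim for an upper bound of the form $n\cdot O(1)$; but in the $D\geq 4$ setting freezing is intrinsically \emph{relative} to a set $R$ (or, after a consistency lemma, to a set $Q$ of size $n-2$), and it never collapses to an absolute statement of the form ``character $t$ is always at index $i$.'' The paper's counting object is a pair $(Q,a)$ with $|Q|=n-2$, $a\notin Q$, and $G_Q(a)$ defined; each $Q$ contributes at most $2$ such pairs (since $|Q|=n-2$ leaves only two free indices), giving the upper bound $2\binom{k'}{n-2}$, while each size-$(n-1)$ set $R$ produces $n-3$ of them via its frozen set $A_R$, giving the lower bound $(n-3)\binom{k'}{n-1}$. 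The contradiction $2\binom{k'}{n-2}<(n-3)\binom{k'}{n-1}$ comes from the ratio $\binom{k'}{n-1}/\binom{k'}{n-2}\approx k'/n$, which is large precisely because $k'$ can be taken large. Your (agent, task) count has no dependence on $k'$ in the upper bound and so cannot produce this leverage. Your step would need to be reworked around ``irregular pairs'' relative to $(n-2)$-sets $Q$ (for configuration 1) or $(n-3)$-sets $P$ (for configuration 2), with a consistency lemma (the analogue of Lemma~\ref{lem:config1}/~\ref{lem:config2}) making the count well-defined.
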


\subsection{Example instance}
To build intuition about the problem restatement, we provide a concrete example of a small instance of the problem. Suppose $n=3$ and $k=2$. For notational clarity, instead of denoting $[k]=\{0,1\}$ we denote $[k]=\{a,b\}$. Then $\mathcal{S}_{3,2}$ is the set of all size 3 multisets of $\{a,b\}$; that is, $\mathcal{S}_{3,2}=\{\{a,a,a\},\{a,a,b\},\{a,b,b\},\{b,b,b\}\}$. $\pi_{3,2}$ is a function that maps each element of $\mathcal{S}_{3,2}$ to a permutation of itself. For example, $\pi_{3,2}$ could be defined as follows: \[ \begin{matrix*}[l]%
\pi_{3,2}(\{a,a,a\})=aaa,& \pi_{3,2}(\{a,a,b\})=aba& \pi_{3,2}(\{a,b,b\})=bab,& \pi_{3,2}(\{b,b,b\})=bbb.\end{matrix*}\] We are concerned with all pairs $S,S'\in \mathcal{S}_{3,2}$ such that $|S\oplus S'|=2$ (since the maximum distortion of $\pi_{3,2}$ is defined in terms of only these pairs). In this example, the only such pairs are as follows: \[\begin{matrix*}\{a,a,a\}\oplus\{a,a,b\}=2,& \{a,a,b\}\oplus\{a,b,b\}=2,& \{a,b,b\}\oplus\{b,b,b\}=2.\end{matrix*}\] For each such pair, we consider $d(\pi_{3,2}(S),\pi_{3,2}(S'))$: \[\begin{matrix*}d(aaa,aba)=1,&d(aba,bab)=3,&d(bab,bbb)=1.\end{matrix*}\] This particular choice of $\pi_{3,2}$ has maximum distortion 3 (since the largest value in the above row is 3), however we could have chosen $\pi_{3,2}$ with maximum distortion 1 (for example if $\pi_{3,2}(\{a,b,b\})=bba$ instead of $bab$).

\section{The $n\ll k$ regime}\label{sec:n<k}
In this section we will prove Theorem~\ref{thm:3r} for $n=3$, $k=5$, and Theorem~\ref{thm:4r}. The proofs are written in the language of the problem reformulation from Section~\ref{sec:reform}. For these proofs it will suffice to consider only the elements of $\mathcal{S}_{n,k}$ that are \emph{subsets} of $[k]$, rather than multisets. This corresponds to the set of demand vectors where each task has demand either 0 or 1. \emph{For the rest of this section we consider only subsets of $[k]$, rather than multisets.}

We call each element of $[k]$ a \emph{character} (e.g. in the above example instance, $a$ and $b$ are characters).

%

\subsection{Proof framework}\label{sec:frame}






As described in Section~\ref{sec:tech}, we develop a three-step proof framework for the $n\ll k$ regime. Suppose we are trying to prove that every function $\pi_{n,k}$ has maximum distortion at least $D$ for a particular $n$ and $k$. We begin by supposing for contradiction that there exists $\pi_{n,k}$ with maximum distortion less than $D$. That is, we suppose that every pair $S,S'\in \mathcal{S}_{n,k}$ with $|S\oplus S'|=2$ has $d(\pi_{n,k}(S),\pi_{n,k}(S'))<D$. Under the assumption that such a $\pi_{n,k}$ exists, steps 1 and 2 of the framework show that $\pi_{n,k}$ must obey a particular structure. For the remainder of this section, we drop the subscript of $\pi$ since $n$ and $k$ are fixed.

\begin{notation*} For any set $R\subseteq [k]$, let $\mathcal{U}_R$ be the set of all sets $S\subseteq [k]$ such that $R\subset S$ and $|S|=|R|+1$.
\end{notation*}

\paragraph{Step 1: Structure of size $n-1$ sets.} 
We begin by fixing a size $n-1$ set $R\subseteq [k]$. Now, consider $\mathcal{U}_R$ (defined above). We note that all pairs $S,S'\in \mathcal{U}_R$ are by definition such that $|S\oplus S'|=2$. Because we initially supposed that $\pi$ has maximum distortion less than $D$, we know that for all pairs $S,S'\in\mathcal{U}_R$, we have $d(\pi(S),\pi(S'))<D$. 


Then we prove a structural lemma which roughly says that many characters $r\in R$ have a ``preference" to be in a particular position in the permutations $\pi(S)$ for $S\in \mathcal{U}_R$. We say that $R$ \emph{$i$-freezes} the character $r$ if $\pi(S)[i]=r$ for many $S\in \mathcal{U}_R$. Our structural lemma roughly says that for many characters $r\in R$, there exists an index $i\in [n]$ such that $R$ $i$-freezes $r$. In other words, for many $S\in \mathcal{U}_R$, the $\pi(S)$s agree on the position of many characters in the permutation.  

\paragraph{Step 2: Structure of size $n-2$ sets.} We begin by fixing a size $n-2$ set $Q\subseteq [k]$. Now, consider $\mathcal{U}_Q$. We note that each $R\in\mathcal{U}_Q$ obeys the structural lemma from step 1; that is, for many characters $r\in R$, there exists an index $i\in [n]$ such that $R$ $i$-freezes $r$. 


We prove a structural lemma which roughly says that the sets $P\in \mathcal{U}_Q$ are for the most part \emph{consistent} about which characters they freeze to which index of the permutation. 
More specifically, for many characters $q\in Q$, for all pairs $P,P'\in \mathcal{U}_Q$, if $R$ $i$-freezes $r$ and $R'$ $j$-freezes $r$, then $i=j$. 

\paragraph{Step 3: Counting argument.}
In step 3, we use a counting argument to derive a contradiction. 
For the proof of Theorem~\ref{thm:3r}, a simple argument suffices. The idea is that step 1 shows that many characters are frozen overall while step 2 shows that each character can only be frozen to a single index. Then, the pigeonhole principle implies that more than one character is frozen to a single index, which helps to derive a contradiction.

For the proof of Theorem~\ref{thm:4r}, it no longer suffices to just show that more than one character is frozen to a single index. Instead, we require a more sophisticated counting argument and a careful choice of what quantity to count. We end up counting the number of pairs $(Q,a)$ such that $R\in \mathcal{U}_Q$, where $Q\subset [k]$ is a size $n-2$ set and $a\in [n]\setminus Q$. To reach a contradiction, we count this quantity in two different ways, using steps 1 and 2 respectively.

Having reached a contradiction, we conclude that $\pi$ has maximum distortion at least $D$. 





\subsection{Proof of Theorem~\ref{thm:3r} for $n=3$, $k=5$}



In this section, we prove Theorem~\ref{thm:3r} for $n=3$, $k=5$, which serves as a simple illustrative example of our proof framework from Section~\ref{sec:frame}.

\begin{theorem}[Special case of Theorem~\ref{thm:3r}]\label{thm:warm}
Every function $\pi_{3,5}$ has maximum distortion at least 3. 
\end{theorem}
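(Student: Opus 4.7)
The plan is to implement the three-step framework from Section~3.1 with $n=3$, $k=5$, and target distortion $D=3$. Suppose for contradiction that some $\pi=\pi_{3,5}$ has maximum distortion at most $2$, so every pair of 3-subsets $S,S'\subseteq[5]$ with $|S\oplus S'|=2$ satisfies $d(\pi(S),\pi(S'))\leq 2$.

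\textbf{Step 1 (structure of size-$2$ sets).} Fix a 2-subset $R=\{a,b\}$, and let $\mathcal{U}_R=\{R\cup\{x\},R\cup\{y\},R\cup\{z\}\}$ where $\{x,y,z\}=[5]\setminus R$. Let $p_x,p_y,p_z\in[3]$ denote the positions of the non-$R$ characters $x,y,z$ in the three corresponding permutations. The three pairwise Hamming-$\leq 2$ constraints case-split cleanly on whether the triple $(p_x,p_y,p_z)$ has all entries equal, all distinct, or exactly two equal: in the ``all equal'' case both $a$ and $b$ are forced to the same two positions across all three permutations; in the ``exactly two equal'' case, exactly one of $a,b$ is forced to a single position across all three; and in the ``all distinct'' case, the three constraints each force agreement at the one position not occupied by a distinguishing character, producing a small system over $\{a,b\}$ with no solution. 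In every surviving case at least one character $r\in R$ satisfies $\pi(S)[i]=r$ for all $S\in\mathcal{U}_R$ and some $i\in[3]$; call this a \emph{freeze of $r$ by $R$ at position $i$}, and write $\mathcal{F}_c$ for the set of 2-subsets freezing $c$.

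\textbf{Step 2 (unique freeze position).} If $R,R'\in\mathcal{F}_c$, they share $c$, so $|R\cup R'|\leq 3$; any 3-subset $S$ containing $R\cup R'$ lies in $\mathcal{U}_R\cap\mathcal{U}_{R'}$, and the two freezings force $\pi(S)[i]=c=\pi(S)[i']$, whence $i=i'$. So each frozen character $c$ has a well-defined position $i_c$.

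\textbf{Step 3 (pigeonhole contradiction).} At most one character can be unfrozen: if two characters $u_1\neq u_2$ were both unfrozen, Step~1 would force $\{u_1,u_2\}$ to freeze one of them, a contradiction. Hence at least four characters are frozen, and the map $c\mapsto i_c$ from $\geq 4$ characters to $3$ positions yields by pigeonhole two distinct frozen characters $c_1,c_2$ with $i_{c_1}=i_{c_2}=:i$. Consider $R=\{c_1,c_2\}$. By Step~1, $R$ freezes some $r\in\{c_1,c_2\}$; WLOG $r=c_1$ (if $R$ freezes both, then $\pi(S)[i]$ would simultaneously equal $c_1$ and $c_2$ on any $S\in\mathcal{U}_R$, an immediate contradiction). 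Since $c_2$ is frozen and $R\notin\mathcal{F}_{c_2}$, there exists $R_2=\{c_2,y\}\in\mathcal{F}_{c_2}$ with $y\notin\{c_1,c_2\}$. Then $S=\{c_1,c_2,y\}\in\mathcal{U}_R\cap\mathcal{U}_{R_2}$, and the two freezings give $\pi(S)[i]=c_1$ and $\pi(S)[i]=c_2$, contradiction.

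The main obstacle is Step~1, and in particular ruling out the ``all distinct'' subcase: this requires a short but careful analysis of three simultaneous Hamming-distance constraints in a tiny finite alphabet, which is elementary but forms the combinatorial heart of the argument. Once Step~1 is in hand, Step~2 is essentially automatic and Step~3 is a clean pigeonhole.
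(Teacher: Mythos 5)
Your proof is correct and follows the paper's three-step framework for $n=3$, $k=5$ essentially verbatim: Step~2 is exactly the paper's Lemma~\ref{lem:same} and Step~3 is the paper's pigeonhole argument. The only notable difference is in Step~1, where you establish the freezing lemma (the paper's Lemma~\ref{lem:fix}) via an explicit case split on the positions $p_x,p_y,p_z$ of the three non-$R$ characters, whereas the paper reaches the same conclusion more compactly by fixing $\pi(\{a,b,c\})$ and arguing directly that the three pairwise agreements must all land on the same character of $R$; both routes are short and elementary, so this is a presentational rather than substantive difference.
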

\begin{proof}


Suppose by way of contradiction that there is a function $\pi_{3,5}$ with maximum distortion at most 2. For the remainder of this section we omit the subscript of $\pi$ since $n=3$, $k=5$ are fixed. For clarity of notation, we let $\{a,b,c,d,e\}$ be the characters in $[k]$ for $k=5$. Thus, we are considering the set of all ${5\choose 3}=10$ size 3 subsets of $\{a,b,c,d,e\}$. (Recall that we are only concerned with subsets, not multisets.)



\subsubsection*{Step 1: Structure of size $n-1$ sets} 
We begin by fixing a set $\{x,y\}\subseteq \{a,b,c,d,e\}$ of size $n-1=2$. Recall that $\mathcal{U}_{\{x,y\}}$ is the set of all size 3 sets $S$ such that $\{x,y\}\subseteq S\subseteq \{a,b,c,d,e\}$. For example, $\mathcal{U}_{\{a,b\}}=\{\{a,b,c\},\{a,b,d\},\{a,b,e\}\} $. We note that by definition all pairs $S,S'\in \mathcal{U}_{\{x,y\}}$ have $|S\oplus S'|=2$. Thus, to find a pair with distortion 3 and thereby obtain a contradiction, it suffices to find a pair $S,S'\in \mathcal{U}_{\{x,y\}}$ with Hamming distance $d(\pi(S),\pi(S'))=3$. Since $n=3$, this means we are looking for permutations $\pi(S),\pi(S')$ that disagree about the position of \emph{all} elements.

The following lemma says that $\pi$ places one of $x$ or $y$ at the \emph{same} position for \emph{all} $\pi(S)$ with $S\in \mathcal{U}_{\{x,y\}}$. For ease of notation, we give this phenomenon a name:

\begin{definition}[freeze]\label{defn:freeze} We say that a pair $\{x,y\}\subseteq \{a,b,c,d,e\}$ \emph{$i$-freezes} a character $p\in \{x,y\}$ if for all $S\in\mathcal{U}_{\{x,y\}}$, we have $\pi(S)[i]=p$. We simply say that $\{x,y\}$ \emph{freezes} $p$ if $i$ is unspecified.   Equivalently, we say that a character $p$ is \emph{$i$-frozen} (or just \emph{frozen}) by a pair.
\end{definition}

\begin{lemma}\label{lem:fix} 
For every $\{x,y\}\subseteq \{a,b,c,d,e\}$, there exists $i$ so that $\{x,y\}$ $i$-freezes either $x$ or $y$.

\end{lemma}

For example, one way that the pair $\{a,b\}$ could satisfy Lemma~\ref{lem:fix} is if the permutations $\pi(\{a,b,c\})$, $\pi(\{a,b,d\})$, and $\pi(\{a,b,e\})$ \emph{all} place the character $a$ in the $0^{th}$ position. In this case, we would say that the pair $\{a,b\}$ 0-freezes $a$.

\begin{proof}[Proof of Lemma~\ref{lem:fix}]
Without loss of generality, consider $\{x,y\}=\{a,b\}$. In this case, $\mathcal{U}_{\{x,y\}}=\mathcal{U}_{\{a,b\}}=\{\{a,b,c\},\{a,b,d\},\{a,b,e\}\} $. Thus, we are trying to show that $\{a,b,c\}$, $\{a,b,d\}$, and $\{a,b,e\}$ \emph{all} agree on the position of either $a$ or $b$.

Suppose without loss of generality that $\pi(\{a,b,c\})=abc$. We first note that $\pi(\{a,b,c\})$ and $\pi(\{a,b,d\})$ must agree on the position of either $a$ or $b$ because otherwise we would have $d(\pi(\{a,b,c\}), \pi(\{a,b,d\}))=3$ which would mean that $\pi(\{a,b,c\})$ and $\pi(\{a,b,d\})$ would have distortion 3, and we would have proved Theorem~\ref{thm:warm}. Without loss of generality, suppose $\pi(\{a,b,c\})$ and $\pi(\{a,b,d\})$ agree on the position of $a$; that is, $\pi(\{a,b,d\})$ is either $abd$ or $adb$. 

By the same reasoning, $\pi(\{a,b,c\})$ and $\pi(\{a,b,e\})$ agree on the position of either $a$ or $b$, and $\pi(\{a,b,d\})$ and $\pi(\{a,b,e\})$ agree on the position of either $a$ or $b$. If $\pi(\{a,b,e\})$ agrees with \emph{either} $\pi(\{a,b,c\})$ or $\pi(\{a,b,d\})$ on the position of $a$, then it agrees with \emph{both} (in which case we are done) since $\pi(\{a,b,c\})$ and $\pi(\{a,b,d\})$ agree on the position of $a$, by the previous paragraph. Thus, the only option is that $\pi(\{a,b,e\})$ agrees with \emph{both} $\pi(\{a,b,c\})$ and $\pi(\{a,b,d\})$ on the position of $b$. This completes the proof.
\end{proof}


\subsubsection*{Step 2: Structure of size $n-2$ sets}

Since $n-2=1$, we begin by fixing a single element $x\in\{a,b,c,d,e\}$. In the following lemma we prove that $x$ cannot be frozen to two different indices.


\begin{lemma}\label{lem:same}

If a pair $\{x,y\}\subseteq\{a,b,c,d,e\}$ $i$-freezes $x$ and a pair $\{x,z\}\subseteq\{a,b,c,d,e\}$ $j$-freezes $x$ then $i=j$.

\end{lemma}

\begin{proof}
Since $\{x,y\}$ $i$-freezes $x$, then in particular, $\pi(\{x,y,z\})[i]=x$. Since $\{x,z\}$ $j$-freezes $x$, then in particular, $\pi(\{x,y,z\})[j]=x$. A single character cannot be in multiple positions of the permutation $\pi(\{x,y,z\})$ so $i=j$.
\end{proof}

\subsubsection*{Step 3: Counting argument}


Lemma~\ref{lem:fix} implies that for each character $x\in \{a,b,c,d,e\}$ except for at most one, \emph{some} pair $\{x,y\}$ freezes $x$.  That is, at least 4 characters are frozen by some pair. However $n=3$ so by the pigeonhole principle, two characters $x,y\in \{a,b,c,d,e\}$ are frozen to the same index $i$.

Fix $x$, $y$, and $i$, and suppose $x$ and $y$ are each $i$-frozen. By Lemma~\ref{lem:fix}, the pair $\{x,y\}$ freezes either $x$ or $y$. Without loss of generality, say $\{x,y\}$ freezes $x$. By Lemma~\ref{lem:same}, since $x$ is $i$-frozen by some pair, \emph{all} pairs that freeze $x$ must $i$-freeze $x$. Thus, the pair $\{x,y\}$ $i$-freezes $x$.

Let $\{y,z\}\subseteq \{a,b,c,d,e\}$ be a pair that $i$-freezes $y$. Thus we have $\pi(\{x,y,z\})[i]=y$. However, since $\{x,y\}$ $i$-freezes $x$, we also have $\pi(\{x,y,z\})[i]=x$.  This is a contradiction since $\pi(\{x,y,z\})[i]$ cannot take on two different values.
\end{proof}

\subsection{Proof of Theorem~\ref{thm:4r}}\label{app:n<k}

\begin{theorem}[Restatement of Theorem~\ref{thm:4r}]
There exist $n$ and $k$ so that every function $\pi_{n,k}$ has maximum distortion at least 4.
\end{theorem}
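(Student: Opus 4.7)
The plan is to argue by contradiction, assuming some $\pi_{n,k}$ achieves maximum distortion at most $3$, and then apply the three-step framework of Section~\ref{sec:frame}. Throughout, $k$ will be chosen as a tower of height roughly $n-1$ in $n$; the tower appears because Step~1 feeds into hypergraph Ramsey.

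For Step~1, I fix a size-$(n-1)$ set $R\subseteq[k]$ and analyze $\mathcal{U}_R$, which has size $k-n+1\gg n$. Every pair $S,S'\in\mathcal{U}_R$ satisfies $|S\oplus S'|=2$, so the length-$n$ permutations $\pi(S),\pi(S')$ agree on at least $n-3$ positions, and the agreements must occur on positions holding characters of $R$. Iterating this pairwise constraint across the huge family $\mathcal{U}_R$, I aim to prove that many $r\in R$ are frozen in the framework sense: there is an index $i\in[n]$ with $\pi(S)[i]=r$ for most $S\in\mathcal{U}_R$. Unlike the distortion-$2$ regime, distortion $3$ admits several qualitatively distinct local "types" for $R$ — which subset of indices is frozen, and how the remaining characters of $R$ permute through the unfrozen positions. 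The number of types depends only on $n$, so the type is a coloring of the $(n-1)$-uniform hypergraph on $[k]$ with boundedly many colors. Choosing $k\geq t_{n-1}(cn)$ and invoking the hypergraph Ramsey theorem yields a still-enormous $K\subseteq[k]$ on which every $(n-1)$-subset has the same type, collapsing Step~1 into a single canonical pattern on subsets of $K$.

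For Step~2, I fix a size-$(n-2)$ set $Q\subseteq K$ and examine $\mathcal{U}_Q$ inside $K$. Because any two $R,R'\in\mathcal{U}_Q$ share $n-2$ characters, I can cross-compare permutations of supersets of $R$ and $R'$ via a common superset $R\cup R'$, and use the uniform Step~1 type to prove a consistency lemma in the spirit of Lemma~\ref{lem:same}: for most $q\in Q$, whenever some $R\in\mathcal{U}_Q$ $i$-freezes $q$ and some $R'\in\mathcal{U}_Q$ $j$-freezes $q$, necessarily $i=j$, with only a controlled number of exceptional pairs $(q,R)$. In Step~3 I double-count the set of pairs $(Q,a)$, where $Q\subseteq K$ has size $n-2$ and $a\in K\setminus Q$ and $R:=Q\cup\{a\}$ freezes a designated character. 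Grouping by $R$ and invoking Step~1 gives a lower bound of the order of (number of frozen characters per $R$) $\times\binom{|K|}{n-1}$; grouping by $Q$ and $q\in Q$ and invoking Step~2 caps the contribution of each $q$ by a much smaller quantity, since once $q$ is frozen to a fixed index across all extensions of $Q$ its freezing witnesses are constrained. For $n$ sufficiently large, the two estimates are incompatible, producing the contradiction.

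The main obstacle is Step~1 together with the Ramsey reduction: enumerating the admissible freeze patterns at distortion $3$, and showing that the pairwise agreement constraints across the exponentially larger family $\mathcal{U}_R$ force $R$ into one of them, is substantially more delicate than the single-pattern argument of Lemma~\ref{lem:fix}; this is also precisely where the tower-type dependence of $k$ on $n$ enters, because we must spend Ramsey to reduce to a monochromatic palette of types. Once such a $K$ is in hand, Steps~2 and~3 follow the spirit of the warm-up proof of Theorem~\ref{thm:warm}, but with quantitative bookkeeping to absorb the "most" rather than "all" in the freezing definition.
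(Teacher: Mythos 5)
Your high-level plan matches the paper's actual proof: assume distortion $\le 3$, classify the structure of each $\mathcal{U}_R$ for size-$(n-1)$ sets $R$, color $(n-1)$-subsets of $[k]$ by that structure, apply hypergraph Ramsey to get a monochromatic set $K'$, derive a consistency lemma at the $(n-2)$-level, and finish by double-counting pairs — with the tower-type $k\ge t_{n-1}(cn)$ coming from the Erd\H{o}s--Rado bound. That is the framework the paper uses.

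However, there is a genuine gap in how you describe Steps~1--3, and it is precisely the hard part of the proof. You write Step~1 as ``many $r\in R$ are frozen,'' and Steps~2--3 as a single unified argument on pairs $(Q,a)$ with $|Q|=n-2$. The paper's actual Step~1 (Lemma~\ref{lem:fix2}) shows that each $R$ falls into exactly one of \emph{two} configurations: either (1) some size-$(n-3)$ subset $A_R\subset R$ is frozen in the strong sense (every $S\in\mathcal{U}_R$ agrees, not ``most''), or (2) $R$ is \emph{semi-frozen}, meaning \emph{no} large subset is frozen to a single index but every $r\in R$ is restricted to two indices, one of which is a common wildcard $w_R$. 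These two cases have qualitatively different structure, and only configuration~1 is handled by the $(Q,a)$ double-count you describe (count of irregular pairs at the $(n-2)$/$(n-1)$ levels). Configuration~2 requires a full extra level of descent: one must first prove a consistency statement about the semi-freezing functions $h_R$ across all $R\in\mathcal{U}_Q$ (paper's Lemmas~\ref{lem:wild}--\ref{lem:config2}, which are the bulk of the technical work), and then run the double-count one level lower, on pairs $(P,t)$ with $|P|=n-3$, using size-$(n-2)$ sets $Q$ to produce irregular pairs. Your sketch neither anticipates the semi-frozen case nor the need for this extra descent, so as written your Step~3 cannot close. Relatedly, softening ``frozen for all $S\in\mathcal{U}_R$'' to ``for most'' and allowing ``a controlled number of exceptional pairs'' in Step~2 would force a genuinely different quantitative bookkeeping that the paper avoids — its freezing and semi-freezing definitions are exact, which is what makes the clean $2\binom{k'}{n-2}$ vs.\ $(n-3)\binom{k'}{n-1}$ (and the analogous $3\binom{k'}{n-3}$ vs.\ $(n-4)\binom{k'}{n-2}$) comparison possible.
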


More specifically, we will show that there exists a constant $c$ so that Theorem~\ref{thm:4} holds for $n\geq 5$ and $k\geq t_{n-1}(cn)$ where the tower function $t_j(x)$ is defined by $t_1(x) = x$ and $t_{i+1}(x) = 2^{t_i(x)}$.

\begin{proof}

Suppose by way of contradiction that there is a function $\pi_{n,k}$ with maximum distortion at most 3, for $n$ and $k$ to be set later. 

For the remainder of this section we omit the subscript of $\pi$ since $n$ and $k$ are fixed. As a convention, we will generally use the variables $P$, $Q$, $R$, and $S$ to refer to subsets of $[k]$ of size $n-3$, $n-2$, $n-1$, and $n$, respectively.

\subsubsection{Step 1: Structure of size $n-1$ sets.}

Let $R\subset [k]$ be a size $n-1$ set. Recall from Section~\ref{sec:frame} that $\mathcal{U}_R$ is the set of all size $n$ sets $S$ such that $R\subset S \subset [k]$. We note that all pairs $S,S'\in \mathcal{U}_R$ are by definition such that $|S\oplus S'|=2$. Because we initially supposed that $\pi$ has maximum distortion at most 3, we know that all pairs $S,S'\in\mathcal{U}_R$ have Hamming distance $d(\pi(S),\pi(S'))\leq 3$.

We begin by generalizing the notion of \emph{freezing} a character from Definition~\ref{defn:freeze}. Instead of freezing a single character, our new definition will concern freezing a \emph{set} of characters. Freezing a set of characters essentially means that every character in the set is frozen to a different index.

\begin{definition}[freeze]
Let $R\subseteq [k]$ be a size $n-1$ set and let $A_R\subseteq R$. We say that $R$ \emph{freezes} $A_R$ with \emph{freezing function} $g_R$ if $g_R$ is a one-to-one mapping from $A_R$ to $[n]$ such that for all $a\in A_R$ and all $S\in\mathcal{U}_R$, we have $\pi(S)[g_R(a)]=a$.
\end{definition}

Unlike in the proof of Theorem~\ref{thm:warm}, it is not true that for any size $n-1$ set $R\subset [k]$, some subset of $R$ must be frozen. Instead, to capture the full structure of permutations with Hamming distance 3, we will need another notion of freezing, which we call \emph{semi-freezing}. In this definition, each character is restricted to \emph{two} indices instead of just one. 

\begin{definition}[semi-freeze]
Let $R\subset [k] $ be a size $n-1$ set. We say that $R$ is \emph{semi-frozen} with \emph{semi-freezing function} $h_R$ and \emph{wildcard index} $w_R$ if $h_R$ is a one-to-one mapping from $R$ to $[n]$ such that for all $r\in R$, we have that for all $S\in\mathcal{U}_R$ either $\pi(S)[h_R(r)]=r$ or $\pi(S)[w_R]=r$.
\end{definition}

We note that since $g_R$ is a one-to-one mapping and $R$ is of size $n-1$, the \emph{only} index in $[n]$ not mapped to by $g_R$ is the wildcard index $w_R$. We call $w_R$ the wildcard index because $\pi(S)$ could place \emph{any} character from $R$ at index $w_R$. In contrast, for every other index $i$, $\pi(S)$ can only place a single character from $R$ at index $i$, namely the character mapped to $i$ by the function $g_R$.

Our structural lemma for step 1 says that either $R$ freezes a large subset $A\subset R$, or $R$ is semi-frozen. 

\begin{lemma}\label{lem:fix2}
Every set $R\subset [k]$ of size $n-1$, obeys one of the following two configurations: \begin{enumerate}
\item there exists a set $A_R\subset R$ of size $n-3$ such that $R$ freezes $A_R$, or 
\item $R$ is semi-frozen.
\end{enumerate}
\end{lemma}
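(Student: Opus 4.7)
Fix an arbitrary reference $S_0 \in \mathcal{U}_R$ and write $\pi_0 := \pi(S_0)$; let $i_r$ denote the position of each $r \in R$ in $\pi_0$ and let $p_0$ denote the position of the added element $x_0 := S_0 \setminus R$. Because every other $S \in \mathcal{U}_R$ differs from $S_0$ only in its single added character, $\pi(S)$ is a permutation of $R \cup \{x\}$ for $x := S \setminus R$, and the constraint $d(\pi_0, \pi(S)) \le 3$ forces $\pi(S)$ into one of only a few forms relative to $\pi_0$.

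I would first carry out a pairwise case analysis showing that $\pi(S)$ must fall into exactly one of four structures: (A) $\pi(S)$ agrees with $\pi_0$ except that $x_0$ is replaced by $x$ at position $p_0$ (distance $1$); (A$'$) the same as A together with an additional transposition of two elements of $R$ (distance $3$); (B$_1$) a single element $r_S \in R$ is swapped with $x_0$ between positions $i_{r_S}$ and $p_0$ (distance $2$); or (B$_2$) a $3$-cycle at positions $(p_0, i_{r^1_S}, i_{r^2_S})$ relabels $(x_0, r^1_S, r^2_S)$ as $(r^1_S, r^2_S, x)$ (distance $3$). I would then apply $d(\pi(S), \pi(S')) \le 3$ to pairs within $\mathcal{U}_R \setminus \{S_0\}$ and compare their case descriptions to extract consistency constraints, notably: all A$'$-transpositions must reuse the same unordered pair $\{r_1, r_2\}$; all B$_2$-cycles either share a common first element $a \in R$ or collapse to a single reversed pair $\{(a,b),(b,a)\}$ on a common $\{a, b\}$; and B$_1$-swaps must cohere with A$'$ and B$_2$ so that only a coherent small set of $R$-elements is ever moved.

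I would conclude by a dichotomy on the moved set $M := \{r \in R : \pi(S)[i_r] \neq r \text{ for some } S \in \mathcal{U}_R\}$. If $|M| \le 2$, any size-$(n-3)$ subset $A_R \subseteq R \setminus M$ with $g_R(r) := i_r$ gives configuration~1. If $|M| \ge 3$, the consistency constraints derived above force either (i) no $S$ is of type A$'$ or B$_2$, in which case $w_R := p_0$ and $h_R(r) := i_r$ for all $r \in R$ works; or (ii) all B$_2$-cycles share a common first element $a$, in which case $w_R := i_a$, $h_R(a) := p_0$, and $h_R(r) := i_r$ for $r \neq a$ works. In each sub-case I would verify configuration~2 by directly checking that every $r \in R$ appears only at $h_R(r)$ or $w_R$ across all of $\mathcal{U}_R$.

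The main obstacle is the pairwise consistency analysis: while each individual two-element check is a short computation on permutations, there are many combinations of case-labels to handle, and the key conceptual subtlety is that the ``natural'' wildcard $w_R = p_0$ (where the added element sits in the reference $\pi_0$) fails precisely when multiple B$_2$-cycles share a common first element $a$; in that regime the wildcard must instead be relocated to $i_a$ and $a$ itself re-homed at $p_0$. Identifying this dichotomy correctly and verifying the semi-frozen condition globally for \emph{all} $S \in \mathcal{U}_R$, rather than just those interacting simply with $S_0$, is the delicate step.
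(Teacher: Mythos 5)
Your proposal is correct, and it takes a genuinely different route from the paper's proof. The paper's argument is existential: it extracts a single pair $S,S'\in\mathcal{U}_R$ at Hamming distance exactly $3$ (falling back to Lemma~\ref{lem:exists3} if no such pair exists), then a third $S''$ disagreeing with them on some character of $R$, performs a focused analysis on the four ``active'' characters, and reads off the semi-freezing function from the resulting pattern. Your proof is instead classification-based: you fix a reference $S_0$, exhaustively classify every $\pi(S)$ by how it differs from $\pi(S_0)$ (identity at $R$, transposition within $R$, swap of $x_0$ with one element of $R$, or a $3$-cycle touching $p_0$), derive pairwise compatibility constraints among these types, and then close with a dichotomy on the total moved set $M$. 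Both arguments converge on the same structural realization---that configuration~2 corresponds to a family of $3$-cycles sharing a common ``first'' element $a$, with the wildcard located at $a$'s reference position $i_a$ rather than at $p_0$---and your closing remark about the wildcard \emph{not} being the added-character position is exactly the point where both proofs pivot. What your approach buys is thoroughness: by classifying \emph{every} $S\in\mathcal{U}_R$, you verify the semi-freeze condition globally, including for the $S$ that agree with the reference on all of $R\setminus\{a_1,a_2\}$; the paper's proof only explicitly treats the $S''$ that disagree with $S,S'$ on some $a_i$ and then asserts semi-freezing without separately checking the remaining permutations. The cost is a longer case analysis (many pairs of type-labels to compare); a clean write-up should state the compatibility constraints precisely rather than as ``must cohere,'' and should make explicit that two distinct B$_2$-cycles sharing first element $a$, together with the $\le 3$ bound, force every B$_1$-swap to move only $a$ and exclude all A$'$ (this is the concrete content behind your correct assertion that $w_R=i_a$ works in case (ii)).
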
 

Figure~\ref{fig:fix2} shows the structure of permutations that obey each of the two configurations in Lemma~\ref{lem:fix2}. In configuration 1, each character in a \emph{large subset} of $R$ is always mapped to a \emph{single} index. In configuration 2, each character in $R$ is always mapped to one of \emph{two} possible choices. In other words, both configurations enforce a rigid structure but each of them are flexible in a different way. Configuration 1 is flexible in that it does not impose structure on characters not in $A_R$, and rigid in that the characters in $A_R$ are \emph{always} mapped to the same position. On the other hand, configuration 2 is flexible in that it allows each character to map to a choice of two positions, but rigid in that the structure is imposed on \emph{every} character in $R$.

\begin{figure}[h]
\begin{subfigure}[b]{.57\linewidth}
\includegraphics[width=60mm]{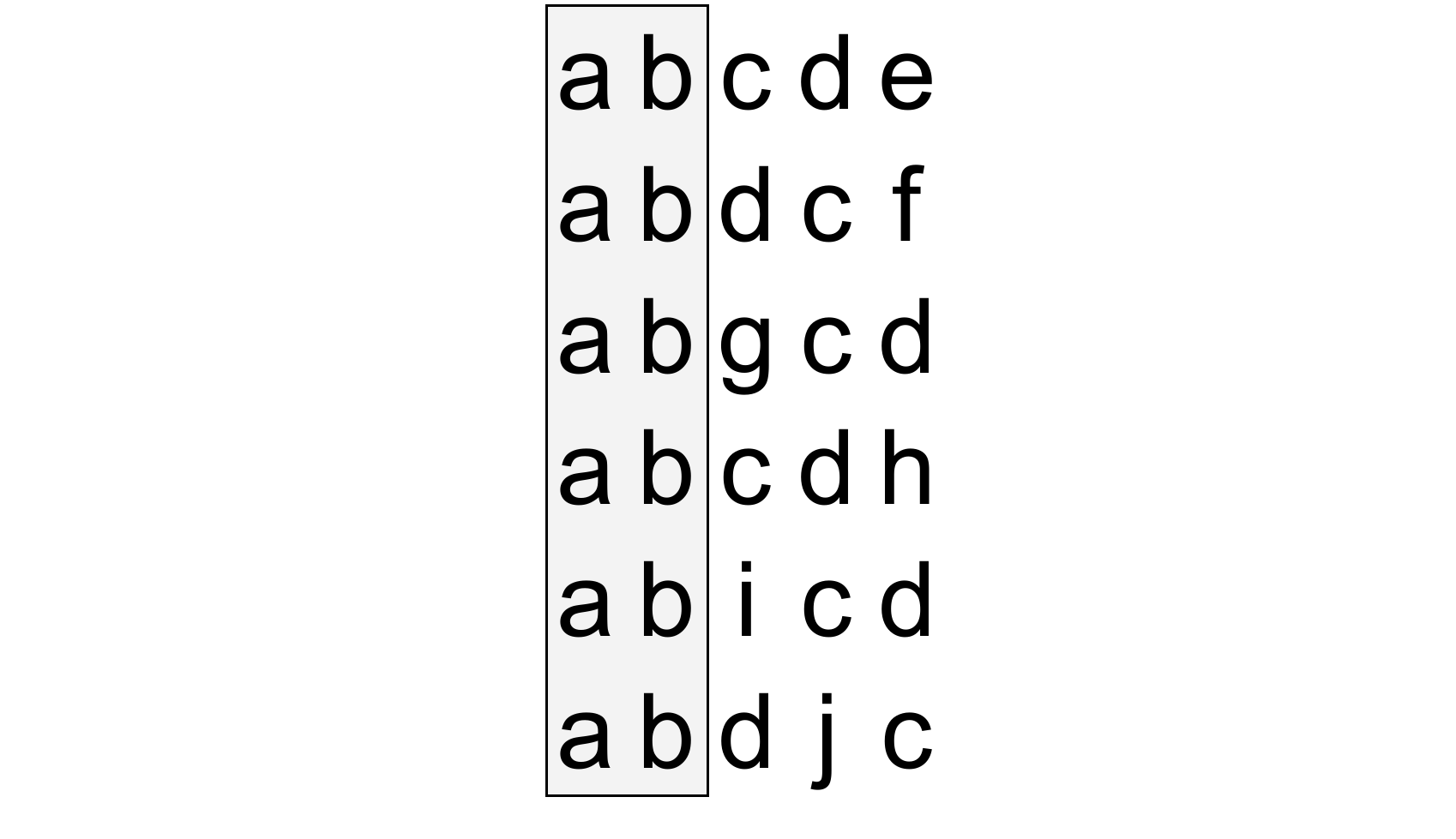}

\end{subfigure}
\begin{subfigure}[b]{.15\linewidth}
\includegraphics[width=60mm]{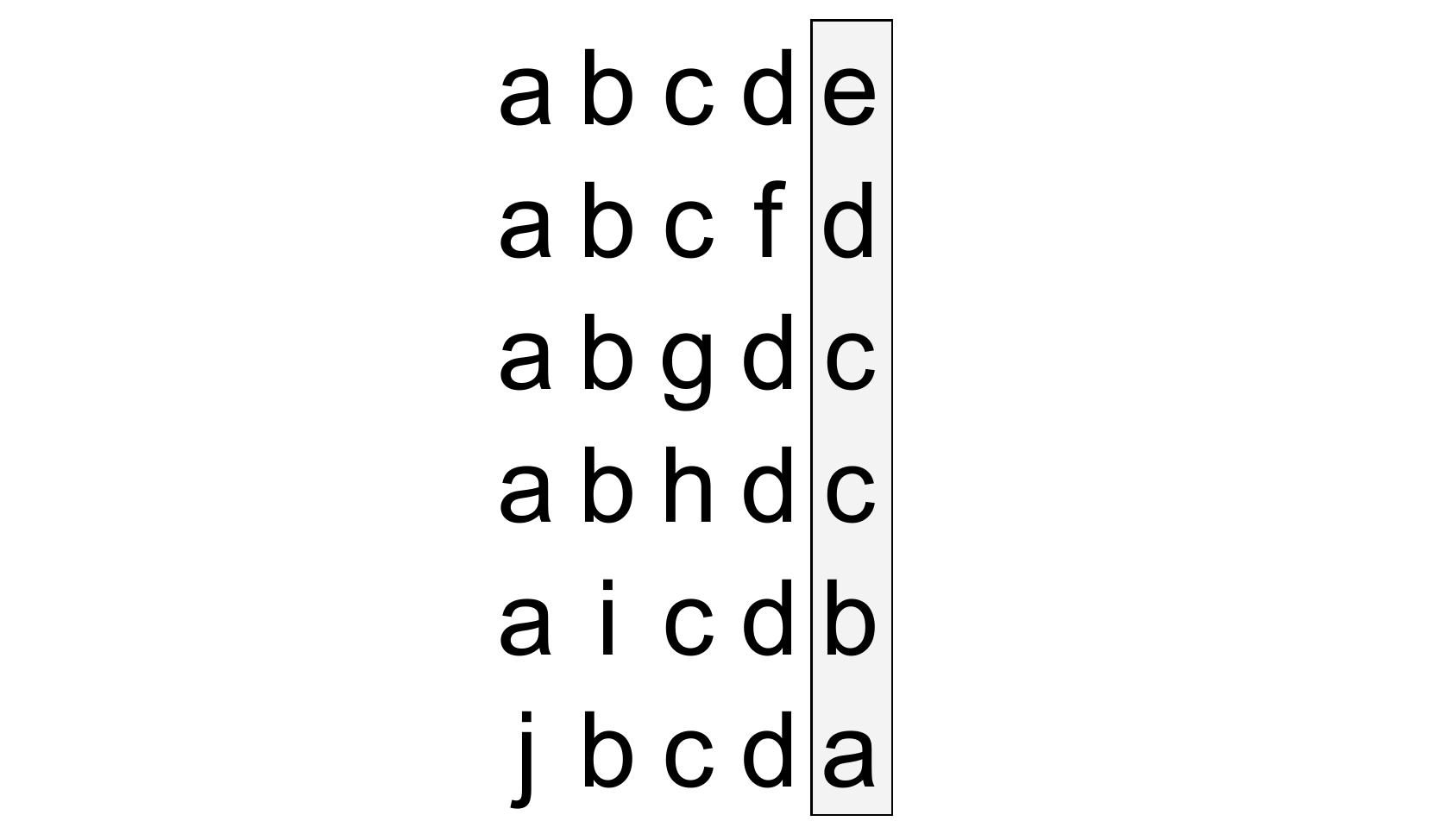}

\end{subfigure}
\caption{\small Examples of the configurations from Lemma~\ref{lem:fix2}. Each of the two subfigures shows the the set of permutations $\pi(S)$ for each $S\in\mathcal{U}_R$ where $R=\{a,b,c,d\}$, $n=5$, and $k=10$. The left subfigure shows configuration 1 of Lemma~\ref{lem:fix2}: the frozen set is $A=\{a,b\}$ since $a$ and $b$ each only appear at a fixed index, as marked by the gray box. The right subfigure shows configuration 2 of Lemma~\ref{lem:fix2}: $R$ is semi-frozen with wildcard index indicated by the gray box since each element of $R$ only appears at the wildcard index and one other index.}
\label{fig:fix2}
\end{figure}

To prove Lemma~\ref{lem:fix2}, we would like to initially fix a pair $S,S'\in\mathcal{U}_R$ with $d(\pi(S),\pi(S'))= 3$. The following lemma proves that we can assume that such a pair exists, because if not, then configuration 1 of Lemma~\ref{lem:fix2} already holds. The proof of the following lemma is nearly identical to the proof of Lemma~\ref{lem:fix}.

\begin{lemma}\label{lem:exists3}
If all pairs $S,S'\in\mathcal{U}_R$ have $d(\pi(S),\pi(S'))\leq 2$, then for every size $n-1$ set $R\subset [k]$, there exists a subset $A_R$ of size $n-2$ such that $R$ freezes $A_R$.
\end{lemma}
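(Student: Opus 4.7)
The plan is to follow the proof of Lemma~\ref{lem:fix} with more careful bookkeeping of disagreement positions. I will fix an arbitrary size $n-1$ set $R\subset[k]$, pick a base set $S_0\in\mathcal{U}_R$, and let $s_0$ denote the unique element of $S_0\setminus R$ and $p_0$ denote its position in $\pi(S_0)$. The first step will be to establish a structural dichotomy for every other $S=R\cup\{s\}\in\mathcal{U}_R$: since $s_0\notin S$ and $s\notin S_0$, both the position of $s_0$ in $\pi(S_0)$ and the position of $s$ in $\pi(S)$ are disagreement positions, and the bound $d(\pi(S_0),\pi(S))\leq 2$ forces either (a) these two positions coincide, in which case no character of $R$ can move between $\pi(S_0)$ and $\pi(S)$, or (b) these two positions differ, in which case a short pigeonhole on the remaining disagreement budget forces exactly one $r_S\in R$ to swap positions with $s$/$s_0$, while every other character of $R$ remains at the same position as in $\pi(S_0)$.

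The main step is to show that the moving character $r_S$ is the same for every $S\in\mathcal{U}_R$ in case (b). I will suppose for contradiction that there are $S_1,S_2\in\mathcal{U}_R$ in case (b) with distinct moving characters $r_1\neq r_2$, and denote by $q_i$ the position of $r_i$ in $\pi(S_0)$. A direct computation shows that $\pi(S_1)$ and $\pi(S_2)$ disagree at three positions, namely at $p_0$ (values $r_1$ vs.\ $r_2$), at $q_1$ (values $s_1$ vs.\ $r_1$), and at $q_2$ (values $r_2$ vs.\ $s_2$), giving $d(\pi(S_1),\pi(S_2))\geq 3$ and contradicting the hypothesis of the lemma.

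Having shown that at most one character of $R$ can ever be displaced as $S$ ranges over $\mathcal{U}_R$, at least $n-2$ characters of $R$ occupy the same position in every $\pi(S)$ with $S\in\mathcal{U}_R$. I will take $A_R$ to be any such subset of size exactly $n-2$ and define $g_R(a)$ to be the common position of $a$; injectivity of $g_R$ is immediate because $\pi(S_0)$ is itself a bijection, so distinct frozen characters already occupy distinct positions in $\pi(S_0)$. I expect the main obstacle to be confined to the first step, namely carefully justifying the dichotomy using the constraint that each $\pi(S)$ is a bijection on its own $n$-element domain and ruling out configurations where two characters of $R$ shift within the two-disagreement budget. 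This bookkeeping closely mirrors the case analysis already carried out in the proof of Lemma~\ref{lem:fix}.
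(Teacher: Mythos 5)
Your proposal is correct and follows essentially the same line of reasoning as the paper's proof. Both arguments hinge on the observation that between any two permutations in $\mathcal{U}_R$ at most one character of $R$ can change position, and that this single potential ``mover'' must be the same element of $R$ across the whole family; the paper anchors on a fixed pair $S,S'$ with $d(\pi(S),\pi(S'))=2$ and derives the contradiction from the position of the disagreeing character $a_{n-1}$, while you anchor on a fixed base $S_0$ and count three disagreement positions directly — the same combinatorial fact seen from a slightly different vantage point.
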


\begin{proof}
Any pair $S,S'\in\mathcal{U}_R$ must agree on the position of at least $n-2$ characters in $R$ because otherwise we would have $d(\pi(S), \pi(S'))>2$. Also, there must exist a pair $S,S'\in\mathcal{U}_R$ with $d(\pi(S), \pi(S'))=2$ because if all pairs had $d(\pi(S), \pi(S'))=1$ then all $S\in \mathcal{U}_R$ would agree on the position of all $n-1$ characters in $R$ and we would be done. Thus, let $S,S'\in\mathcal{U}_R$ be such that $d(\pi(S), \pi(S'))=2$.

Let $R=\{a_1,a_2,\dots,a_{n-1}\}$ and without loss of generality suppose $\pi(S)$ and $\pi(S')$ agree on the position of the characters $a_1,a_2,\dots,a_{n-2}$ and disagree on the position of $a_{n-1}$. We wish to show that for all $S''\in\mathcal{U}_R$, $\pi(S'')$ also agrees with $\pi(S)$ and $\pi(S')$ on the position of the characters $a_1,a_2,\dots,a_{n-2}$. Suppose for contradiction that there exists $S''\in\mathcal{U}_R$ such that $\pi(S'')$ disagrees with $\pi(S)$ and $\pi(S')$ on the position of some character in $\{a_1,a_2,\dots,a_{n-2}\}$. Then since $\pi(S'')$ must agree with both $\pi(S)$ and $\pi(S')$ on the position of at least $n-2$ characters in $R$, $\pi(S'')$ must agree with \emph{both} $\pi(S)$ and $\pi(S')$ on the position of $a_{n-1}$. But,  this is a contradiction because $\pi(S)$ and $\pi(S')$ disagree on the position of $a_{n-1}$.
\end{proof}

\begin{proof}[Proof of Lemma~\ref{lem:fix2}]
Let $S,S'\in\mathcal{U}_R$ be such that $d(\pi(S),\pi(S'))=3$. Such $S,S'$ exist by Lemma~\ref{lem:exists3}. Fix $S,S'\in\mathcal{U}_R$. Let $R=\{a_1,a_2,\dots,a_{n-1}\}$ and without loss of generality suppose $\pi(S)$ and $\pi(S')$ agree on the position of the $n-3$ characters $a_3,a_4,\dots,a_{n-1}$. If every $S''\in\mathcal{U}_R$ is such that $\pi(S'')$ also agrees with $\pi(S)$ and $\pi(S')$ on the positions of the characters $a_3,a_4,\dots,a_{n-1}$, then we are done because in this case $R$ freezes the set $\{a_3,a_4,\dots,a_{n-1}\}$. So suppose otherwise; that is, let $S''\in\mathcal{U}_R$ be such that $\pi(S'')$ disagrees with $\pi(S)$ and $\pi(S')$ on the position of $a_3$ (without loss of generality). 

Since $S,S',S''\in \mathcal{U}_R$, each of $S$, $S'$, and $S''$ have one additional character besides those in $R$. Let $s$, $s'$, and $s''$ be these characters respectively.
In the following we will analyze $\pi(S)$, $\pi(S')$ and $\pi(S'')$. Since $a_4,a_5,\dots a_{n-1}$ are all in the same position with respect to all three permutations, we will ignore these characters. That is, letting $Z=\{a_4,a_5,\dots a_{n-1}\}$, we consider $S\setminus Z=\{a_1,a_2,a_3,s\}$, $S'\setminus Z=\{a_1,a_2,a_3,s'\}$, and $S''\setminus Z=\{a_1,a_2,a_3,s''\}$. We will abuse notation and let $\pi(S\setminus Z)$ be the subpermutation of $\pi(S)$ containing only the elements of $S\setminus Z$, and similarly for $S'\setminus Z$ and $S''\setminus Z$. 

Suppose without loss of generality that $\pi(S\setminus Z)=a_1a_2sa_3$. Then since $\pi(S)$ and $\pi(S')$ agree on the position of $a_3$ but disagree on the positions of $a_1$ and $a_2$, we have that either \begin{enumerate}
\item $\pi(S')$ places $s'$ in the position that $\pi(S)$ places $s$, so $\pi(S'\setminus Z)=a_2a_1s'a_3$, or 
\item $\pi(S')$ places $s'$ in the position that $\pi(S)$ places $a_1$ or $a_2$, so without loss of generality $\pi(S'\setminus Z)=s'a_1a_2a_3$.
\end{enumerate}

Recall that $\pi(S'')$ disagrees with $\pi(S)$ on the position of $a_3$. Since $d(\pi(S''),\pi(S))\leq 3$ and the positions of $a_3$ and $s''$ each account for one unit of difference between $\pi(S'')$ and $\pi(S)$, we know that $\pi(S'')$ agrees with $\pi(S)$ on the position of at least one of $a_1$ or $a_2$. Similarly, since $\pi(S'')$ disagrees with $\pi(S')$ on the position of $a_3$, we have that $\pi(S'')$ agrees with $\pi(S')$ on the position of at least one of $a_1$ or $a_2$. If $\pi(S'\setminus Z)=a_2a_1s'a_3$ (case 1 above), then $\pi(S'')$ cannot possibly agree with both $\pi(S)$ and $\pi(S')$ on the position of at least one of $a_1$ or $a_2$ because the positions of $a_1$ and $a_2$ are swapped in $\pi(S)$ as compared to $\pi(S')$. Thus, it must be the case that $\pi(S'\setminus Z)=s'a_1a_2a_3$ (case 2 above). 

Now, given that $\pi(S\setminus Z)=a_1a_2sa_3$ and $\pi(S'\setminus Z)=s'a_1a_2a_3$, there is only one possibility for $\pi(S''\setminus Z)$ that satisfies the criteria that $\pi(S'')$ disagrees with $\pi(S)$ and $\pi(S')$ on the position of $a_3$ and agrees with each of $\pi(S)$ and $\pi(S')$ on the position of at least one of $a_1$ or $a_2$. The only possibility is that $\pi(S''\setminus Z)=a_1a_3a_2s''$.

The above argument applies for any $S''\in\mathcal{U}_R$ such that $\pi(S'')$ disagrees with $\pi(S)$ and $\pi(S')$ on the position of some $a_i$ with $3\leq i\leq n-1$. That is, letting $Z'=\{a_3,a_4,\dots a_{n-1}\}\setminus\{a_i\}$ and letting $s''=S''\setminus R$, we have that without loss of generality, $\pi(S\setminus Z')=a_1a_2sa_i$, $\pi(S'\setminus Z')=s'a_1a_2a_i$, and $\pi(S''\setminus Z)=a_1a_ia_2s''$. 

We claim that the structure we have derived implies that $R$ is semi-frozen. To see this, consider the following semi-freezing function $h_R$: \begin{align*}&h_R(a_1)=\pi(S)[a_1],\\ &h_R(a_2)=\pi(S')[a_2],\\ \text{for each element $a_i$ for $3\leq i\leq n-1$, } &h_R(a_i)=\pi(S)[a_i]=\pi(S')[a_i],\\ \text{and the wildcard index } &w_R=\pi(S)[a_2]=\pi(S')[a_1].
\end{align*}
\end{proof}

\subsubsection{Treating configurations 1 and 2 independently}\label{sec:ramsey}
Before moving to step 2 of the proof framework, we will show using Ramsey theory that it suffices to consider each of the two configurations from Lemma~\ref{lem:fix2} \emph{independently}. Lemma~\ref{lem:fix2} shows that every size $n-1$ subset of $[k]$ obeys one of two configurations. Using Ramsey theory, we will show that there must exist a subset $K'\subseteq [k]$ such that either \emph{all} size $n-1$ subsets of $K'$ obey configuration 1 or \emph{all} size $n-1$ subsets of $K'$ obey configuration 2. This will allow us to avoid reasoning about the complicated interactions between the two configurations.

The required size $k'=|K'|$ can be expressed as a \emph{hypergraph Ramsey number}.  The hypergraph Ramsey number $r_j(t,t)$ is the minimum value $m$ such that every red-blue coloring of the $j$-tuples of an $m$-element set contains either a red set or a blue set of size $t$, where a set is called red
(blue) if all $j$-tuples from this set are red (blue). Thus, it suffices to let $k'$ satisfy $r_{n-1}(k',k')=k$.

Erd\H{o}s and Rado~\cite{erdos1952combinatorial} give the following bound on $r_j(t,t)$, as stated in~\cite{conlon2010hypergraph}.
There exists a constant $c$ such that: 
\[r_{n-1}(k',k')\leq t_{n-1}(ck')\] where the tower function $t_j(x)$ is defined by $t_1(x) = x$ and $t_{i+1}(x) = 2^{t_i(x)}$.

In the following we will show that it suffices to let $n\geq 5$ and $k'\geq n+7$, so it suffices to set $n\geq 5$ and $k\geq t_{n-1}(cn)$.

\subsubsection{Step 2a: Structure of size $n-2$ sets for configuration 1}
From the previous section, there exists a size $k'$ set $K'$ of tasks such that either  all size $n-1$ subsets of $K'$ obey configuration 1 or all size $n-1$ subsets of $K'$ obey configuration 2. In this section we will assume that all size $n-1$ subsets of $K'$ obey configuration 1, and later we will independently consider configuration 2. 

Recall that configuration 1 says that for every set $R\subset[k']$ of size $n-1$, there exists a set $A_R\subset R$ of size $n-3$ such that $R$ freezes $A_R$. Recall that $g_R$ is the freezing function.

Let $Q\subset [k']$ be a size $n-2$ set. Recall that $\mathcal{U}_Q$ is the set of all size $n-1$ sets $R$ such that $Q\subset R\subset [k']$. We will prove the following simple structural lemma, analogous to Lemma~\ref{lem:same}, which says that the freezing functions for any two sets in $\mathcal{U}_Q$ are \emph{consistent}. 

\begin{lemma}\label{lem:config1}
For every size $n-2$ set $Q\subset [k']$, for any pair $R,R'\in \mathcal{U}_Q$, for any $a\in A_R\cap A_{R'}$, $g_R(a)=g_{R'}(a)$.
\end{lemma}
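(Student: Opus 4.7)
The plan is to mirror the short argument used for Lemma~\ref{lem:same} in the warm-up proof. The essence is that a single permutation cannot place one character at two different indices, so whenever two freezing functions speak about the same character in the same permutation, they must agree.

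First, given $Q$ of size $n-2$ and $R, R' \in \mathcal{U}_Q$, I would define the specific size $n$ set $S := R \cup R' = Q \cup \{r, r'\}$, where $\{r\} = R \setminus Q$ and $\{r'\} = R' \setminus Q$. Since $|S| = n$ and both $R \subset S$ and $R' \subset S$, we have $S \in \mathcal{U}_R \cap \mathcal{U}_{R'}$. This common superset is the pivot of the whole argument.

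Next, I would unpack the definition of freezing for $R$ and for $R'$ at the point $S$. For any $a \in A_R$, the freezing condition gives $\pi(S)[g_R(a)] = a$; for any $a \in A_{R'}$, it gives $\pi(S)[g_{R'}(a)] = a$. Now fix $a \in A_R \cap A_{R'}$. Because we are in the subset regime rather than the multiset regime (as emphasized at the start of Section~\ref{sec:n<k}), the permutation $\pi(S)$ is a permutation of a genuine set, so each character of $S$ occupies a unique index. Hence the equation $\pi(S)[i] = a$ determines $i$ uniquely, forcing $g_R(a) = g_{R'}(a)$, which is exactly the claim.

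I do not expect any real obstacle: the structural content is entirely contained in the fact that Configuration~1 gives a \emph{single} index per frozen character, and the two freezing functions are simultaneously witnessed on the common supermultiset $S = R \cup R'$. The only thing to double-check is that we are indeed operating on sets (so permutations have distinct entries) and that $S$ lies in both $\mathcal{U}_R$ and $\mathcal{U}_{R'}$, both of which are immediate from the setup. This lemma then plays the role of Lemma~\ref{lem:same} in preparing the counting argument of Step~3 for Configuration~1.
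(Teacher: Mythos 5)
Your proposal is correct and is essentially identical to the paper's proof: both form the common superset $S = R \cup R' \in \mathcal{U}_R \cap \mathcal{U}_{R'}$, read off $\pi(S)[g_R(a)] = a = \pi(S)[g_{R'}(a)]$, and conclude from uniqueness of $a$'s position in the permutation. Your extra remark about being in the subset (not multiset) regime is a fair point to flag, though it is already baked into the section's standing assumption.
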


\begin{proof}
Fix $R,R'\in \mathcal{U}_Q$. Since $R$ and $R'$ are each composed by adding a single character to $Q$, we have that $|R\cup R'|=n$ and $R\cup R'\in \mathcal{U}_R\cap \mathcal{U}_{R'}$. 

Since $R\cup R'\in \mathcal{U}_R$, we know that $a$ is at position $g_R(a)$ in $\pi(R\cup R')$ and since $R\cup R'\in \mathcal{U}_{R'}$, we know that $a$ is at position $g_{R'}(a)$ in $\pi(R\cup R')$. Then, since $a$ can only occupy a single position in the permutation $\pi(R\cup R')$, we have that $g_R(a)=g_{R'}(a)$.
\end{proof}

\subsubsection{Step 3a: Counting argument for configuration 1}
Like the previous section, in this section we will assume that all size $n-1$ subsets of $K'$ obey configuration 1. 
Unlike step 3 of Theorem~\ref{thm:warm}, it does not suffice to simply show that two characters are frozen to the same index. Instead, we apply a more sophisticated counting argument.

By Lemma~\ref{lem:config1}, for every size $n-2$ set $Q\subset [k']$, we have that all $R\in\mathcal{U}_Q$ agree on the value of $g_R(a)$ if it exists. Thus, we can define $G_Q$ as the union of $g_R$s over all $R\in\mathcal{U}_Q$. Formally, for any $a\in [k']$, $G_Q(a)=i$ if for every $R\in \mathcal{U}_Q$ with $a\in A_R$, we have $g_R(a)=i$. We note that $G_Q(a)$ \emph{exists} if for \emph{some} $R\in \mathcal{U}_Q$, $a\in A_R$.

Since $|Q|=n-2$ and there are $n$ indices total, $G_Q(a)$ can exist for at most $n-2$ characters $a\in Q$ and at most 2 characters $a\not\in Q$. We say that the pair $(Q,a)$ is \emph{irregular} if $G_Q(a)$ exists and $a\not\in Q$. The quantity that we will count is the total number of irregular pairs $(Q,a)$ over all size $n-2$ sets $Q\subset [k']$ and all $a\in [k']$.

On one hand, as previously mentioned, each set $Q$ can only be in at most 2 irregular pairs. Then since there are ${k'\choose n-2}$ sets $Q\subset [k']$ of size $n-2$, the total number of irregular pairs is at most $2{k'\choose n-2}$.

On the other hand, the definition of configuration 1 implies a lower bound on the number of irregular pairs. Recall that configuration 1 says that for every size $n-1$ set $R\subset [k']$, there exists a set $A_R\subset R$ of size $n-3$ such that $R$ freezes $A_R$. Fix sets $R$ and $A_R$. We claim that for each $a\in A_R$, the pair $(R\setminus\{a\},a)$ is an irregular pair. Firstly,  is clear that $a\not\in R\setminus\{a\}$. Secondly, $G_{R\setminus\{a\}}(a)$ exists because $R\in \mathcal{U}_{R\setminus\{a\}}$ and $a\in A_R$. Thus, for each $a\in A_R$, the pair $(R\setminus\{a\},a)$ is an irregular pair.
 
 Thus, every size $n-1$ set $R\subset [k']$ \emph{produces} $n-3$ irregular pairs $(R\setminus\{a\},a)$. Furthermore, given an irregular pair $(Q,a)$, there is only one set that could produce it, namely $Q\cup\{a\}$. Then since there are ${k'\choose n-1}$ sets $R\subset [k']$ of size $n-1$, we have that the total number of irregular pairs is at least $(n-3){k'\choose n-1}$.
 
 Thus, we have shown that the total number of irregular pairs is at most $2{k'\choose n-2}$ and at least $(n-3){k'\choose n-1}$. Therefore, we have reached a contradiction if $2{k'\choose n-2}<(n-3){k'\choose n-1}$ which is true if $n\geq 4$ and $k'>\frac{n^2-3n+4}{n-3}$. In particular, $n\geq 5$, $k'\geq n+7$ satisfy these bounds.

\subsubsection{Step 2b: Structure of size $n-2$ sets for configuration 2}

From Section~\ref{sec:ramsey}, there exists a size $k'$ set $K'$ of tasks such that either  all size $n-1$ subsets of $K'$ obey configuration 1 or all size $n-1$ subsets of $K'$ obey configuration 2. We have already considered the configuration 1 case and now we will assume that all size $n-1$ subsets of $K'$ obey configuration 2. Recall that configuration 2 says that $R$ is semi-frozen. Recall that $h_R$ is the semi-freezing function and $w_R$ is the wildcard index.

Let $Q\subset [k']$ be a size $n-2$ set. Recall that $\mathcal{U}_Q$ is the set of all size $n-1$ sets $R$ such that $Q\subset R\subset [k']$. We will prove the following structural lemma, which says that the semi-freezing functions for two sets in $\mathcal{U}_Q$ are in some sense \emph{consistent}. 

\begin{lemma}\label{lem:config2}
For every size $n-2$ set $Q\subset [k']$, there exists a size $n-4$ set $_QT\subset Q$ such that for all $R,R'\in \mathcal{U}_Q$ and all $t\in T_Q$, $h_R(t)=h_{R'}(t)$.
\end{lemma}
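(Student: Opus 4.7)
The plan is to adapt the argument of Lemma~\ref{lem:config1} to semi-frozen sets, combining a pairwise consistency analysis with a Helly-type argument to globalize. For any pair $R, R' \in \mathcal{U}_Q$ with $R = Q \cup \{x\}$ and $R' = Q \cup \{y\}$, I first consider the size-$n$ set $S = Q \cup \{x, y\} \in \mathcal{U}_R \cap \mathcal{U}_{R'}$. For each $q \in Q$, the position of $q$ in $\pi(S)$ must lie in $\{h_R(q), w_R\} \cap \{h_{R'}(q), w_{R'}\}$. A case split on whether $w_R = w_{R'}$ yields a pairwise disagreement bound of $2$: equal wildcards force any disagreeing $q$ to occupy the common wildcard position in $\pi(S)$, capping disagreements at $1$; unequal wildcards permit only disagreements $q$ satisfying $h_R(q) = w_{R'}$ or $h_{R'}(q) = w_R$, and injectivity of $h_R$ and $h_{R'}$ caps each type at $1$.

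To upgrade the pairwise consistency to a common agreement set, I would consider for each fixed $q \in Q$ the family of 2-element sets $P_x^q := \{h_{R_x}(q), w_{R_x}\}$ as $x$ ranges over $[k']\setminus Q$. The family is pairwise intersecting, since the position of $q$ in $\pi(Q \cup \{x, x'\})$ lies in $P_x^q \cap P_{x'}^q$, and any pairwise intersecting family of 2-element subsets of size at least $4$ shares a common element (the only other configuration of pairwise intersecting 2-sets has exactly three sets). Hence there is $p^q \in [n]$ with $p^q \in P_x^q$ for every $x$. Furthermore, if two distinct $q_1, q_2 \in Q$ had $p^{q_1} = p^{q_2} = p$, then for every $x$ either $h_{R_x}(q_1) = p = h_{R_x}(q_2)$ (contradicting injectivity of $h_{R_x}$) or $w_{R_x} = p$; so whenever the wildcards are not all identical, the map $q \mapsto p^q$ is injective.

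Call $q \in Q$ \emph{varying} if $h_R(q)$ is not constant across $R \in \mathcal{U}_Q$. I claim at most $2$ elements of $Q$ are varying. If all wildcards coincide, the pairwise disagreement bound sharpens to $1$, and a short transitivity argument shows at most one $q$ can vary: if pairs $(x_1, x_2)$ and $(x_2, x_3)$ disagreed on distinct $q_1 \neq q_2$, then the pair $(x_1, x_3)$ would inherit both disagreements, violating the bound. Otherwise wildcards differ, and each varying $q$ must have $p^q$ equal to some wildcard value $w_{R_x}$; by injectivity of $q \mapsto p^q$ and a case analysis further exploiting the valid-permutation constraint on $\pi(Q \cup \{x, y\})$ (which restricts how the two extra elements $x, y$ can occupy the wildcard-like positions), the total number of varying elements is at most $2$. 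This yields $|T_Q| \geq |Q| - 2 = n - 4$.

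The main obstacle I anticipate is the mixed-wildcard subcase of the final step: the Helly-based common elements $p^q$, the injectivity of the semi-freezing functions $h_{R_x}$, and the valid-permutation constraint on $\pi(Q \cup \{x, y\})$ must all be combined to rule out configurations where three or more elements of $Q$ vary. In particular I expect to exploit the fact that $h_{R_x}(x)$ is determined by $w_{R_x}$ together with $h_{R_x}|_Q$ via injectivity, which tightly couples the positions of the non-$Q$ elements across different $x$'s and prevents too many $p^q$'s from being forced to distinct wildcard positions.
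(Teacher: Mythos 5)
Your approach diverges substantially from the paper's: the paper first characterizes the exact structure of a pair $R,R'$ disagreeing on two characters (Lemma~\ref{lem:spec}, with the explicit table), then derives a contradiction by tracking a third $R''$ through that structure, whereas you propose a Helly-style globalization of the pairwise disagreement bound. The idea is appealing, but as written it has a genuine gap precisely at the step you yourself flag as the ``main obstacle.''

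Two concrete problems. First, the Helly step: for a fixed $q$, the family of $2$-sets $P_x^q = \{h_{R_x}(q), w_{R_x}\}$ is indeed pairwise intersecting, and four \emph{distinct} pairwise-intersecting $2$-sets must share a point; but nothing you say rules out the family collapsing to only three distinct sets forming a triangle $\{a,b\},\{b,c\},\{c,a\}$ --- in that case there is no common $p^q$, and one can check that a triangle is locally consistent with the constraint that the position of $q$ in $\pi(Q\cup\{x,x'\})$ lies in $P_x^q\cap P_{x'}^q$. Second, and more importantly, even granting $p^q$ exists for every varying $q$ and that $q\mapsto p^q$ is injective, your argument shows only that each $p^q$ equals \emph{some} wildcard value $w_{R_x}$. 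Since wildcards can vary over $x$, this bounds the number of varying $q$'s by the number of distinct wildcard values over $x\in[k']\setminus Q$, which a priori could be three or more; nothing in the sketch caps that number at $2$. The final sentence appeals to ``a case analysis further exploiting the valid-permutation constraint,'' but that case analysis is exactly where the content of the lemma lives --- in the paper it becomes the full proof of Lemma~\ref{lem:spec} together with the two-case argument in the proof of Lemma~\ref{lem:config22} --- and it is not supplied. As it stands, the proposal establishes the pairwise bound (your analogue of Lemma~\ref{lem:pair}) and correctly disposes of the equal-wildcard case, but the reduction from ``pairwise at most $2$'' to ``globally a common $(n-4)$-subset'' in the mixed-wildcard case is missing.
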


We will prove Lemma~\ref{lem:config2} through a series of lemmas. In the following lemma, we consider the characters that are \emph{not} in the set $T_Q$, that is, the characters $q\in Q$ for which $h_R(q)\not=h_{R'}(q)$.

\begin{lemma}\label{lem:wild}
For all size $n-2$ sets $Q\subset [k']$ and all $R,R'\in \mathcal{U}_Q$, if $q\in Q$ is such that $h_R(q)\not=h_{R'}(q)$, then either $\pi(R\cup R')[w_R]=q$ or $\pi(R\cup R')[w_{R'}]=q$.
\end{lemma}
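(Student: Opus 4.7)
My plan is to prove Lemma~\ref{lem:wild} by a short direct argument: I set up a single size-$n$ set $S \in \mathcal{U}_R \cap \mathcal{U}_{R'}$ on which I can apply the semi-freezing property of both $R$ and $R'$ to the character $q$, and then rule out the two "non-wildcard" possibilities by using that $q$ occupies only one position in $\pi(S)$.

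First I would fix $Q$, $R$, $R'$, and $q$ as in the statement. Since $|Q|=n-2$ and $|R|=|R'|=n-1$ with $Q \subset R$ and $Q \subset R'$, we may write $R = Q \cup \{r\}$ and $R' = Q \cup \{r'\}$ for some $r, r' \in [k'] \setminus Q$. The sets $R$ and $R'$ are distinct (otherwise the hypothesis $h_R(q) \neq h_{R'}(q)$ is vacuous), so $r \neq r'$. Therefore $S := R \cup R'$ has size $n$, and by construction $R \subset S$ and $R' \subset S$, so $S \in \mathcal{U}_R \cap \mathcal{U}_{R'}$.

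Next I would apply the semi-freezing of $R$ and $R'$ (which is the content of configuration 2 being in force) to the character $q$. Since $q \in R$ and $S \in \mathcal{U}_R$, the definition of semi-freezing gives $\pi(S)[h_R(q)] = q$ or $\pi(S)[w_R] = q$. Since $q \in R'$ and $S \in \mathcal{U}_{R'}$, it also gives $\pi(S)[h_{R'}(q)] = q$ or $\pi(S)[w_{R'}] = q$.

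Finally, I would argue by contradiction. Suppose that neither $\pi(S)[w_R] = q$ nor $\pi(S)[w_{R'}] = q$ holds. Then in each of the two dichotomies above the other clause must hold, so $\pi(S)[h_R(q)] = q$ and $\pi(S)[h_{R'}(q)] = q$ simultaneously. But a single character occupies a unique position in the permutation $\pi(S)$, forcing $h_R(q) = h_{R'}(q)$, contradicting the hypothesis of the lemma. I do not anticipate any real obstacle here; the only thing to be careful about is confirming that $S$ genuinely lies in both $\mathcal{U}_R$ and $\mathcal{U}_{R'}$ so that the semi-freezing guarantees apply on the same permutation $\pi(S)$.
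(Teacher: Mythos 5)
Your proof is correct and follows essentially the same approach as the paper: both form $S = R \cup R'$, observe $S \in \mathcal{U}_R \cap \mathcal{U}_{R'}$, apply semi-freezing of $R$ and $R'$ to the character $q$ on the same permutation $\pi(S)$, and derive a contradiction with $h_R(q) \ne h_{R'}(q)$ from the uniqueness of $q$'s position. Your small aside that $R \ne R'$ (so that $|R\cup R'|=n$) is a harmless clarification the paper leaves implicit.
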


\begin{proof}
  Since $R$ and $R'$ are each composed by adding a single character to $Q$, we have $R\cup R'\in \mathcal{U}_R\cap\mathcal{U}_{R'}$. Since $R\cup R'\in\mathcal{U}_R$, we know that the position of $q$ in $\pi(R\cup R')$ is either $h_R(q)$ or $w_R$. Since $R\cup R'\in\mathcal{U}_{R'}$, we know that the position of $q$ in $\pi(R\cup R')$ is either $h_{R'}(q)$ or $w_{R'}$. Thus, the position of $q$ in $\pi(R\cup R')$ must be either $w_R$ or $w_{R'}$, because otherwise its position would have to be both $h_R(q)$ and $h_{R'}(q)$, which cannot happen since $h_R(q)\not=h_{R'}(q)$.
\end{proof}

Before proving Lemma~\ref{lem:config2}, we prove the pairwise version of Lemma~\ref{lem:config2}.

\begin{lemma}[pairwise version of Lemma~\ref{lem:config2}]\label{lem:pair}
For every size $n-2$ set $Q\subset [k']$ of characters, for every pair $R,R'\in \mathcal{U}_Q$, there exists a size $n-4$ subset $T\subset Q$ such that every character $t\in T$, $h_R(t)=h_{R'}(t)$.
\end{lemma}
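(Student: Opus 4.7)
The plan is to use Lemma~\ref{lem:wild} directly as the key ingredient and then apply a simple counting argument based on the fact that a permutation places at most one character at each index.

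First I would fix a size $n-2$ set $Q\subset[k']$ and a pair $R,R'\in\mathcal{U}_Q$. Since each of $R$ and $R'$ is obtained from $Q$ by adding a single character, we may write $R = Q\cup\{r\}$ and $R' = Q\cup\{r'\}$ with $r\ne r'$, so that $R\cup R' = Q\cup\{r,r'\}$ has size $n$ and lies in $\mathcal{U}_R\cap\mathcal{U}_{R'}$. In particular, $\pi(R\cup R')$ is a genuine permutation of $n$ distinct characters.

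Next I would define the ``bad'' set $B = \{\,q\in Q : h_R(q)\ne h_{R'}(q)\,\}$ and bound its size. By Lemma~\ref{lem:wild}, every $q\in B$ satisfies $\pi(R\cup R')[w_R]=q$ or $\pi(R\cup R')[w_{R'}]=q$. Because $\pi(R\cup R')$ is a permutation, at most one character can occupy position $w_R$ and at most one can occupy position $w_{R'}$; hence $|B|\le 2$. Setting $T = Q\setminus B$ gives a subset of $Q$ of size at least $(n-2)-2 = n-4$ on which the semi-freezing functions $h_R$ and $h_{R'}$ agree. Choosing any size-$(n-4)$ subset of $Q\setminus B$ yields the desired $T$.

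The only step requiring any care is verifying that the two ``wildcard slots'' really do cap the disagreement at two characters of $Q$: the characters at positions $w_R$ and $w_{R'}$ of $\pi(R\cup R')$ need not come from $Q$ at all (they could be $r$ or $r'$), but this only makes $|B|$ smaller, so the bound $|B|\le 2$ holds regardless. Since this is an immediate counting consequence of Lemma~\ref{lem:wild}, there is no real obstacle; the main technical work has already been done in the statement and proof of Lemma~\ref{lem:wild}, and Lemma~\ref{lem:pair} is essentially its pairwise repackaging.
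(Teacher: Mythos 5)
Your proof is correct and takes essentially the same approach as the paper: you apply Lemma~\ref{lem:wild} to confine any disagreeing character of $Q$ to one of the two wildcard positions of $\pi(R\cup R')$, and then observe that a permutation has at most one character per position, so at most two characters can disagree. The paper phrases this as a proof by contradiction (assume three disagreeing characters, derive that three characters occupy two slots), while you count directly, but the substance is identical.
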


\begin{proof}
Suppose by way of contradiction that there exist $R,R'\in \mathcal{U}_Q$ such that there is a set of 3 characters $Q'\subset Q$ so that for each $q\in Q'$,  $h_R(q)\not=h_{R'}(q)$. By Lemma~\ref{lem:wild}, for each $q\in Q'$ the position of $q$ in $\pi(R\cup R')$ is either $w_R$ or $w_{R'}$. That is, all 3 characters in $Q'$ must occupy a total of 2 positions in $\pi(R\cup R')$, which is impossible.
\end{proof}



We have just shown in Lemma~\ref{lem:pair} that given a size $n-2$ set $Q\subset [k']$, for every pair $R,R'\in\mathcal{U}_Q$ there are at most two characters $q$ in $Q$ for which $h_R(q)\not=h_{R'}(q)$. Thus, we have two cases: 1) the uninteresting case where every pair $R,R'$ has only one such character $q$, and 2) the interesting case where there exist $R,R'$ so that there are two such characters $q$.  The following lemma handles the uninteresting case by showing that in this case Lemma~\ref{lem:config2} already holds.


\begin{lemma}\label{lem:one}
Suppose $Q\subset [k']$ is a size $n-2$ set of characters and for every pair $R,R'\in \mathcal{U}_Q$ there is at most one character $q\in Q$, with $h_R(q)\not=h_{R'}(q)$. Then there exists a size $n-3$ subset $T\subset Q$ such that for every pair $R,R'\in \mathcal{U}_Q$ and every character $t\in T$, $h_R(t)=h_{R'}(t)$.
\end{lemma}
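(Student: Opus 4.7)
The plan is to show that the set
$B := \{q \in Q : h_R(q) \text{ varies with } R \in \mathcal{U}_Q\}$
has size at most $1$. Since $|Q| = n-2$, we can then take $T$ to be any subset of $Q \setminus B$ of size $n-3$, and for every $t \in T$ and every pair $R, R' \in \mathcal{U}_Q$ we have $h_R(t) = h_{R'}(t)$ by the definition of $B$. So the whole lemma reduces to a short diagonal chase bounding $|B|$.

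Suppose for contradiction that $|B| \geq 2$, and fix distinct $q_1, q_2 \in B$. By definition of $B$ we can pick witness pairs $R_1, R_1' \in \mathcal{U}_Q$ with $h_{R_1}(q_1) \neq h_{R_1'}(q_1)$ and $R_2, R_2' \in \mathcal{U}_Q$ with $h_{R_2}(q_2) \neq h_{R_2'}(q_2)$. The hypothesis of the lemma, applied to $(R_1, R_1')$, says $q_1$ is the \emph{unique} character in $Q$ on which these two semi-freezing functions can disagree; in particular $h_{R_1}(q_2) = h_{R_1'}(q_2)$, and I call this common value $v$. Symmetrically, $h_{R_2}(q_1) = h_{R_2'}(q_1)$.

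Now for the main step. Since $h_{R_2}(q_2) \neq h_{R_2'}(q_2)$, these two values cannot both equal $v$, so after swapping $R_2 \leftrightarrow R_2'$ if necessary I may assume $h_{R_2}(q_2) \neq v$. Then the pair $(R_2, R_1)$ already disagrees at $q_2$ (since $h_{R_1}(q_2) = v$), so by the lemma's hypothesis $q_2$ is the only coordinate on which they disagree, giving $h_{R_2}(q_1) = h_{R_1}(q_1)$. The pair $(R_2, R_1')$ also disagrees at $q_2$ (since $h_{R_1'}(q_2) = v \neq h_{R_2}(q_2)$), so analogously $h_{R_2}(q_1) = h_{R_1'}(q_1)$. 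Chaining, $h_{R_1}(q_1) = h_{R_2}(q_1) = h_{R_1'}(q_1)$, contradicting the choice of $R_1, R_1'$.

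I do not anticipate any serious obstacle here: the only subtlety is the WLOG swap in the third paragraph, justified because $h_{R_2}(q_2)$ and $h_{R_2'}(q_2)$ are distinct and so cannot both coincide with the common value $v$. Once that swap is made, the hypothesis ``at most one differing character per pair'' is strong enough to propagate agreement from $q_2$-disagreement to $q_1$-agreement across the two mixed pairs $(R_2, R_1)$ and $(R_2, R_1')$, and the contradiction falls out immediately. No Ramsey-type or counting machinery is needed beyond what the hypothesis already gives.
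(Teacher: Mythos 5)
Your proof is correct and takes essentially the same approach as the paper: use the at-most-one-disagreement hypothesis to force two pairs each involving a common third set to agree at $q_1$, chaining to the contradiction $h_{R_1}(q_1)=h_{R_1'}(q_1)$. The paper's proof is a slightly leaner version of the same diagonal chase, fixing a single witness pair $(R,R')$ disagreeing at $q$ and showing any third $R''$ must agree with $R$ on $Q\setminus\{q\}$; your phrasing via the set $B$ of varying characters adds a second witness pair but lands on the same mechanism.
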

\begin{proof}
Let $R,R'\in \mathcal{U}_Q$ and $q\in Q$ be such that $h_R(q)\not=h_{R'}(q)$. Then by assumption, for all $q'\in Q$ with $q'\not=q$, $h_R(q')=h_{R'}(q')$. Consider $R''\in \mathcal{U}_Q$. It suffices to show that for all $q'\in Q$ with $q'\not=q$, we have $h_{R''}(q')=h_{R}(q')$. Suppose for contradiction that there exists $q'\in Q$ with $q'\not=q$ such that $h_{R''}(q')\not=h_{R}(q')$. Then, since $h_{R}(q')=h_{R'}(q')$, we have $h_{R''}(q')\not=h_{R'}(q')$. From the precondition of the lemma statement, $q'$ is the only character with $h_{R''}(q')\not=h_{R}(q')$ and $q'$ is the only character with $h_{R''}(q')\not=h_{R'}(q')$. Thus, $h_{R''}(q)=h_{R}(q)$ and $h_{R''}(q)=h_{R'}(q)$. So, $h_{R}(q)=h_{R'}(q)$, a contradiction.
\end{proof}

We have handled the uninteresting case from above and now we handle the interesting case in which there exist $R,R'\in \mathcal{U}_Q$ so that there are two characters $q$ in $Q$ for which $h_R(q)\not=h_{R'}(q)$. The following lemma shows that in this case we can completely characterize the structure of $h_R$ and $h_{R'}$. Table~\ref{tabh} depicts the structure.

\begin{lemma}\label{lem:spec}
For every size $n-2$ set $Q\subset [k']$ of characters, if $R,R'\in \mathcal{U}_Q$ are such that there exist $q,q'\in Q$ with $h_R(q)\not=h_{R'}(q)$ and $h_R(q')\not=h_{R'}(q')$, then (modulo switching $q$ and $q'$):
\begin{enumerate}
\item Let $r$ be the single character in $R\setminus Q$ and let $r'$ be the single character in $R'\setminus Q$. Then $h_R(q)=h_{R'}(r')$ and $h_R(r)=h_{R'}(q')$.
\item $h_R(q')=w_{R'}$ and $h_{R'}(q)=w_R$.
\item for all $q''\in Q$ not equal to $q$ or $q'$,  $h_R(q'')=h_{R'}(q'')$.
\end{enumerate}
\end{lemma}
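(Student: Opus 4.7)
The plan is to analyze the single joint permutation $\pi(R\cup R')$, exploiting that $R\cup R'\in\mathcal{U}_R\cap\mathcal{U}_{R'}$, so this one permutation must be consistent with both semi-freezing relations. Writing $R=Q\cup\{r\}$ and $R'=Q\cup\{r'\}$, I will first apply Lemma~\ref{lem:wild} to the two disagreeing characters $q$ and $q'$: each must occupy either slot $w_R$ or slot $w_{R'}$ of $\pi(R\cup R')$, and since $q\neq q'$ must land in distinct slots, already $w_R\neq w_{R'}$, and up to swapping the names $q,q'$ we have that $q$ sits at $w_R$ and $q'$ sits at $w_{R'}$.

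Point 2 then drops out immediately: the position of $q'$ in $\pi(R\cup R')$ equals $w_{R'}$, and by semi-freezing through $R$ that position must equal $h_R(q')$ or $w_R$, forcing $h_R(q')=w_{R'}$; symmetrically $h_{R'}(q)=w_R$. For point 3, I will take an arbitrary $q''\in Q\setminus\{q,q'\}$ and note that its slot must be $h_R(q'')$ or $w_R$ from the $R$ side and $h_{R'}(q'')$ or $w_{R'}$ from the $R'$ side; since both wildcards are already occupied by $q$ and $q'$, both constraints force the non-wildcard option, and they must agree.

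For point 1, I locate $r$ and $r'$ in $\pi(R\cup R')$ the same way: $r$'s only non-excluded choice via $R$ is $h_R(r)$, and similarly $r'$ sits at $h_{R'}(r')$. This identifies all $n$ slots of $\pi(R\cup R')$, namely $w_R$, $w_{R'}$, $h_R(r)$, $h_{R'}(r')$, and the $n-4$ common values $h_R(q'')=h_{R'}(q'')$. Since $h_R$ is a bijection from $Q\cup\{r\}$ onto $[n]\setminus\{w_R\}$, the already-identified values $w_{R'}$, the $h_R(q'')$'s, and $h_R(r)$ account for $n-2$ of the $n-1$ image points; the only remaining position in $[n]\setminus\{w_R\}$ is $h_{R'}(r')$, so $h_R(q)=h_{R'}(r')$, and a symmetric count yields $h_{R'}(q')=h_R(r)$.

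The main obstacle I expect is the two wildcard slots: until one establishes $w_R\neq w_{R'}$ and that they are spent respectively on $q$ and $q'$, every other character has two candidate positions from each of $R$ and $R'$, with not enough rigidity to pin anything down. Once both wildcards are known to be occupied, every other character is forced to its non-wildcard image, and the remainder is a straightforward count of which element of $[n]$ is missing from the bookkeeping.
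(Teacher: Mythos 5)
Your proposal is correct and takes essentially the same approach as the paper: analyze the single permutation $\pi(R\cup R')$, use Lemma~\ref{lem:wild} to pin $q$ and $q'$ to the two wildcard slots, deduce items 2 and 3 by excluding the now-occupied wildcards, and obtain item 1 by process of elimination on the remaining slots. The only cosmetic difference is that you make $w_R\neq w_{R'}$ explicit and phrase the endgame as a bijection count, whereas the paper records the same elimination by listing which indices have not yet been hit; these are the same argument.
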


\begin{table}[h]
\centering
\begin{tabular}{c|c|c|c|c|c}
& &  &$\bm{w_R}$  &$\bm{w_{R'}}$  &  \\
\hline
 $\bm{\pi(R\cup R')}$& $r'$ & $r$ & $q$ & $q'$ &$\{q''\in Q\}$\\
 \hline
$\bm{h_R}$ & $q$  & $r$ & N/A & $q'$ &$\{q''\in Q\}$\\
\hline
 $\bm{h_{R'}}$ & $r'$ & $q'$ & $q$ & N/A &$\{q''\in Q\}$
\end{tabular}
\caption{The structure imposed by Lemma~\ref{lem:spec}. Each column indicates an index in $[n]$. For example, the first column indicates that $\pi(R\cup R')[r']=h_R(q)=h_{R'}(r')$. Some entries are not applicable (N/A) because by definition $h_R$ does not map anything to the index $w_R$.}\label{tabh}
\end{table}

\begin{proof}
We begin with item 3. By Lemma~\ref{lem:wild}, without loss of generality $\pi(R\cup R')[w_R]=q$ and $\pi(R\cup R')[w_{R'}]=q'$. Since the position in $\pi(R\cup R')$ of each remaining character $q''\in Q$ is either $h_R(q'')$ or $w_R$ but the position $w_R$ is taken by $q$, it must be that $\pi(R\cup R')[h_R(q'')]=q''$. Similarly, we have $\pi(R\cup R')[h_{R'}(q'')]=q''$. Thus, $h_R(q'')=h_{R'}(q'')$ for all $q''\in Q$ with $q''\not=q,q'$.

We now move to item 2. Since the position of $q$ in $\pi(R\cup R')$ is either $w_{R'}$ or $h_{R'}(q)$ but $w_{R'}$ is taken by $q'$, we have that $\pi(R\cup R')[h_{R'}(q)]=q$. We already know that $\pi(R\cup R')[w_R]=q$, so $h_{R'}(q)=w_R$. By a symmetric argument, $h_R(q')=w_{R'}$.

We now move to item 1. Since the position in $\pi(R\cup R')$ of $r$ is either $h_R(r)$ or $w_R$, but the position $w_R$ is taken by $q$, it must be that $\pi(R\cup R')[h_R(r)]=r$. 
By a symmetric argument, $\pi(R\cup R')[h_{R'}(r')]=r'$.
Combining these two facts, since $r$ and $r'$ cannot occupy the same index in $\pi(R\cup R')$, we have $h_R(r)\not=h_{R'}(r')$. We proceed by process of elimination.

The set of indices which are mapped to by $h_R$ is $[n]\setminus \{w_R\}$ and the indices which have so far been mapped to by items 2 and 3 are $w_{R'}$ and $h_R(q'')=h_{R'}(q'')$ for all $q''\in Q$ not equal to $q$ or $q'$. The set of indices which are mapped to by $h_{R'}$ is $[n]\setminus \{w_{R'}\}$ and the indices which have so far been mapped to by items 2 and 3 are $w_{R}$ and $h_{R'}(q'')=h_{R}(q'')$ for all $q''\in Q$ with $q''\not=q,q'$. Thus, the set of indices which have not yet been mapped to is the same for $h_R$ and $h_{R'}$: $[n]\setminus\{w_R,w_{R'},h_R(q'')=h_{R'}(q'')\}$. The characters for which $h_R$ has not yet been determined are $q$ and $r$ and the characters for which $h_{R'}$ has not yet been determined are $q'$ and $r'$. From the previous paragraph, we know that $h_R(r)\not=h_{R'}(r')$. Thus, we have $h_R(q)=h_{R'}(r')$ and $h_R(r)=h_{R'}(q')$.
\end{proof}

We are now ready to prove Lemma~\ref{lem:config2}. We have just shown in Lemma~\ref{lem:spec} that individual pairs $R,R'\in \mathcal{U}_Q$ obey a particular structure, and in Lemma~\ref{lem:config2} we will derive structure among \emph{all} $R\in\mathcal{U}_Q$. 


\begin{lemma}[Restatement of Lemma~\ref{lem:config2}]\label{lem:config22}
For every size $n-2$ set $Q\subset [k']$, there exists a size $n-4$ set $T_Q\subset Q$ such that for all $R,R'\in \mathcal{U}_Q$ and all $t\in T_Q$, $h_R(t)=h_{R'}(t)$.
\end{lemma}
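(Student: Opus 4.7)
The plan is to split on whether some pair in $\mathcal{U}_Q$ has two disagreements in $Q$ under their semi-freezing functions. If every pair $R, R' \in \mathcal{U}_Q$ has at most one character on which $h_R$ and $h_{R'}$ disagree, then Lemma~\ref{lem:one} already gives a common-agreement subset of $Q$ of size $n-3$; any size $n-4$ subset works as $T_Q$. The real work is in the other case, which I handle next.

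Assume there exist $R, R' \in \mathcal{U}_Q$ with $h_R(q) \neq h_{R'}(q)$ and $h_R(q') \neq h_{R'}(q')$ for some $q, q' \in Q$. Choose the $q$-versus-$q'$ labeling so that Lemma~\ref{lem:spec} gives $h_R(q') = w_{R'}$, $h_{R'}(q) = w_R$, and $h_R(q) = h_{R'}(r')$ where $r' = R' \setminus Q$. I propose $T_Q = Q \setminus \{q, q'\}$, which has size $n-4$. The core claim is then that every $R'' \in \mathcal{U}_Q$ agrees with $R$ on $T_Q$; letting $D := \{t \in Q : h_R(t) \neq h_{R''}(t)\}$, Lemma~\ref{lem:pair} gives $|D| \leq 2$ for free, so the claim reduces to showing $D \subseteq \{q, q'\}$.

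Suppose for contradiction that some $q_2 \in D$ lies outside $\{q, q'\}$. The strategy is to count disagreements between $R'$ and $R''$, using item~3 of Lemma~\ref{lem:spec} applied to $(R, R')$ (which gives $h_R = h_{R'}$ outside $\{q, q'\}$) together with item~2 ($h_R(q') = w_{R'}$, $h_{R'}(q) = w_R$). These immediately dispose of $D = \{q_2\}$, of $D = \{q_2, q_3\}$ with $q_3 \notin \{q, q'\}$, and (symmetrically) of $D = \{q_2, q'\}$: in each case one reads off at least three elements of $Q$ on which $R'$ and $R''$ disagree, violating Lemma~\ref{lem:pair}.

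The remaining and most delicate case, $D = \{q_2, q\}$, is the main obstacle. Lemma~\ref{lem:pair} now forces $R', R''$ to agree on $q$, so $h_{R''}(q) = h_{R'}(q) = w_R$; and since $q' \notin D$, $h_{R''}(q') = h_R(q') = w_{R'}$. These two equalities select specific branches of Lemma~\ref{lem:spec} applied to $(R, R'')$ (disagreement set $\{q, q_2\}$) and to $(R', R'')$ (disagreement set $\{q', q_2\}$), and item~1 of each yields $h_R(q) = h_{R''}(r'')$ and $h_{R'}(q') = h_{R''}(r'')$, where $r'' = R'' \setminus Q$. Chaining gives $h_R(q) = h_{R'}(q')$, but we already have $h_R(q) = h_{R'}(r')$ from the original $(R, R')$ structure, so injectivity of $h_{R'}$ forces $q' = r'$, which is impossible since $q' \in Q$ and $r' \notin Q$. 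The main bookkeeping challenge is verifying that in each of the three applications of Lemma~\ref{lem:spec}, an equality of the form $h_{\cdot}(x) = w_{\cdot}$ pins down which side of the ``modulo switching'' option is in effect, so that the item~1 equalities compose into the desired clash.
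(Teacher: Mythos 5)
Your overall structure matches the paper's: reduce to the case where some $R,R'\in\mathcal{U}_Q$ have a two-element disagreement set $\{q,q'\}$ via Lemma~\ref{lem:one}, invoke Lemma~\ref{lem:spec}, and then show that any third $R''$ must agree with $R$ outside $\{q,q'\}$. Your chaining argument for the $D=\{q_2,q\}$ case is correct and is a pleasant variant of the paper's process of elimination: you pin down items 1 and 2 of Lemma~\ref{lem:spec} for $(R,R'')$ via $h_{R''}(q)=w_R$ and for $(R',R'')$ via $h_{R''}(q')=w_{R'}$, compose the two $h_{R''}(r'')$ equalities, and collide with $h_R(q)=h_{R'}(r')$ to force $q'=r'$.

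However, there is a genuine gap in the preceding case elimination. You claim that $D=\{q_2,q'\}$ is disposed of immediately because ``one reads off at least three elements of $Q$ on which $R'$ and $R''$ disagree.'' This is false. In that case $h_{R''}(q)=h_R(q)\neq h_{R'}(q)$ and $h_{R''}(q_2)\neq h_R(q_2)=h_{R'}(q_2)$, so $R'$ and $R''$ disagree on $q$ and $q_2$; but for $q'$ you only know $h_{R''}(q')\neq h_R(q')=w_{R'}$, while $h_{R'}(q')=h_R(r)$, and nothing forces $h_{R''}(q')\neq h_R(r)$. So you have only two guaranteed disagreements, and Lemma~\ref{lem:pair} is not violated. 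In fact $D=\{q_2,q'\}$ is the exact symmetric twin of $D=\{q_2,q\}$: once you fix the $q$ vs.~$q'$ labeling to get a specific form of items 1 and 2 of Lemma~\ref{lem:spec} for $(R,R')$, you surrender the right to also WLOG which of $q,q'$ appears in $D$, and both cases require the detailed argument. The paper avoids this by doing the WLOG the other way round (fixing which of $q,q'$ lies in the disagreement set with $R''$) and then doing a two-way case split on item 1 of Lemma~\ref{lem:spec} for $(R,R')$. Your missing case is fixable: in $D=\{q_2,q'\}$ with $h_{R''}(q')=h_{R'}(q')$, pinning down Lemma~\ref{lem:spec} for $(R,R'')$ and $(R',R'')$ via $h_{R''}(q')=h_R(r)$ and $h_{R''}(q)=h_{R'}(r')$ respectively yields $w_{R'}=w_{R''}$ and $w_R=w_{R''}$, hence $w_R=w_{R'}$, which contradicts $h_R(q')=w_{R'}$ together with $h_R$ not mapping to $w_R$. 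But as written, the proposal asserts rather than proves that this case cannot occur.
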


\begin{proof}
By Lemma~\ref{lem:one} we can assume that there exists a pair $R,R'\in \mathcal{U}_Q$ so that there exist $q,q'\in Q$ with $h_R(q)\not=h_{R'}(q)$ and $h_R(q')\not=h_{R'}(q')$. Fix $R$, $R'$, $q$, and $q'$. $R$ and $R'$ obey the structure specified by Lemma~\ref{lem:spec}.
Consider $R''\in \mathcal{U}_Q$ with $R''\not=R,R'$. Suppose by way of contradiction that there exists $q''\in Q$ with $q''\not=q,q'$ such that $h_{R''}(q'')\not=h_{R}(q'')$ (and thus also $h_{R''}(q'')\not=h_{R'}(q'')$ since $h_{R}(q'')=h_{R'}(q'')$ by Lemma~\ref{lem:spec}).

We first note that it cannot be the case that both $h_{R''}(q)=h_{R}(q)$ and $h_{R''}(q')=h_{R}(q')$ because then we would have $h_{R''}(q)\not=h_{R'}(q)$ and $h_{R''}(q')\not=h_{R'}(q')$, in which case $h_{R''}$ and $h_{R'}$ would differ on inputs $q$, $q'$, and $q''$ which contradicts Lemma~\ref{lem:pair}. Thus, $h_{R''}$ must differ from each of $h_R$ and $h_{R'}$ and on exactly one of $q$ or $q'$. Without loss of generality, suppose $h_{R''}(q)\not=h_R(q)$ and $h_{R''}(q')\not=h_{R'}(q')$.

Since $h_{R''}$ also differs from each of $h_R$ and $h_{R'}$ on input $q''$, $h_{R''}$ differs from each of $h_R$ and $h_{R'}$ on exactly two inputs. Thus, the pair $R'',R$ and the pair $R'',R'$ both obey the structure specified by Lemma~\ref{lem:spec}. 
We claim that it is impossible to reconcile these pairwise structural constraints.

$R$, $R'$, and $R''$ are each composed by adding a single character to $Q$. Let $r$, $r'$, and $r''$ be these characters respectively. Applying item 1 of Lemma~\ref{lem:spec} to the pair $R,R'$ we have the following two cases:

\subparagraph{Case 1: $h_R(q)=h_{R'}(r')$ and $h_R(r)=h_{R'}(q')$.} Item 1 of Lemma~\ref{lem:spec} presents two options for the pair $R,R''$: either $h_{R''}(r'')=h_R(q)$ or $h_{R''}(r'')=h_R(q'')$. If $h_{R''}(r'')=h_R(q)$, then from the definition of case 1, $h_{R'}(r')=h_{R''}(r'')$, but this is not true by item 1 of Lemma~\ref{lem:spec}. Thus, $h_{R''}(r'')=h_R(q'')$ and $h_R(r)=h_{R''}(q)$. Since $h_{R''}$ and $h_{R'}$ differ only on inputs $q'$ and $q''$, we have $h_{R''}(q)=h_{R'}(q)$. Thus, we have shown that $h_R(r)=h_{R'}(q)$. However, by item 2 of Lemma~\ref{lem:spec}, we have $h_{R'}(q)=w_R$, which is a contradiction since $h_R(r)\not=w_R$.

\subparagraph{Case 2: $h_R(q')=h_{R'}(r')$ and $h_R(r)=h_{R'}(q)$.} Since $h_{R''}$ and $h_{R}$ differ only on inputs $q$ and $q''$, we have $h_{R''}(q')=h_{R}(q')$. Thus, $h_{R''}(q')=h_{R'}(r')$. Then by item 2 of Lemma~\ref{lem:spec}, we have $h_{R''}(q'')=w_{R'}$.
Since $h_{R''}$ and $h_{R'}$ differ only on inputs $q'$ and $q''$, we have $h_{R''}(q)=h_{R'}(q)$. Then since we are in case 2, we have $h_R(r)=h_{R''}(q)$. Then by item 2 of Lemma~\ref{lem:spec}, we have $h_{R''}(q'')=w_{R}$. Thus, we have shown that $h_{R''}(q'')$ is equal to both $w_{R'}$ and $w_R$, which is not true by Lemma~\ref{lem:spec}.
\end{proof}

By Lemma~\ref{lem:config22}, 
we can define a function $h'_Q$ that takes as input any element $t\in T_Q$ and outputs the value $h_R(t)$, which is the same for all $R\in \mathcal{U}_Q$.

Let $P\subset [k']$ be a size $n-3$ set. Recall that $\mathcal{U}_P$ is the set of all size $n-2$ sets $Q$ such that $P\subset Q \subset [k']$. We conclude this section by proving a lemma similar to Lemma~\ref{lem:config1}, which says that the functions $h'$ for any two sets in $\mathcal{U}_P$ are \emph{consistent}.

\begin{lemma}\label{lem:config2con}
For every size $n-3$ set $P\subset [k']$, for any pair $Q,Q'\in \mathcal{U}_P$, for any character $t\in T_Q\cap T_{Q'}$, $h'_Q(t)=h'_{Q'}(t)$.
\end{lemma}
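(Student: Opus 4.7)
The plan is to mimic the strategy of Lemma~\ref{lem:config1}: exhibit a single set $R$ that simultaneously lies in $\mathcal{U}_Q$ and $\mathcal{U}_{Q'}$, and then apply the definition of $h'$ on each side to pin both values of $h'_Q(t)$ and $h'_{Q'}(t)$ down to the same quantity $h_R(t)$.

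Concretely, I would first reduce to the nontrivial case $Q \neq Q'$. Since $Q, Q' \in \mathcal{U}_P$, each is obtained by adjoining a single character to $P$, so we can write $Q = P \cup \{q\}$ and $Q' = P \cup \{q'\}$ with $q \neq q'$ (the case $q = q'$ makes $Q = Q'$ and the statement is immediate). Next, I would set $R := Q \cup Q' = P \cup \{q, q'\}$. Then $|R| = n-1$, and since $Q \subset R$ and $Q' \subset R$, we have $R \in \mathcal{U}_Q$ as well as $R \in \mathcal{U}_{Q'}$.

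Finally, fix $t \in T_Q \cap T_{Q'}$. By the definition of $h'_Q$ (which is well-defined on $T_Q$ by Lemma~\ref{lem:config22}), since $R \in \mathcal{U}_Q$ we have $h'_Q(t) = h_R(t)$. Symmetrically, since $R \in \mathcal{U}_{Q'}$ and $t \in T_{Q'}$, we have $h'_{Q'}(t) = h_R(t)$. Chaining these two equalities gives $h'_Q(t) = h'_{Q'}(t)$, as required.

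No step here looks like a genuine obstacle: the whole argument hinges only on producing the common witness $R = Q \cup Q'$, which is forced by the fact that $Q$ and $Q'$ differ by exactly one character when both contain the size $n-3$ set $P$. This is exactly the structural device used in Lemma~\ref{lem:config1}, just one level up in the set-size hierarchy (freezing functions $g_R$ of size-$(n-1)$ sets there, semi-freezing functions $h_R$ promoted to $h'_Q$ of size-$(n-2)$ sets here).
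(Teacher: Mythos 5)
Your proof is correct and follows exactly the same route as the paper's: produce the common witness $R = Q \cup Q'$ (a size $n-1$ set lying in $\mathcal{U}_Q \cap \mathcal{U}_{Q'}$) and equate $h'_Q(t)$ and $h'_{Q'}(t)$ to $h_R(t)$ via the well-definedness of $h'$ from Lemma~\ref{lem:config22}. The extra case split on $Q = Q'$ is harmless but unnecessary since the argument works verbatim in that case too.
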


\begin{proof}
Since $Q$ and $Q'$ are each composed by adding a single character to $P$, we have $Q\cup Q'\in\mathcal{U}_Q\cap \mathcal{U}_{Q'}$. Since $Q\cup Q'\in \mathcal{U}_Q$, we know that $h'_Q(t)=h_{Q\cup Q'}(t)$ and since $Q\cup Q'\in \mathcal{U}_{Q'}$, we know that $h'_{Q'}(t)=h_{Q\cup Q'}(t)$. Thus, $h'_Q(t)=h'_{Q'}(t)$.
\end{proof}

\subsubsection{Step 3b: Counting argument for configuration 2}

Like the previous section, in this section we will assume that all size $n-1$ subsets of $K'$ obey configuration 2. The counting argument similar to that from step 3a.


By Lemma~\ref{lem:config2con}, for every size $n-3$ set $P\subset [k']$, we have that all $Q\in\mathcal{U}_P$ agree on the value of $h'_Q(t)$ if it exists. Thus, we can define $H_P$ as the union of $h'_Q$s over all $Q\in\mathcal{U}_P$. Formally, $H_P(t)=i$ if for every $Q\in \mathcal{U}_P$ with $t\in T_Q$, we have $h'_Q(t)=i$. We note that $H_P(t)$ \emph{exists} if for \emph{some} $Q\in \mathcal{U}_P$, $t$ is in the set $T_Q$.

Since $|P|=n-3$ and there are $n$ indices total, $H_P(t)$ can exist for at most $n-3$ characters $t\in P$ and at most 3 characters $t\not\in P$. We say that the pair $(P,t)$ is \emph{irregular} if $H_P(t)$ exists and $t\not\in P$. The quantity that we will count is the total number of irregular pairs $(P,t)$ over all size $n-3$ sets $P\subset [k']$ and all $t\in [k']$. 

On one hand, as previously mentioned, each set $P$ can only be in at most 3 irregular pairs. Then since there are ${k'\choose n-3}$ sets $P\subset [k']$ of size $n-3$, the total number of irregular pairs is at most $3{k'\choose n-3}$.

On the other hand, Lemma~\ref{lem:config2} implies a lower bound on the number of irregular pairs. By Lemma~\ref{lem:config2}, for \emph{every} size $n-2$ set $Q\subset [k']$, the set $T_Q\subset Q$ is of size $n-4$. Fix sets $Q$ and $T_Q$. We claim that for each $t\in T_Q$, the pair $(Q\setminus\{t\},t)$ is an irregular pair. Firstly, it is clear that $t\not\in Q\setminus\{t\}$. Secondly, $H_{Q\setminus\{t\}}(t)$ exists because $Q\in \mathcal{U}_{Q\setminus\{q\}}$ and $t\in T_Q$. Thus, for each $t\in T_Q$, the pair $(Q\setminus\{t\},t)$ is an irregular pair.
 
 Thus, every size $n-2$ set $Q\subset [k']$ \emph{produces} $n-4$ irregular pairs $(Q\setminus\{t\},t)$. Furthermore, given an irregular pair $(P,t)$, there is only one set that could produce it, namely $P\cup\{t\}$. Then since there are ${k'\choose n-2}$ sets $Q\subset [k']$ of size $n-2$, we have that the total number of irregular pairs is at least $(n-4){k'\choose n-2}$.
 
 Thus, we have shown that the total number of irregular pairs is at most $3{k'\choose n-3}$ and at least $(n-4){k'\choose n-2}$. Therefore, we have reached a contradiction if $3{k'\choose n-3}<(n-4){k'\choose n-2}$ which is true if $n\geq 5$ and $k'>\frac{n^2-4n+6}{n-4}$. In particular, $n\geq 5$, $k'\geq n+7$ satisfy these bounds.
\end{proof}

\section{The remaining parameter regime}\label{sec:3}


\begin{theorem}[restatement of Theorem~\ref{thm:3}]\label{thm:3rr}
For $n\geq 3$, $k\geq 5$, every set of functions $f_1^{n,k},\dots, f_n^{n,k}$ has maximum switching cost at least 3.
\end{theorem}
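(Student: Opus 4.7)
The base case $n=3$, $k=5$ is already established by Theorem~\ref{thm:warm}. I reduce the general parameter regime $n\geq 3$, $k\geq 5$ to this base case via two separate extension arguments.

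\emph{Extension in $k$.} Suppose for contradiction that $f_1^{n,k},\dots,f_n^{n,k}$ achieves maximum switching cost at most $2$ for some $n\geq 3$ and $k\geq 5$. Restrict attention to the subfamily $\mathcal{T}$ of demand vectors with $v_6=\cdots=v_k=0$, i.e.\ supported on the first five tasks. Since every $\vec v\in\mathcal T$ places zero demand on tasks $6,\ldots,k$, no agent is ever assigned to those tasks, so each restriction $f_i^{n,k}\big|_{\mathcal T}$ takes values in $[5]$ and thereby defines a valid $(n,5)$ scheme. Two demand vectors in $\mathcal T$ are adjacent in $\mathcal T$ if and only if they are adjacent in $\mathcal S_{n,k}$, so the restricted scheme still has maximum switching cost at most $2$. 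It therefore suffices to prove the claim for $k=5$.

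\emph{Induction in $n$.} I then prove the $(n,5)$ case by induction on $n$, with base case $n=3$ handled by Theorem~\ref{thm:warm}. For $n\geq 4$, assume for contradiction that $(n,5)$ admits a scheme with switching cost at most $2$, and follow the drain strategy outlined in Section~\ref{sec:remain}: starting from $\vec v_0=(n,0,0,0,0)$, at which point every agent is assigned to task $1$, I plan to traverse a carefully chosen path of adjacent demand vectors $\vec v_0,\vec v_1,\ldots,\vec v_L$, each obtained from the previous by moving one unit of demand out of task $1$, until $\vec v_L$ has $v_{L,1}=0$. At every step the switching-cost-$2$ budget is tight: exactly one agent must appear on the newly demanded task, and at most one additional agent can reshuffle. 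The aim is to show that at some step along the path either (i) an adjacent pair already has switching cost at least $3$, contradicting the hypothesis directly, or (ii) on the subfamily of demand vectors visited after $v_1$ drops to zero, the assignments restrict to a valid $(n-1,5)$ scheme on a fixed subset of $n-1$ agents, contradicting the inductive hypothesis.

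\emph{Main obstacle.} The crux is extracting the $(n-1,5)$ subscheme in case~(ii). As explained in Section~\ref{sec:remain}, whenever a task carries slack demand many adjacent pairs can be realized with switching cost $2$ by using that task as a ``reservoir'' for excess agents; the contradiction must therefore be squeezed out exactly at the moment when the reservoir is emptied. The main technical work I anticipate is tracking, along the path, which agent is persistently assigned to task $1$, and showing that a single consistently identifiable agent can be frozen to task $1$ across the whole subfamily of demand vectors with $v_1\geq 1$. Once such a freezing is established, deleting that agent and task $1$ produces the $(n-1,5)$ subscheme needed to close the induction.
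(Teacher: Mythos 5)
Your reduction in $k$ (restricting to demand vectors supported on the first five tasks) is correct and matches the paper's Lemma~\ref{lem:k}. However, the induction in $n$ that you propose for the remaining step is a genuinely different strategy from the paper, and as sketched it has an unresolved gap that I do not believe can be filled in the form you state.

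The key claim you need for the induction -- that some single, globally identifiable agent is frozen to task~$1$ across \emph{all} demand vectors with $v_1\geq 1$ -- is not proved in your proposal, and it is far from automatic. The switching-cost-$2$ budget does allow the identity of the agent occupying task~$1$ to change between adjacent demand vectors even when $v_1$ itself is unchanged: if $\vec v\to\vec v'$ is $(s,t)$-adjacent with $s,t\neq 1$, it is consistent to move the task-$1$ agent to task~$t$ and move a task-$s$ agent onto task~$1$, using exactly two moves. So ``who sits on task~$1$'' can drift as soon as other coordinates vary, and pinning down one agent for the entire $v_1\geq 1$ slab would itself require an argument at least as involved as the structural lemmas the paper develops. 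There is also an internal inconsistency in your description of case (ii): deleting task~$1$ leaves only four tasks, for which $(n-1,4)$ is \emph{not} covered by the theorem (indeed, Su et al.~give a switching-cost-$2$ scheme for $k=4$ and small $n$), so the inductive hypothesis would not apply. What you presumably intend is to keep all five tasks and merely decrement $v_1$ by one after deleting a frozen agent, but that requires the stronger freezing claim across the whole slab, not just along one path.

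By contrast, the paper's proof for $n\geq 4$, $k\geq 5$ does not induct on $n$ and does not attempt to freeze an agent to a task. Instead it classifies tasks into type~1 and type~2 relative to a given demand vector (Lemmas~\ref{lem:dest2}--\ref{lem:all2int}), identifies a type-2 task~$t$ with an intermediate task~$i$ serving as the ``reservoir,'' and then repeatedly moves demand \emph{from $i$ to $t$} (Lemma~\ref{lem:ind}), showing that $t$ remains type~2 with the same intermediate task. The contradiction is squeezed out exactly at the step where $i$ empties, since an empty task cannot be the intermediate task. Note also that your proposed starting point $\vec v_0=(n,0,0,0,0)$ is incompatible with that machinery: from there every adjacent move has switching cost~$1$ (the only nonempty task is task~$1$, so no intermediate task is available), and the paper's structural lemmas require at least four nonzero entries. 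Your plan would need a different entry point and a different way of locating the reservoir.
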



\begin{remark*}
We note that the proof framework from Section~\ref{sec:n<k} immediately breaks down if we try to apply it to Theorem~\ref{thm:3rr} for all $n,k$. For example, when $n>k$, there are no size $n$ subsets of $[k]$ so we must instead consider size $n$ \emph{multisets} of $[k]$. Even if we have the same setting of parameters as Theorem~\ref{thm:3r} but we are considering multisets, in step 1 of the proof framework Lemma~\ref{lem:fix} is no longer true. That is, it is not true that for all size 2 multisets $\{x,y\}$ of $[k]$, we have that $\{x,y\}$ $i$-freezes either $x$ or $y$ for some $i$. In particular, suppose $\{x,y\}=\{a,a\}$. Then if is possible that $\pi(\{a,a,b\})=aab$, $\pi(\{a,a,c\})=aca$, and $\pi(\{a,a,d\})=daa$, in which case $a$ is not frozen to any index.  Since the proof framework from Section~\ref{sec:n<k} no longer applies, we develop entirely new techniques in this section. 
\end{remark*}

For the rest of this section we will use the language of the original problem statement rather than that of the problem reformulation.

\subsection{Preliminaries}

To prove the Theorem~\ref{thm:3rr}, we need to show that Theorem~\ref{thm:warm} extends to larger $k$ and $n$. As noted in Section~\ref{sec:remain}, extending to larger $n$ is challenging, while extending to larger $k$ is trivial, as shown in the following lemma.


\begin{lemma}\label{lem:k}
Fix $n$ and $k$. If there exists a set of functions $f_1^{n,k},\dots, f_n^{n,k}$ with maximum switching cost $D$, then for all $k'<k$, there exists a set of functions $g_1^{n,k'},\dots, g_n^{n,k'}$ with maximum switching cost $D$.
\end{lemma}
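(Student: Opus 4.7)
The plan is to obtain the functions $g_1^{n,k'},\ldots,g_n^{n,k'}$ by a simple padding reduction: embed the small instance into the large one by regarding $[k']$ as the first $k'$ tasks of $[k]$ and extending any demand vector over $[k']$ to a demand vector over $[k]$ by appending $k-k'$ zeros. Concretely, for a demand vector $\vec{v}=(v_1,\ldots,v_{k'})$ with $\sum v_i=n$, let $\vec{v}^{*}=(v_1,\ldots,v_{k'},0,\ldots,0)\in\mathbb{Z}_{\ge 0}^{k}$, which still sums to $n$ and is therefore a valid input to the $f_a^{n,k}$. Then define $g_a^{n,k'}(\vec{v}):=f_a^{n,k}(\vec{v}^{*})$ for every agent $a\in[n]$.

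To justify this definition I would verify the three requirements of the problem statement in turn. First, the output of $g_a^{n,k'}$ lies in $[k']$: because the $f_a^{n,k}$ collectively satisfy the demand $\vec{v}^{*}$, and $v^{*}_i=0$ for all $i>k'$, no agent $a$ can have $f_a^{n,k}(\vec{v}^{*})=i$ for $i>k'$. Second, the demand is satisfied: for each $i\in[k']$, the number of agents $a$ with $g_a^{n,k'}(\vec{v})=i$ equals the number with $f_a^{n,k}(\vec{v}^{*})=i$, which is exactly $v^{*}_i=v_i$.

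Third, the switching-cost bound is preserved. If $\vec{v},\vec{v'}$ are adjacent demand vectors over $[k']$, i.e.\ $|\vec{v}-\vec{v'}|_1=2$, then $\vec{v}^{*}$ and $\vec{v'}^{*}$ differ in exactly the same coordinates (with all added coordinates equal to $0$ in both), so $|\vec{v}^{*}-\vec{v'}^{*}|_1=2$ and the two padded vectors are adjacent in the large instance. By definition of $g_a^{n,k'}$, the set of agents for which $g_a^{n,k'}(\vec{v})\neq g_a^{n,k'}(\vec{v'})$ is exactly the set of agents for which $f_a^{n,k}(\vec{v}^{*})\neq f_a^{n,k}(\vec{v'}^{*})$, which has size at most $D$ by hypothesis. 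Taking the maximum over all adjacent pairs $(\vec{v},\vec{v'})$ in the $k'$-task system gives maximum switching cost at most $D$ for $g_1^{n,k'},\ldots,g_n^{n,k'}$.

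There is no real obstacle here; the lemma is essentially a bookkeeping statement that asserts one can ignore unused tasks. The only thing worth being careful about is that padding preserves both the $\ell_1$-adjacency relation and the identity of which agents change their assignments, both of which are immediate from the fact that the padded coordinates carry zero demand and are therefore unoccupied by any agent.
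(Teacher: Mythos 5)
Your padding construction is exactly the one the paper uses: restrict to demand vectors whose support lies in the first $k'$ tasks and define $g_a^{n,k'}$ by evaluating $f_a^{n,k}$ on the zero-padded vector. The argument is the same (if anything slightly more careful, since you correctly observe the maximum switching cost is \emph{at most} $D$, whereas the paper loosely says ``equal'').
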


\begin{proof}
For each demand vector $\vec{v}$ with $n$ agents and $k$ tasks such that only the first $k'$ entries of $\vec{v}$ are non-zero, let $\vec{v'}$ be the length $k'$ vector consisting of only the first $k'$ entries of $\vec{v}$. We note that the set of all such vectors $\vec{v'}$ is the set of all demand vectors for $n$ agents and $k'$ tasks. Set each $g_i^{n,k'}(\vec{v'})=f_i^{n,k}(\vec{v})$. Then the switching cost for any adjacent pair $(\vec{v'_1},\vec{v'_2})$ with respect to $g_1^{n,k'},\dots, g_n^{n,k'}$ is equal to the switching cost of the corresponding adjacent pair $(\vec{v_1},\vec{v_2})$ with respect to $f_1^{n,k},\dots, f_n^{n,k}$. Thus, the maximum switching cost of $g_1^{n,k'},\dots, g_n^{n,k'}$ is equal to the maximum switching cost of $f_1^{n,k},\dots, f_n^{n,k}$.
\end{proof}


\begin{notation*}
We say that an ordered pair of adjacent demand vectors $(\vec{v_1},\vec{v_2})$ is $(s,t)$-adjacent if starting with $\vec{v_1}$ and moving exactly one unit of demand from task $s$ to task $t$ results in $\vec{v_2}$. We say that an agent $a$ is \emph{$(i,j)$-mobile} with respect to an ordered pair of adjacent demand vectors $(\vec{v_1},\vec{v_2})$ if $f_a^{n,k}(\vec{v_1})=i$, $f_a^{n,k}(\vec{v_2})=j$, and $i\not=j$. 


We note that if $(\vec{v_1},\vec{v_2})$ is $(s,t)$-adjacent and has switching cost 2, then for some task $i$, some agent $a$ must be $(s,i)$-mobile and another agent $b$ must be $(i,t)$-mobile. We say that $i$ is the \emph{intermediate} task with respect to $(\vec{v_1},\vec{v_2})$. 
\end{notation*}


%


\subsection{Proof overview}

We begin by supposing for contradiction that there exists a set of functions $f_1^{n,k},\dots, f_n^{n,k}$ with maximum switching cost 2, and then we prove a series of structural lemmas about such functions. 

As previously mentioned, the main challenge of proving Lemma~\ref{thm:3rr} is handling large $n$. To illustrate this challenge, we repeat the example from Section~\ref{sec:remain}. This example shows that having large $n$ can allow more pairs of adjacent demand vectors to have switching cost 2, making it more difficult to find a pair with switching cost greater than 2. 

Consider the subset $S_i$ of demand vectors in which a particular task $i$ has an unconstrained amount of demand and each remaining task has demand at most $n/(k-1)$. We claim that there exists a set of functions $f_1^{n,k},\dots, f_{n}^{n,k}$ so that every pair of adjacent demand vectors from $S_i$ has switching cost 2. Divide the agents into $k-1$ groups of $n/(k-1)$ agents each, and associate each task except $i$ to such a group of agents. We define the functions $f_1^{n,k},\dots, f_{n}^{n,k}$ so that given any demand vector in $S_i$, the set of agents assigned to each task except $i$ is simply a subset of the group of agents associated with that task (say, the subset of such agents with smallest ID). This is a valid assignment since the demand of each task except $i$ is at most the size of the group of agents associated with that task. The remaining agents are assigned to task $i$. Then, given a pair $(\vec{v},\vec{v'})$ of adjacent demand vectors in $S_i$, whose demands differ only for tasks $s$ and $t$, their switching cost is 2 because the only agents assigned to different tasks between $\vec{v}$ and $\vec{v'}$ are: one agent from each of the groups associated with tasks $s$ and $t$, respectively.

To overcome the challenge illustrated by the above example, our general method is to identify a task that serves the role of task $i$ and then successively move demand out of task $i$ until task $i$ is empty, and thus can no longer serve its original role. We note that in the above example, the task $i$ serves as the intermediate task for all pairs of adjacent demand vectors from $S_i$. Thus, we will choose $i$ to be an intermediate task.

In particular, we show that there is a demand vector $\vec{v}$ so that we can identify tasks $i$ and $t$ with the following important property: if we start with $\vec{v}$ and  move a unit of demand to task $t$ from \emph{any} other task except $i$, the switching cost is 2 and the intermediate task is $i$. 

Furthermore, we prove that if we start with demand vector $\vec{v}$ and move a unit of demand from task $i$ to task $t$ resulting in demand vector $\vec{v_1}$, then $t$ and $i$ have the important property from the previous paragraph with respect to $\vec{v_1}$. Applying this argument inductively, we show that no matter how many units of demand we successively move from $i$ to $t$, $i$ and $t$ still satisfy the important property with respect to the current demand vector. 

We move demand from $i$ to $t$ until task $i$ is empty. Then, the final contradiction comes from the fact that if we now move a unit of demand from any non-$i$ task to $t$, then the important property implies that the switching cost is 2 and the intermediate task is $i$; however, $i$ is empty and an empty task cannot serve as an intermediate task.\\

\subsection{Proof of Theorem~\ref{thm:3rr}}\label{app:3}
Theorem~\ref{thm:warm} proves Theorem~\ref{thm:3rr} for the case of $n=3$ and $k=5$. Lemma~\ref{lem:k} implies that Theorem~\ref{thm:3rr} also holds for $n=3$ and any $k\geq 5$. Thus, it remains to prove Theorem~\ref{thm:3rr} for $n\geq 4$ and $k\geq 5$. Suppose by way of contradiction that $n\geq 4$, $k\geq 5$, and $f^{n,k}_1,f^{n,k}_2,\dots,f^{n,k}_n$ is a set of functions with switching cost 2.


As motivated in the algorithm overview, our first structural lemma concerns tasks $i$ and $t$ such that if we move a unit of demand to task $t$ from any other task except $i$, the switching cost is 2 and the intermediate task is $i$.

\begin{lemma}\label{lem:dest2}
Let $(\vec{v},\vec{v_1})$ be a pair of $(s_1,t)$-adjacent demand vectors with switching cost 2 and intermediate task $i$. Then, for all $\vec{v_2}$ such that $(\vec{v},\vec{v_2})$ are $(s_2,t)$-adjacent for $s_2\not=i$, the pair $(\vec{v},\vec{v_2})$ has switching cost 2, intermediate task $i$, and the same $(i,t)$-mobile agent as $(\vec{v},\vec{v_1})$. 
\end{lemma}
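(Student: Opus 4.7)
The plan is to compare the three demand vectors $\vec{v}$, $\vec{v_1}$, $\vec{v_2}$ and exploit that $(\vec{v_1},\vec{v_2})$ is itself an adjacent pair whose switching cost must also be at most~$2$. First I would dispose of the trivial case $s_1 = s_2$: in that case $\vec{v_1} = \vec{v_2}$ and there is nothing to prove. Assuming $s_1 \neq s_2$, the difference $\vec{v_1} - \vec{v_2}$ has $\ell_1$ norm $2$, so $(\vec{v_1},\vec{v_2})$ is $(s_2,s_1)$-adjacent and by hypothesis has switching cost at most~$2$. Let $a$ be the $(s_1,i)$-mobile and $b$ the $(i,t)$-mobile agent of $(\vec{v},\vec{v_1})$; these are the only agents whose assignment differs between $\vec{v}$ and $\vec{v_1}$. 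Observe also that $s_1, i, t$ are pairwise distinct (from the adjacency and the moves of $a,b$) and $s_2$ differs from each of $s_1, i, t$ (from $s_1 \neq s_2$, the hypothesis $s_2 \neq i$, and $(s_2,t)$-adjacency).

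Next I would show that $(\vec{v},\vec{v_2})$ has switching cost exactly~$2$ with intermediate task~$i$. The key observation is that any agent whose $\vec{v}$-assignment differs from exactly one of its $\vec{v_1}$- and $\vec{v_2}$-assignments contributes one unit to the switching cost of $(\vec{v_1},\vec{v_2})$, so the count of such agents is at most~$2$. The switching cost of $(\vec{v},\vec{v_2})$ cannot be~$1$: a single agent $c$ moving directly $s_2 \to t$ would be distinct from both $a$ and $b$ (using that $s_2 \notin \{s_1,i\}$ and $t \notin \{s_1,i\}$), producing three agents $a,b,c$ whose tasks differ between $\vec{v_1}$ and $\vec{v_2}$. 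If instead the switching cost is~$2$ with intermediate $j \neq i$ (agents $c: s_2 \to j$ and $d: j \to t$), the identifications $a=c$, $b=c$, $b=d$ are ruled out by $s_1 \neq s_2$, $i \neq s_2$, $i \neq j$ respectively; only $a=d$ (i.e.\ $s_1 = j$) is consistent, and even then $a,b,c$ remain three distinct movers between $\vec{v_1}$ and $\vec{v_2}$. Either way the switching cost of $(\vec{v_1},\vec{v_2})$ would be at least~$3$, contradicting our assumption. Hence $j = i$.

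Finally, with $j = i$ the agents involved in $(\vec{v},\vec{v_2})$ are $c : s_2 \to i$ and $d : i \to t$. The hypotheses $s_1 \notin \{s_2,i\}$ and $s_2 \neq i$ force $a \notin \{c,d\}$ and $b \neq c$; the only open question is whether $b = d$. If $b \neq d$ then $a,b,c,d$ are four distinct agents whose tasks all change between $\vec{v_1}$ and $\vec{v_2}$, giving switching cost~$4$, again a contradiction. Thus $b = d$, so $b$ is the $(i,t)$-mobile agent of $(\vec{v},\vec{v_2})$ as well, completing the proof.

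The main (mild) obstacle is the bookkeeping in the middle step: several sub-cases describe how the moving pair $\{a,b\}$ of $(\vec{v},\vec{v_1})$ can overlap with the would-be moving pair $\{c,d\}$ of $(\vec{v},\vec{v_2})$, and each must be ruled out by counting distinct movers between $\vec{v_1}$ and $\vec{v_2}$ and invoking the switching-cost bound. Individually each sub-case is a one-line check; the discipline lies in systematically tracking which identifications among $s_1, s_2, i, j, t$ the hypotheses $s_1 \neq s_2$ and $s_2 \neq i$ permit.
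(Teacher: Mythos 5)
Your proof is correct and takes essentially the same approach as the paper: treat $(\vec{v_1},\vec{v_2})$ as a new adjacent pair with switching cost at most~$2$, then derive a contradiction by exhibiting three (or four) distinct agents that change tasks between $\vec{v_1}$ and $\vec{v_2}$. Your case analysis is slightly more explicit than the paper's---in particular you cleanly separate the sub-case ``same intermediate task $i$, different $(i,t)$-mobile agent'' from ``different intermediate task'', whereas the paper's Case~2 justifies $d\neq b$ by asserting $i\neq i_2$, which is only one of those two sub-cases---but the underlying argument is identical.
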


\begin{proof}
Table~\ref{tab1} depicts the proof.

\begin{table}[h]
\centering
\begin{tabular}{c|c|c|c|c}
&$\bm{s_1}$&$\bm{i}$&$\bm{t}$&$\bm{s_2}$\\
\hline
$\vec{\bm{v}}$&$a$&$b$&&$c$\\
\hline
$\vec{\bm{v_1}}$&&$a$&$b$&$c$\\
\hline
\hline
$\vec{\bm{v_2}}$ \bf(case 1) &$a$&$b$&$c$&\\
\hline
$\vec{\bm{v_2}}$ \bf(case 2) &&$b$&$d$&not $c$
\end{tabular}
\caption{Demand vectors and the corresponding assignment of agents. For example, the  row labeled $\vec{v}$ indicates that for the demand vector $\vec{v}$, agent $a$ is assigned to task $s_2$, agent $b$ is assigned to task $i$, and agent $c$ is assigned to task $s_2$. There could also be other agents in the system that are not shown in the table.}
\label{tab1}
\end{table}

With respect to $(\vec{v},\vec{v_1})$, let $a$ be the $(s_1,i)$-mobile agent and let $b$ be the $(i,t)$-mobile agent. Then $a$ and $b$ behave according to rows $\vec{v}$ and $\vec{v_1}$ of Table~\ref{tab1}. 

Suppose by way of contradiction that $(\vec{v},\vec{v_2})$ is \emph{not} as in the lemma statement. That is, either $(\vec{v},\vec{v_2})$ has switching cost 1 or $(\vec{v},\vec{v_2})$ has switching cost 2 and either a different intermediate task from $(\vec{v},\vec{v_1})$ or a different $(i,t)$-mobile agent.

\subparagraph{Case 1. $(\vec{v},\vec{v_2})$ has switching cost 1.} Let $c$ be the mobile agent with respect to $(\vec{v},\vec{v_2})$. Then, $\vec{v_2}$ is as in row $\vec{v_2}$ (case 1) of Table~\ref{tab1}. Also, since $c$ is not mobile with respect to $(\vec{v},\vec{v_1})$, $c$ is assigned to $s_2$ for both $\vec{v}$ and $\vec{v_1}$ as shown in Table~\ref{tab1}. Comparing rows $\vec{v_1}$ and $\vec{v_2}$ (case 1) of Table~\ref{tab1}, it is clear that $(\vec{v_1},\vec{v_2})$ are adjacent and have switching cost 3, since $a$, $b$, and $c$ each switch tasks. This is a contradiction.

\subparagraph{Case 2. $(\vec{v},\vec{v_2})$ has switching cost 2.} Let $i_2$ be the intermediate task of $(\vec{v},\vec{v_2})$ and let $c$ be the $(s_2,i_2)$-mobile agent for $(\vec{v},\vec{v_2})$. Then, for $\vec{v_2}$, $c$ is not assigned to $s_2$, as shown in Table~\ref{tab1}. Let $d$ be the $(i_2,t)$-mobile agent for $(\vec{v},\vec{v_2})$. We note that it is possible that $d=a$, however $d\not=b$ since $d$ is assigned to $i_2$ for $\vec{v}$ while $b$ is assigned to $i$, and $i\not=i_2$. Table~\ref{tab1} shows the positions of $b$ and $d$ (but not $a$) in $\vec{v_2}$. Comparing rows $\vec{v_1}$ and $\vec{v_2}$ (case 2) of Table~\ref{tab1}, it is clear that $(\vec{v_1},\vec{v_2})$ has switching cost 3, since $b$, $d$, and $c$ each switch tasks. Since $(\vec{v_1},\vec{v_2})$ are adjacent, this is a contradiction.
\end{proof}

We have just shown in Lemma~\ref{lem:dest2} that with respect to any demand vector $\vec{v}$, the set of tasks can be split into two distinct types such that every task is of exactly one type.

\begin{definition}[type 1 task] A task $t$ is of \emph{type 1} with respect to a demand vector $\vec{v}$ if when we start with $\vec{v}$ and move a unit of demand from any task to task $t$, the switching cost is 1. 
\end{definition}

\begin{definition}[type 2 task] A task $t$ is of \emph{type 2} with respect to a demand vector $\vec{v}$ if there exists a task $i$ and an agent $a$ such that when we start with $\vec{v}$ and move a unit of demand from any task except $i$ to task $t$, the switching cost is 2, the intermediate task is $i$, and the $(i,t)$-mobile agent is $a$. We say that $a$ is the \emph{intermediate agent} of $t$ with respect to $\vec{v}$.
\end{definition}

\begin{remark*}We note that if $\vec{v}$ only has two non-empty tasks besides $t$, then it is possible that the identity of task $i$ is ambiguous. However, every time we reference a type 2 task we always have the condition that there are at least three non-empty tasks besides $t$ so there will be no ambiguity.
\end{remark*}


As mentioned in the proof overview we wish to successively move demand out of an intermediate task until it is empty. The bulk of the remainder of the proof is to prove the following lemma (Lemma~\ref{lem:ind}), which roughly says that if $t$ is a type 2 task and $i$ is $t$'s intermediate task, then after we move a unit of demand from task $i$ to task $t$, task $t$ remains a type 2 task with intermediate task $i$. Then, by iterating Lemma~\ref{lem:ind}, we show that after moving any amount of demand from task $i$ to task $t$, task $t$ still remains a type 2 task with intermediate task $i$.


\begin{lemma}\label{lem:ind}
Let $\vec{v}$ be a demand vector with at least four non-zero entries. Then there exists a task $t$ such that $t$ is of type 2 with respect to $\vec{v}$ and $\vec{v}$ has at least four non-empty tasks distinct from $t$. Let $i$ be the intermediate task of $t$ with respect to $\vec{v}$. Let $\vec{v'}$ be such that $(\vec{v},\vec{v'})$ are $(i,t)$-adjacent. Then, $t$ is a type 2 task with intermediate task $i$ with respect to $\vec{v'}$. 
\end{lemma}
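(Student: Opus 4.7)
The lemma has two parts: existence of a suitable task $t$ and inheritance of type 2 with intermediate $i$ after the $i\to t$ move. I would handle them separately, devoting most of the effort to the inheritance claim.

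For existence, I would split on the number of non-empty tasks of $\vec{v}$. If $\vec{v}$ has at least five non-empty tasks, any type 2 task $t$ automatically has four non-empty tasks distinct from it; if $\vec{v}$ has exactly four, I would choose $t$ to be an empty task, at least one of which exists since $k\geq 5$. In both settings I must exhibit a type 2 task of the required form. Assuming for contradiction that every candidate is type 1 w.r.t.\ $\vec{v}$, every adjacent pair with destination $t$ has switching cost $1$, so a unique agent executes each $s\to t$ move. Combining these rigid single-agent moves across several choices of source and destination, and following two different sequences to the same final demand vector, one forces some agent to occupy two different tasks in a single assignment, a contradiction. This step relies on $n\geq 4$ and $k\geq 5$ essentially.

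For inheritance, fix $t$ type 2 w.r.t.\ $\vec{v}$ with intermediate $i$ and intermediate agent $a$, and let $\vec{v}'=\vec{v}_{i\to t}$. By Lemma~\ref{lem:dest2}, it suffices to produce a single source $s'\neq i$ with $\vec{v}'_{s'}\geq 1$ such that $(\vec{v}',\vec{v}'_{s'\to t})$ has switching cost $2$ with intermediate $i$; such an $s'$ exists because $\vec{v}'$ still has at least three non-empty tasks distinct from $t$ and $i$. The central construction is the 4-cycle of adjacent demand vectors $\vec{v},\vec{v}',\vec{v}''=\vec{v}'_{s'\to t},\vec{w}=\vec{v}_{s'\to t}$, with respective assignments $\sigma,\sigma',\sigma'',\sigma^w$. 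Type 2 of $t$ w.r.t.\ $\vec{v}$ tells us $\sigma^w$ differs from $\sigma$ only in that $a$ moves $i\to t$ and some $b'$ at $s'$ moves $s'\to i$. Because the step $\vec{v}''\to\vec{w}$ has direction $(t,i)$ and $\sigma^w(a)=t$, agent $a$ cannot be a mobile agent in that step, forcing $\sigma''(a)=t$.

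The main case analysis is then on the step $\vec{v}\to\vec{v}'$. In the benign cases---cost $2$ with intermediate distinct from $t$, or cost $1$ with a moving agent other than $a$---one deduces $\sigma'(a)=i$; combined with $\sigma''(a)=t$ and the demand change at $s'$ and $t$, the step $\vec{v}'\to\vec{v}''$ must have cost $2$, with the $(i,t)$-mobile agent equal to $a$ and intermediate task $i$. The principal obstacle is the residual case in which $(\vec{v},\vec{v}')$ has cost $1$ with $a$ itself moving $i\to t$, so that $\sigma'(a)=t$: here the cycle does not directly determine the intermediate of $(\vec{v}',\vec{v}'')$. To rule this case out, I would repeat the 4-cycle argument for a second source $s''\neq i,s',t$ with $\vec{v}_{s''}\geq 1$---available precisely because of the distinctness hypothesis---and combine the resulting constraints on $\sigma'$ with Lemma~\ref{lem:dest2} applied to all sources. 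This produces an adjacent pair whose forced switching cost strictly exceeds $2$, contradicting the standing assumption. The distinctness hypothesis guaranteeing at least four non-empty tasks distinct from $t$ is tailored precisely to give the elbow room needed to execute this step.
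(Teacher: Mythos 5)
Your proposal takes a genuinely different route from the paper's, but it has two real gaps, one in each half.

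\textbf{Existence part.} The paper does not split on the count of non-empty tasks; it assumes for contradiction that every type 2 task $t'$ fails the distinctness condition, infers that every \emph{empty} task is then type 1, applies Lemma~\ref{lem:tech} to show the intermediate task $i$ of some type 2 task (guaranteed by Lemma~\ref{lem:one2}) is type 1, and then invokes Lemma~\ref{lem:all2int} to conclude $i$ is the \emph{only} type 1 task. Since $i$ is non-empty, there are no empty tasks at all, which contradicts $k\geq 5$. Your sketch, by contrast, wants to pick an empty task $t$ when there are exactly four non-empty tasks and then argue $t$ is type 2, but the argument you describe (following two cost-1 chains to a collision) only yields that \emph{some} type 2 task exists, not that your chosen empty $t$ is type 2. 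That gap is precisely where the paper needs the sharper machinery of Lemmas~\ref{lem:tech} and~\ref{lem:all2int}.

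\textbf{Inheritance part.} The paper first proves $(\vec{v},\vec{v'})$ has switching cost $1$ (Lemma~\ref{lem:sw1}), then shows $i$ is type 1 with respect to $\vec{v'}$, then uses Lemma~\ref{lem:all2} to conclude $i$ is the unique type 1 task so $t$ is type 2, and finally applies Lemma~\ref{lem:tech} to force $t$'s intermediate task to be the unique type 1 task $i$. Your 4-cycle-plus-Lemma~\ref{lem:dest2} strategy is a different decomposition and avoids routing through type-1 uniqueness; the parts you worked out (e.g.\ that $\sigma''(a)=t$) are correct. But the case analysis on the step $\vec{v}\to\vec{v'}$ is wrong: the case you label benign, ``cost $2$ with intermediate distinct from $t$,'' is in fact the whole cost-$2$ case (the intermediate is never $t$), and it is \emph{not} benign. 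If the step has cost $2$ with intermediate $j$ and $a$ is the $(i,j)$-mobile agent, then $\sigma'(a)=j\neq i$, and your chain then gives $a$ moving $j\to t$ in the step $\vec{v'}\to\vec{v''}$, i.e.\ cost 2 with intermediate $j\neq i$. That is exactly the bad outcome, and your text does not flag it, let alone rule it out; you only identify the cost-$1$ residual case. This is substantive: the paper rules this out globally via Lemma~\ref{lem:sw1}, and without an analogue of that lemma (or the type-1 uniqueness route), your direct argument is incomplete. Your plan for the cost-$1$ residual case via a second source $s''$ is also left as a sketch, so it is hard to assess, but the cost-$2$ omission alone already breaks the argument as written.
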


%

\paragraph*{Lemma~\ref{lem:ind} implies Theorem~\ref{thm:3rr}.}
Let $\vec{v}$, $t$, and $i$ be as in Lemma~\ref{lem:ind}. We claim that if task $i$ is non-empty in $\vec{v'}$ then the triple ($\vec{v'}$, $t$, $i$) also satisfies the precondition of Lemma~\ref{lem:ind}. This is because if task $i$ is non-empty in $\vec{v'}$ then the set of non-empty tasks in $\vec{v'}$ is a superset of the set of non-empty tasks in $\vec{v'}$. Then since $\vec{v}$ has at least four non-empty tasks distinct from $t$, $\vec{v'}$ also has at least four non-empty tasks distinct from $t$. Also, by Lemma~\ref{lem:ind}, $t$ is a type 2 task with intermediate task $i$ with respect to $\vec{v'}$. Thus, we have shown that if task $i$ is non-empty for $\vec{v'}$ then ($\vec{v'}$, $t$, $i$) satisfy the precondition of Lemma~\ref{lem:ind}. Thus, we can iterate Lemma~\ref{lem:ind}: if we start with $\vec{v}$ and successively move demand from task $i$ to task $t$ until task $i$ is empty, the resulting demand vector $\vec{v''}$ is such that $t$ is a type 2 task with intermediate task $i$. However, it is impossible for $i$ to be an intermediate task with respect to $\vec{v''}$ since $i$ is empty. It remains to prove Lemma~\ref{lem:ind}.


\subsubsection{Proof of Lemma~\ref{lem:ind}}

The following lemma shows that the pair $(\vec{v},\vec{v'})$ from the statement of Lemma~\ref{lem:ind} has switching cost 1.

\begin{lemma}\label{lem:sw1}
Let $t$ be a type 2 task with intermediate task $i$ with respect to a demand vector $\vec{v}$. Suppose $\vec{v}$ has at least one unit of demand in each of two tasks $s_1$ and $s_2$, both distinct from $t$ and $i$. Let $\vec{v'}$ be the demand vector such that $(\vec{v},\vec{v'})$ is $(i,t)$-adjacent. Then $(\vec{v},\vec{v'})$ has switching cost 1.
%
\end{lemma}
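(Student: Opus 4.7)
The plan is to prove Lemma~\ref{lem:sw1} by locating the intermediate agent $a$ of $t$ under the new demand vector $\vec{v'}$, and then arguing that $a$ is the unique mobile agent. Let $a$ be the intermediate agent of $t$ with respect to $\vec{v}$, so $f_a^{n,k}(\vec{v}) = i$. For each $j \in \{1,2\}$, let $\vec{v_j}$ be the demand vector obtained from $\vec{v}$ by moving one unit of demand from $s_j$ to $t$; this is well defined because $s_j$ has positive demand under $\vec{v}$. By the definition of a type~2 task, each pair $(\vec{v}, \vec{v_j})$ has switching cost $2$, intermediate task $i$, and $(i,t)$-mobile agent $a$, so in particular $f_a^{n,k}(\vec{v_j}) = t$ for $j = 1, 2$.

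The crux of the argument is to locate $f_a^{n,k}(\vec{v'})$. For each $j$, consider the pair $(\vec{v_j}, \vec{v'})$: these vectors are $(i, s_j)$-adjacent (both are reached from $\vec{v}$ by moving a single unit to $t$ from $s_j$ and from $i$, respectively), so their switching cost is at most $2$. I split into cases. If the switching cost is $1$, the unique mobile agent moves from $i$ to $s_j$, so agent $a$, who sits at $t$ in $\vec{v_j}$, stays at $t$ in $\vec{v'}$. If the switching cost is $2$, two agents move along a path $i \to j' \to s_j$ for some intermediate task $j'$; since $a$ is at $t$ (not $i$) in $\vec{v_j}$, $a$ can only participate as the second agent, which requires $j' = t$, in which case $a$ ends up at $s_j$ under $\vec{v'}$. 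Either way, $f_a^{n,k}(\vec{v'}) \in \{t, s_j\}$. Applying this to both $j = 1$ and $j = 2$ and using $s_1 \neq s_2$ forces $f_a^{n,k}(\vec{v'}) = t$.

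To finish, note that $a$ is therefore $(i,t)$-mobile with respect to the pair $(\vec{v}, \vec{v'})$, already realizing the entire required net demand change of $-1$ at $i$ and $+1$ at $t$. Suppose for contradiction the switching cost of $(\vec{v}, \vec{v'})$ is $2$. Then there is one additional mobile agent $b \neq a$ whose move is from some task $p$ to some task $q$ with $p \neq q$; this contributes $-1$ to the demand at $p$ and $+1$ at $q$. Combined with $a$'s move, the total demand change at task $p$ is strictly negative and at task $q$ is strictly positive, but the only tasks whose demands differ between $\vec{v}$ and $\vec{v'}$ are $i$ and $t$, which are already balanced by $a$ alone. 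This is a contradiction, so $(\vec{v}, \vec{v'})$ has switching cost $1$.

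The only step I expect to require some care is the case analysis in the middle paragraph, where I must rule out the possibility that $a$ drifts to an unforeseen third task under $\vec{v'}$; the trick is to view the transition $\vec{v_j} \to \vec{v'}$ rather than $\vec{v} \to \vec{v'}$, which constrains $a$'s possible destinations sharply enough that intersecting over $j = 1, 2$ pins $a$ down at $t$.
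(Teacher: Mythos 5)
Your proof is correct, and it takes a genuinely different route from the paper's. The paper argues by direct contradiction: it supposes $(\vec{v},\vec{v'})$ has switching cost $2$ with some intermediate task $i'$, then runs a case analysis on the identity of the $(i',t)$-mobile agent (whether it is $a$, $c$, or a fresh agent $d$), and in each case exhibits three agents whose assignments differ between $\vec{v_2}$ and $\vec{v'}$, contradicting the standing assumption that all adjacent pairs have switching cost at most $2$. Your argument instead pins down the image of the intermediate agent $a$ directly: by looking at the adjacencies $(\vec{v_1},\vec{v'})$ and $(\vec{v_2},\vec{v'})$ (which are $(i,s_1)$- and $(i,s_2)$-adjacent respectively), you show that $f_a^{n,k}(\vec{v'})$ lies in $\{t,s_1\}\cap\{t,s_2\}=\{t\}$, so $a$ is $(i,t)$-mobile for $(\vec{v},\vec{v'})$; then a simple conservation-of-demand argument shows no second agent can move. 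The two proofs thus exploit the hypothesis about $s_1$ and $s_2$ in structurally different ways: the paper uses them to populate a table of reference permutations against which a contradiction is measured, whereas you use them as two independent constraints whose intersection locates $a$. Your final demand-balance step is phrased slightly loosely (``strictly negative at $p$'' fails when $p=t$, since $a$ contributes $+1$ there), but the intended and correct claim --- that $a$ alone accounts for the full demand difference, forcing any additional mover's net contribution to be zero, which a single move from $p$ to $q\neq p$ cannot achieve --- is transparent and closes the gap; it is worth tightening that sentence. Overall your version is arguably cleaner, avoiding the paper's case split on which agent occupies the $(i',t)$ role.
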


\begin{proof}
 Table~\ref{tab2} depicts the proof.

\begin{table}[h]
\centering
\begin{tabular}{c|c|c|c|c}
&$\bm{s_1}$&$\bm{i}$&$\bm{t}$&$\bm{s_2}$\\
\hline
$\vec{\bm{v}}$&$a$&$b$&not $d$&$c$\\
\hline
$\vec{\bm{v_1}}$&&$a$&$b$&$c$\\
\hline
$\vec{\bm{v_2}}$&$a$&$c$&$b$, not $d$&\\
\hline
\hline
$\vec{\bm{v'}}$ \bf(case 1) &&&$a$, not $b$&$c$\\
\hline
$\vec{\bm{v'}}$ \bf(case 2) &$a$&&$d$, not $b$&$c$
\end{tabular}
\caption{Demand vectors and the corresponding assignment of agents.}
\label{tab2}
\end{table}
Let $\vec{v_1}$ be such that $(\vec{v},\vec{v_1})$ is $(s_1,t)$-adjacent and let $\vec{v_2}$ be such that $(\vec{v},\vec{v_2})$ is $(s_2,t)$-adjacent.
With respect to $(\vec{v},\vec{v_1})$, let $a$ be the $(s_1,i)$-mobile agent and let $b$ be the $(i,t)$-mobile agent. Then $a$ and $b$ behave according to rows $\vec{v}$ and $\vec{v_1}$ of Table~\ref{tab1}.

With respect to $(\vec{v},\vec{v_2})$, let $c$ be the $(s_2,i)$-mobile agent. From Lemma~\ref{lem:dest2} we know that $b$ is the $(i,t)$-mobile agent for $(\vec{v},\vec{v_2})$. Thus $b$ and $c$ behave according to rows $\vec{v}$ and $\vec{v_2}$ of Table~\ref{tab1}.

Suppose by way of contradiction that $(\vec{v},\vec{v'})$ has switching cost 2. Let $i'\not=i,t$ be the intermediate task.  We condition on the $(i',t)$-mobile agent. We already know that it is not $b$ since $b$ is assigned to $i$ for $\vec{v}$.

\subparagraph{Case 1. the $(i',t)$-mobile agent for $(\vec{v},\vec{v'})$ is $a$ or $c$.} Suppose the $(i',t)$-mobile agent for $(\vec{v},\vec{v'})$ is $a$, as shown in row $\vec{v'}$ (case 1) of Table~\ref{tab2}. If the $(i',t)$-mobile agent is $c$, the argument is identical. Since we are assuming $c$ is not a mobile-agent, $c$ is assigned to $s_2$ in $\vec{v'}$ as shown in Table~\ref{tab2}. Also, since $a$ is the only $(i',t)$-mobile agent, we know that $b$ is not assigned to $t$ in $\vec{v'}$ as shown in Table~\ref{tab2}. Comparing rows $\vec{v_2}$ and $\vec{v'}$ (case 1) of Table~\ref{tab2}, it is clear that $(\vec{v_2},\vec{v'})$ have switching cost 3, since $a$, $b$, and $c$ all switch tasks. Also, $\vec{v_2}$ and $\vec{v'}$ are adjacent since both are the result of starting with $\vec{v}$ and moving one unit of demand from some task to task $t$. This is a contradiction.

\subparagraph{Case 2. the $(i',t)$-mobile agent for $(\vec{v},\vec{v'})$ is neither $a$ nor $c$.} Let $d$ be the $(i',t)$-mobile agent for $(\vec{v},\vec{v_3})$. Then $d$, $a$, and $c$ are assigned as in row $\vec{v'}$ (case 2) of Table~\ref{tab2}. Also, since $d$ is the only $(i',t)$-mobile agent and $d\not=b$, we know that $b$ is not assigned to $t$ in $\vec{v'}$, as shown in Table~\ref{tab2}. 

Also, since $d$ is the $(i',t)$-mobile agent for $(\vec{v},\vec{v'})$, we know that $d$ is not assigned to $t$ in $\vec{v}$, as shown in Table~\ref{tab2}. Then, since $b$ is the only $(i,t)$-mobile agent for $(\vec{v},\vec{v_2})$, we know that $d$ is also not assigned to $t$ for $\vec{v_2}$, as shown in Table~\ref{tab2}.

Comparing rows $\vec{v_2}$ and $\vec{v'}$ (case 2) of Table~\ref{tab2}, it is clear that $(\vec{v_2},\vec{v_3})$ have switching cost 3, since $d$, $b$, and $c$ all switch tasks. This is a contradiction since $\vec{v_2}$ and $\vec{v'}$ are adjacent.
\end{proof}

Next, we prove another structural lemma concerning the intermediate task $i$, which says that task $i$ is of type 1.

\begin{lemma}\label{lem:tech}
Let $\vec{v}$ be a demand vector with at least four non-zero entries and let $t$ be a type 2 task with intermediate task $i$. Then, task $i$ is of type 1 with respect to $\vec{v}$.
\end{lemma}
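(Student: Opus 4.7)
The plan is to assume for contradiction that task $i$ is of type 2 with respect to $\vec{v}$. Let $j$ and $b$ be, respectively, the intermediate task of $i$ and the corresponding intermediate agent, so that $b$ is assigned to task $j$ in $\vec{v}$ and $\vec{v}[j]\geq 1$. The strategy is to build two demand vectors that are each adjacent to $\vec{v}$ and adjacent to each other, and to argue that their agent assignments must disagree in at least three positions, contradicting the assumed switching-cost bound of 2.

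First, I would choose a source task $s$ with $\vec{v}[s]\geq 1$ such that $s\notin\{i,j,t\}$ if $j\neq t$, or $s\notin\{i,t\}$ if $j=t$. Such an $s$ exists because $\vec{v}$ has at least four non-zero entries, including $i$ and $j$ (and also $t$ in the subcase $j=t$, since $b$ sits at $t$ there). Let $\vec{v_1}$ be obtained from $\vec{v}$ by moving one unit of demand from $s$ to $t$, and let $\vec{v_2}$ be obtained from $\vec{v}$ by moving one unit of demand from $s$ to $i$. By the type-2 assumption on $t$ with intermediate $i$, the move $(\vec{v},\vec{v_1})$ has switching cost 2, with some $(s,i)$-mobile agent $c_s$ and the fixed $(i,t)$-mobile agent $a$. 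By the hypothetical type-2 assumption on $i$ with intermediate $j$, the move $(\vec{v},\vec{v_2})$ has switching cost 2, with some $(s,j)$-mobile agent $d_s$ and the fixed $(j,i)$-mobile agent $b$ (in the subcase $j=t$, $d_s$ is instead the agent $e_s$ that moves $s\to t$). Since $\vec{v_1}$ and $\vec{v_2}$ are $(t,i)$-adjacent, their switching cost is at most 2.

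The contradiction will come from inspecting the positions of the named agents in $\vec{v_1}$ versus $\vec{v_2}$. Agent $a$ differs ($t$ in $\vec{v_1}$ vs $i$ in $\vec{v_2}$); agent $b$ differs ($j$ in $\vec{v_1}$ vs $i$ in $\vec{v_2}$, and in the subcase $j=t$ agent $b$ remains at $t$ in $\vec{v_1}$ since only $c_s$ and $a$ moved, and transits to $i$ in $\vec{v_2}$); and at least one agent among $c_s,d_s$ differs, because even if $c_s=d_s$, this single agent is at $i$ in $\vec{v_1}$ and at $s$ (or $j$, respectively $t$) in $\vec{v_2}$. The agents $a$, $b$, and any agent sitting at $s$ are pairwise distinct because they occupy pairwise distinct tasks in $\vec{v}$. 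Thus at least three agents switch between $\vec{v_1}$ and $\vec{v_2}$, yielding switching cost at least 3, contradicting adjacency.

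The main obstacle will be the subcase $j=t$, where $b$ lives on the destination task $t$ rather than on a separate task, and where the agents $c_s$ and $e_s$ might coincide. Careful bookkeeping still produces three pairwise distinct agents that switch, so the argument goes through uniformly; the remaining case $j\neq t$ is the easier one since $a$, $b$, and the agent(s) at $s$ are manifestly placed on four different tasks in $\vec{v}$.
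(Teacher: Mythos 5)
Your proof uses essentially the same approach as the paper's: pick a non-empty source task $s$ distinct from $i$, $t$, and $i$'s supposed intermediate $j$, form $\vec{v_1}$ and $\vec{v_2}$ by moving one unit from $s$ to $t$ and from $s$ to $i$ respectively, and then show that the adjacent pair $(\vec{v_1},\vec{v_2})$ forces three pairwise-distinct agents (the $(s,i)$-mobile, the $(i,t)$-mobile, and the $(j,i)$-mobile) to switch. The paper's proof tracks the same three agents (labeled $a$, $b$, $c$) and reaches the same contradiction, without an explicit $j=t$ case split since the argument is uniform; your extra case analysis is correct but not strictly needed.
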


\begin{proof}
Table~\ref{tabb} depicts the proof. 

\begin{table}[h]
\centering
\begin{tabular}{c|c|c|c}
&$\bm{s}$&$\bm{i}$&$\bm{t}$\\
\hline
$\vec{\bm{v}}$&$a$&$b$, not $c$&\\
\hline
$\vec{\bm{v_1}}$&&$a$, not $c$&$b$\\
\hline
$\vec{\bm{v_2}}$&&$b$, $c$, not $a$&\\
\end{tabular}
\caption{Demand vectors and the corresponding assignment of agents.}
\label{tabb}
\end{table} 

Suppose for contradiction that task $i$ is of type 2 with respect to $\vec{v}$, and let $i'$ be the intermediate task. Since $\vec{v}$ has at least four non-zero entries, there exists a task $s\not=i'$ that is non-empty for $\vec{v}$. Let $\vec{v_1}$ be such that $(\vec{v},\vec{v_1})$ are $(s,t)$-adjacent. For  $(\vec{v},\vec{v_1})$, let $a$ be the $(s,i)$-mobile agent and let $b$ be the $(i,t)$-mobile agent, as shown in Table~\ref{tabb}.

Letting $\vec{v_2}$ be such that $(\vec{v},\vec{v_2})$ are $(s,i)$-adjacent, $(\vec{v},\vec{v_2})$ has switching cost 2 since $s\not=i'$. Thus, $a$ does \emph{not} switch  to task $i$ for $(\vec{v},\vec{v_2})$, as shown in Table~\ref{tabb}. Also, $b$ remains in task $i$ for $(\vec{v},\vec{v_2})$, as shown in Table~\ref{tabb}. Let $c$ be the mobile agent for $(\vec{v},\vec{v_2})$ that switches to task $i$. Then agent $c$ is not assigned to task $i$ with respect to $\vec{v}$ or $\vec{v_1}$ and is assigned to task $i$ with respect to $\vec{v_2}$, as shown in Table~\ref{tabb}. 

We note that $(\vec{v_1},\vec{v_2})$ are $(t,i)$-adjacent while they differ on the assignment of agents $a$, $b$, and $c$, a contradiction.
\end{proof}

Later, we will prove the following lemma (Lemma~\ref{lem:all2int}), which says that (under certain conditions) there is \emph{at most one} task of type 1. Combining this with Lemma~\ref{lem:tech} allows us to say that every task except for intermediate task $i$ is of type 2, which will be a useful structural property. 

\begin{lemma}\label{lem:all2int}
For any demand vector $\vec{v}$ with at least four non-zero entries, there is at most one task of type 1.
\end{lemma}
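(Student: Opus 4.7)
The plan is to suppose for contradiction that both $t_1$ and $t_2$ are type~$1$ with respect to $\vec{v}$ and exhibit an adjacent pair of demand vectors with switching cost at least~$3$. Because $\vec{v}$ has at least four non-zero entries and at most two of them can equal $t_1$ or $t_2$, I can fix non-empty sources $s_1,s_2\notin\{t_1,t_2\}$. For $i,j\in\{1,2\}$, let $\vec{v_{ij}}$ denote the result of moving one unit of demand from $s_i$ to $t_j$; since $t_j$ is type~$1$, each such move has cost $1$ with a unique mover $\alpha_{ij}$ located at $s_i$ in $\vec{v}$.

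The first step is to analyze the $(t_1,t_2)$-adjacent pair $(\vec{v_{i1}},\vec{v_{i2}})$: it has cost $1$ when $\alpha_{i1}=\alpha_{i2}$ (``Case~A'' for $s_i$) and cost $2$ with intermediate $s_i$ otherwise (``Case~B''). I rule out Case~B by applying Lemma~\ref{lem:dest2} to the cost-$2$ pair, which propagates the cost-$2$/intermediate-$s_i$/mobile-$\alpha_{i2}$ structure to the move $s_{3-i}\to t_2$ from $\vec{v_{i1}}$. This pins down the agent assignment of the compound vector $\vec{u}=\vec{v}+e_{t_1}+e_{t_2}-e_{s_1}-e_{s_2}$ along that two-step path; tracing $\vec{u}$ along the alternate path through $\vec{v_{(3-i),j}}$ then forces three agents to switch during a single adjacent step, giving cost~$3$. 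Hence every source is in Case~A, and I write $\alpha_i$ for the unique mover at $s_i$.

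Next, I consider the compound vector $\vec{w_{ij}}=\vec{v}+e_{t_1}+e_{t_2}-e_{s_i}-e_{s_j}$ for distinct sources $s_i,s_j$. Its unique assignment places $\alpha_i,\alpha_j$ at $\{t_1,t_2\}$ in some order; encode the placement of $\alpha_i$ by $f(i,j)\in\{1,2\}$, so that $f(j,i)=3-f(i,j)$. Examining the $(s_k,s_j)$-adjacent pair $(\vec{w_{ij}},\vec{w_{ik}})$ shows that $f(i,j)\ne f(i,k)$ forces $\alpha_i,\alpha_j,\alpha_k$ to switch tasks simultaneously during a single adjacent transition, contradicting cost $\le 2$. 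Therefore $f(i,\cdot)$ is constant; writing $f(i)$ for this common value, $f(j)=3-f(i)$ holds for every pair of distinct sources. With three or more non-empty sources, this is a pigeonhole contradiction since $f$ takes only two values.

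The remaining case is that $\vec{v}$ has exactly four non-empty tasks with $t_1,t_2$ both non-empty, leaving only the two sources $s_1,s_2$. Here I promote $t_2$ to a ``virtual third source'' using that it is non-empty: the cost-$1$ move $t_2\to t_1$ from $\vec{v}$ has a unique mover $\gamma$, and the auxiliary vectors $\vec{w_{s_i,t_2}}=\vec{v}+2e_{t_1}-e_{s_i}-e_{t_2}$ together with $\vec{w_{12}}$ support an analogous $f$-labeling pigeonhole argument. The main obstacle lies here: $\vec{w_{s_i,t_2}}$ sends both movers to $t_1$ rather than splitting them between $t_1$ and $t_2$, so the ``case'' structure is asymmetric, and I must verify via careful invocations of Lemma~\ref{lem:dest2} on the cost-$2$ adjacent pairs among these auxiliary vectors that the cost-$3$ detection still fires, producing the same pigeonhole contradiction on the triple $\{s_1,s_2,t_2\}$.
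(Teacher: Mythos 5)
Your proposal takes a genuinely different route from the paper. The paper never argues with two type-1 tasks head-on: it instead invokes Lemma~\ref{lem:one2} to produce a type-2 task, extracts its intermediate task $i$, and then chases a sequence of adjacent moves interleaving $t_1$, the type-2 task, and $i$ through several tables to force a cost-3 transition. You argue directly with two type-1 tasks $t_1,t_2$. Your main argument is clean: the ``Case B'' exclusion correctly observes that if a source $s$ has distinct movers toward $t_1$ and $t_2$ then the $(t_1,t_2)$-adjacent pair $(\vec{v}+e_{t_1}-e_{s},\ \vec{v}+e_{t_2}-e_{s})$ has cost $2$ with intermediate $s$, and Lemma~\ref{lem:dest2} then propagates this structure to a second source; comparing the two two-step paths into $\vec{v}+e_{t_1}+e_{t_2}-e_{s_1}-e_{s_2}$ yields three simultaneous switches, a contradiction. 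The labeling $f(i,j)\in\{1,2\}$ on the compound vectors, the invariance $f(i,\cdot)$ being constant, and the pigeonhole over $\ge 3$ sources also check out once one verifies (by comparing all four one-step paths into $\vec{w_{ij}}$) that the two ``new'' slots at $t_1,t_2$ in $\vec{w_{ij}}$ really are occupied by $\alpha_i,\alpha_j$ and not by some bystander. This covers every case in which there are at least three non-empty sources outside $\{t_1,t_2\}$, i.e., whenever $\vec{v}$ has five or more non-zero entries, or four non-zero entries with at least one of $t_1,t_2$ empty. That is more sources than the paper exploits at once, and the argument is arguably more symmetric and elegant.

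The remaining case — exactly four non-zero entries with both $t_1$ and $t_2$ non-empty, so only two genuine sources — is where your proposal has a real gap, and you acknowledge it yourself. Promoting $t_2$ to a ``virtual source'' does not give you a third labeled source in any usable sense: the auxiliary vector $\vec{v}+2e_{t_1}-e_{s_i}-e_{t_2}$ dumps both movers into the same target $t_1$, so there is no binary $f$-value to assign, and the $(s_k,s_j)$-adjacent comparison that drives the pigeonhole has no analogue. It is not a matter of ``careful invocations of Lemma~\ref{lem:dest2}''; the structure your counting needs (two movers split across two distinguishable targets) is simply absent. The paper handles this exact corner by a different device: since $k\ge 5$ and only four tasks are non-empty, there is an \emph{empty} task $s'$, and the paper's choice of $t_2$ as an empty type-2 task when possible forces $s'$ to be type~1; the contradiction is then derived from a bespoke chain $\vec{v},\vec{v_1},\dots,\vec{v_4}$ through $s'$ (Table~6). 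Your sketch makes no use of an empty task or of the $k\ge 5$ hypothesis in this branch, and without some such extra ingredient I do not see how the two-source case closes. So: correct and genuinely different in the main regime, but with a concrete unresolved hole in the four-nonzero-entries/two-source case that you would need a new idea (or an adaptation of the paper's empty-task trick) to fill.
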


In order to prove Lemma~\ref{lem:all2int}, we will prove two structural lemmas. The following simple lemma is useful (but may at first appear unrelated).

\begin{lemma}\label{lem:ss}
Let $(\vec{v},\vec{v_1})$ be a pair of $(s,t_1)$-adjacent demand vectors with switching cost 2 and intermediate task $i_1$. Let $(\vec{v},\vec{v_2})$ be a pair of $(s,t_2)$-adjacent demand vectors with switching cost 2 and intermediate task $i_2$. Then it is \emph{not} the case that $s$, $i_1$, $t_1$, $i_2$, and $t_2$ are all distinct.
\end{lemma}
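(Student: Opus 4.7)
The plan is to suppose for contradiction that $s, i_1, t_1, i_2, t_2$ are all distinct and then derive a contradiction by examining the switching cost of the pair $(\vec{v_1},\vec{v_2})$. The key observation is that $\vec{v_1}$ and $\vec{v_2}$ are both obtained from $\vec{v}$ by moving a unit of demand from $s$, once to $t_1$ and once to $t_2$. So $\vec{v_2}-\vec{v_1}=e_{t_2}-e_{t_1}$, meaning $(\vec{v_1},\vec{v_2})$ is $(t_1,t_2)$-adjacent, and therefore has switching cost at most $2$ by assumption.

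Next, I would name the mobile agents. For $(\vec{v},\vec{v_1})$, let $a_1$ be the $(s,i_1)$-mobile agent and $b_1$ the $(i_1,t_1)$-mobile agent. For $(\vec{v},\vec{v_2})$, let $a_2$ be $(s,i_2)$-mobile and $b_2$ be $(i_2,t_2)$-mobile. In $\vec{v}$, the assignments are fixed: $a_1$ and $a_2$ are at $s$, $b_1$ is at $i_1$, and $b_2$ is at $i_2$. Because an agent occupies exactly one task in $\vec{v}$, and because $s,i_1,i_2$ are pairwise distinct under our contradiction assumption, the identifications $a_1=b_2$, $a_2=b_1$, and $b_1=b_2$ are all impossible. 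The only coincidence still allowed is $a_1=a_2$.

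The last step is the counting, which I expect to be entirely routine and is the place where the contradiction is drawn. In both remaining subcases I would track each named agent's assignment in $\vec{v_1}$ and in $\vec{v_2}$, using the fact that an agent not listed as mobile for $(\vec{v},\vec{v_j})$ retains its $\vec{v}$-assignment in $\vec{v_j}$. If all four of $a_1,b_1,a_2,b_2$ are distinct, then in $\vec{v_1}$ they sit at $i_1,t_1,s,i_2$ respectively, and in $\vec{v_2}$ they sit at $s,i_1,i_2,t_2$; since the five tasks are distinct, all four positions change, giving switching cost at least $4$. If instead $a_1=a_2$ (three distinct agents), this agent sits at $i_1$ in $\vec{v_1}$ and $i_2$ in $\vec{v_2}$, $b_1$ sits at $t_1$ and $i_1$, and $b_2$ sits at $i_2$ and $t_2$, giving three changes and switching cost at least $3$. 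Either way this contradicts the bound of $2$ established in the first step, completing the proof.

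The argument is essentially a careful bookkeeping exercise, and the main (mild) obstacle is keeping track of which agents could coincide given that some of the five tasks might need to be identified.
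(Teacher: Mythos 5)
Your proposal is correct and follows essentially the same approach as the paper: compare $\vec{v_1}$ and $\vec{v_2}$, which are $(t_1,t_2)$-adjacent, and show that at least three mobile agents change tasks between them. The paper avoids your case split on $a_1 = a_2$ by naming only three agents (it never names the $(s,i_2)$-mobile agent for $(\vec{v},\vec{v_2})$, observing directly that $a$ is not at $i_1$ in $\vec{v_2}$ whether or not it is mobile for that pair), but the underlying argument is the same.
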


\begin{proof}
Suppose by way of contradiction that $s$, $i_1$, $t_1$, $i_2$, and $t_2$ are all distinct. Table~\ref{tab3} depicts the proof.

\begin{table}[h]
\centering
\begin{tabular}{c|c|c|c|c|c}
&$\bm{s}$&$\bm{i_1}$&$\bm{t_1}$&$\bm{i_2}$&$\bm{t_2}$\\
\hline
$\vec{\bm{v}}$&$a$&$b$&&$c$&\\
\hline
$\vec{\bm{v_1}}$&&$a$&$b$&$c$&\\
\hline
$\vec{\bm{v_2}}$&&$b$, not $a$&&&$c$\\
\end{tabular}
\caption{Demand vectors and the corresponding assignment of agents.}
\label{tab3}
\end{table}
With respect to $(\vec{v},\vec{v_1})$, let $a$ be the $(s,i_1)$-mobile agent and let $b$ be the $(i_1,t_1)$-mobile agent. Then $a$ and $b$ behave according to rows $\vec{v}$ and $\vec{v_1}$ of Table~\ref{tab3}.

With respect to $(\vec{v},\vec{v_2})$, let $c$ be the $(i_2,t_2)$-mobile agent. Then $c$ behaves according to Table~\ref{tab3}. Since $i_1\not=s, i_2,t_2$, we know that $b$ is in the same position in $\vec{v}$ and $\vec{v_2}$. For the same reason, $a$ does not move to $i_1$ with respect to $(\vec{v},\vec{v_2})$, as shown in Table~\ref{tab3}. 

Comparing rows $\vec{v_1}$ and $\vec{v_2}$ in Table~\ref{tab3}, it is clear that $(\vec{v_1},\vec{v_2})$ have switching cost 3. Since $\vec{v_1}$ and $\vec{v_2}$ are adjacent, this is a contradiction.
\end{proof}

The following lemma is a weaker version of Lemma~\ref{lem:all2int}, which says that there is at least one task of type 2.

\begin{lemma}\label{lem:one2}
For any demand vector $\vec{v}$ with at least two non-zero entries, there is at least one task of type 2.
\end{lemma}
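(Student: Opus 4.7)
The plan is a proof by contradiction: assume every task is of type~$1$ with respect to $\vec{v}$, where $\vec{v}$ has at least two non-zero entries, and derive a conflict. Fix two non-empty tasks $s_1, s_2$ of $\vec{v}$, and (using $k \geq 5$) three additional distinct tasks $t_1, t_2, t_3$. By the type~$1$ assumption, for each $j \in \{1,2\}$ and each $i \in \{1,2,3\}$, the pair $(\vec{v}, \vec{v}+(-s_j,+t_i))$ has switching cost $1$; let $a^j_i$ denote its unique mobile agent. Since $a^1_i$ sits at $s_1$ in $\vec{v}$ while $a^2_i$ sits at $s_2$, the two agents are distinct.

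First I would compare the assignments at $\vec{v}+(-s_1,+t_i)$ and $\vec{v}+(-s_2,+t_i)$ directly: both $a^1_i$ and $a^2_i$ must change position, so this $(s_2,s_1)$-adjacent pair has switching cost exactly $2$, with intermediate $t_i$ and mobile agents $a^1_i,a^2_i$. Invoking Lemma~\ref{lem:dest2} then promotes this one pair to a full type~$2$ structure: $s_1$ is of type~$2$ at each derived vector $\vec{v}+(-s_1,+t_i)$ with intermediate $t_i$. In particular, from $\vec{v}+(-s_1,+t_1)$ I already have one cost-$2$ move with source $s_2$, destination $s_1$, and intermediate $t_1$.

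The goal is to produce a second cost-$2$ move from $\vec{v}+(-s_1,+t_1)$ with source $s_2$, whose intermediate lies outside $\{s_1, s_2, t_1, t_j\}$ for some $j \in \{2,3\}$; then Lemma~\ref{lem:ss} applied at $\vec{v}+(-s_1,+t_1)$ will give five pairwise distinct tasks $\{s_2, t_1, s_1, i^*, t_j\}$, the desired contradiction. The candidates are the pairs $\bigl(\vec{v}+(-s_1,+t_1),\ \vec{v}+(-s_1,+t_1)+(-s_2,+t_j)\bigr)$ for $j\in\{2,3\}$. The assignment at the common endpoint $\vec{v}+(-s_1,+t_1)+(-s_2,+t_j)$ can be reached from $\vec{v}$ along two paths that differ in the order of the shifts, and I would use consistency of the two resulting assignments together with the type~$1$ information from Step~1 to control the switching cost and intermediate of that pair.

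The main obstacle is exactly this two-path consistency analysis: several sub-configurations of moving agents must be ruled out individually, and for some rigid configurations the intermediate might stubbornly land inside $\{s_1, s_2, t_1, t_j\}$. To handle that case I would use pigeonhole among the three distinct $s_2$-agents $a^2_1, a^2_2, a^2_3$ across the choices of $j$, which should guarantee that at least one $j$ delivers a second cost-$2$ move with intermediate outside the forbidden set. If a genuinely rigid residual case survives, I would fall back to closing a short cycle of demand vectors in the spirit of the switching-cost-$1$ impossibility proof from the past-work section, using Lemma~\ref{lem:sw1} to force cost-$1$ steps around the cycle and then deriving an agent-placement contradiction from cyclic consistency.
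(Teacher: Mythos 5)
Your opening moves are sound: the observation that $\bigl(\vec{v}+(-s_1,+t_i),\ \vec{v}+(-s_2,+t_i)\bigr)$ is an $(s_2,s_1)$-adjacent pair with switching cost exactly $2$ and intermediate $t_i$ is correct, and applying Lemma~\ref{lem:dest2} from that pair legitimately makes $s_1$ a type-2 task with intermediate $t_1$ at $\vec{v}' := \vec{v}+(-s_1,+t_1)$. But the proof does not close from there, and you yourself flag the gap as the ``main obstacle'' without resolving it. Concretely: nothing you have established forces the pair $\bigl(\vec{v}',\ \vec{v}'+(-s_2,+t_j)\bigr)$ to have switching cost $2$ at all --- it is entirely consistent with everything shown so far for $t_j$ to be of type~$1$ with respect to $\vec{v}'$. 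You cannot invoke Lemma~\ref{lem:all2int} to rule that out, since Lemma~\ref{lem:one2} is an ingredient in \emph{its} proof and the dependence would be circular. Moreover, even in configurations where this pair does have cost $2$, you have no control over its intermediate: it could well be $s_1$ or $t_1$, which kills the Lemma~\ref{lem:ss} application. The pigeonhole step you propose does not repair this: the agents $a^2_1,a^2_2,a^2_3$ need not be distinct (they could all be the same agent if $\vec{v}$ places few agents on $s_2$), and even when they are distinct, agent identity does not determine which task plays the intermediate role for the later cost-$2$ move.

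The paper avoids this by going \emph{two} moves deep rather than one. It considers the corner vector $\vec{v_5} = \vec{v}+(-s_1,+s_4)+(-s_2,+s_5)$, and a nontrivial argument (pinning down that the anti-diagonal cost-$1$ moves reuse the same two agents $a,b$, then that only $a,b$ can be mobile between $\vec{v_1}$ and $\vec{v_5}$) shows that at $\vec{v_5}$ \emph{both} $s_1$ and $s_2$ are type-$2$ tasks with the two \emph{distinct} intermediates $s_4$ and $s_5$. Only then is Lemma~\ref{lem:ss} applied, using a third non-empty task $s_3$ as the common source so that the five tasks $s_3,s_1,s_4,s_2,s_5$ are automatically distinct. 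Your plan, which stays one move deep and tries to use $s_2$ as the common source, has no analogous mechanism to simultaneously force a second cost-$2$ move and steer its intermediate away from the small forbidden set. Without that, the argument is incomplete, and the fallback ``cyclic consistency'' sketch is too vague to count as a proof.
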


\begin{proof}
Table~\ref{tab4} depicts the proof. 

\begin{table}[h]
\centering
\begin{tabular}{c|c|c|c|c}
&$\bm{s_1}$&$\bm{s_2}$&$\bm{s_4}$&$\bm{s_5}$\\
\hline
$\vec{\bm{v}}$&$a$&$b$&&\\
\hline
$\vec{\bm{v_1}}$&&$b$&$a$&\\
\hline
$\vec{\bm{v_2}}$&$a$&&&$b$\\
\hline
$\vec{\bm{v_3}}$&$a$&&$b$&\\
\hline
$\vec{\bm{v_4}}$&&$b$&&$a$\\
\hline
\hline
$\vec{\bm{v_5}}$ (case 1)&&&$a$&$b$\\
\hline
$\vec{\bm{v_5}}$ (case 2)&&&$b$&$a$\\

\end{tabular}
\caption{Demand vectors and the corresponding assignment of agents.}
\label{tab4}
\end{table} 

Suppose by way of contradiction that every task is of type 1 with respect to $\vec{v}$. That is, for all $\vec{v'}$ adjacent to $\vec{v}$, the pair $(\vec{v},\vec{v'})$ has switching cost 1. Let $s_1$, $s_2$, and $s_3$ be non-empty tasks with respect to $\vec{v}$ ($s_3$ is not shown in Table~\ref{tab4}). Let $s_4$ and $s_5$ be additional tasks. Let $\vec{v_1}$ be such that $(\vec{v},\vec{v_1})$ are $(s_1,s_4)$-adjacent with mobile agent $a$. Let $\vec{v_2}$ be such that $(\vec{v},\vec{v_2})$ are $(s_2,s_5)$-adjacent with mobile agent $b$. The assignments of agents $a$, $b$, and $c$ for vectors $\vec{v}$, $\vec{v_1}$, and $\vec{v_2}$ are shown in Table~\ref{tab4}.

Let $\vec{v_3}$ be such that $(\vec{v},\vec{v_3})$ are $(s_2,s_4)$-adjacent. We know that $(\vec{v},\vec{v_3})$ has switching cost 1 but we do not know whether the mobile agent is $b$ or some other agent.  Similarly, let $\vec{v_4}$ be such that $(\vec{v},\vec{v_4})$ are $(s_1,s_5)$-adjacent. We know that $(\vec{v},\vec{v_4})$ has switching cost 1 but we do not know whether the mobile agent is $a$ or some other agent. We claim that the mobile agent for  $(\vec{v},\vec{v_4})$ is $a$, and symmetrically the mobile agent for $(\vec{v},\vec{v_3})$ is $b$, as shown in Table~\ref{tab4}.  

Suppose for contradiction that the mobile agent for $(\vec{v},\vec{v_4})$ is some agent $d\not=a$. 
Let $\vec{v_5}$ be such that $(\vec{v_1},\vec{v_5})$ are $(s_2,s_5)$-adjacent. Note that $\vec{v_5}$ is adjacent to both $\vec{v_4}$ and $\vec{v_2}$. $\vec{v_2}$ places agent $b$ (and not $d$) in task $s_5$, and $\vec{v_4}$ places agent $d$ (and not $b$) in task $s_5$. Then, since $\vec{v_5}$ can only place at most one of $b$ or $d$ in task $s_5$, $\vec{v_5}$ must disagree with either $\vec{v_2}$ or $\vec{v_4}$ on the assignment of \emph{both} $b$ and $d$. Also, $\vec{v_1}$ places agent $a$ in task $s_4$ while both $\vec{v_2}$ and $\vec{v_4}$ place agent $a$ in task $s_1$. Since $(\vec{v_1},\vec{v_5})$ are $(s_2,s_5)$-adjacent and have switching cost at most 2, agent $a$ cannot switch from task $s_4$ to task $s_1$ when the demand vector changes from $\vec{v_1}$ to $\vec{v_5}$. Thus, $\vec{v_5}$ disagrees with both $\vec{v_2}$ and $\vec{v_4}$ on the assignment of agent $a$. Thus, we have shown that $\vec{v_5}$ disagrees with either $\vec{v_2}$ or $\vec{v_4}$ on the assignment of $a$, $b$, and $d$, a contradiction. Therefore, the mobile agent for $(\vec{v},\vec{v_4})$ is $a$, as shown in Table~\ref{tab4}. By the same argument, the mobile agent for  $(\vec{v},\vec{v_3})$ is $b$, as shown in Table~\ref{tab4}.

Now, consider $(\vec{v_1},\vec{v_5})$, which are $(s_2,s_5)$-adjacent. We first claim that no agents besides $a$ and $b$ are mobile for $(\vec{v_1},\vec{v_5})$. Like the previous paragraph, Note that $\vec{v_5}$ is adjacent to both $\vec{v_4}$ and $\vec{v_2}$. Again, $\vec{v_2}$ places agent $b$ (and not $a$) in task $s_5$ and $\vec{v_4}$ places agent $a$ (and not $b$) in task $s_5$. Then, since $\vec{v_5}$ can only place at most one of $a$ or $b$ in task $s_5$, $\vec{v_5}$ must disagree with either $\vec{v_2}$ or $\vec{v_4}$ on the assignment of \emph{both} $a$ and $b$. Then since $\vec{v_1}$, $\vec{v_2}$, and $\vec{v_4}$ all agree on the assignment of every agent except $a$ and $b$, no other agent $c$ can be mobile for $(\vec{v_1},\vec{v_5})$, because then $\vec{v_5}$ would disagree with either $\vec{v_2}$ or $\vec{v_4}$ on the placement of agents $a$, $b$, and $c$. Thus, no agents besides $a$ and $b$ are mobile for $(\vec{v_1},\vec{v_5})$.

Therefore, we have two cases:
\subparagraph{Case 1: $(\vec{v_1},\vec{v_5})$ has switching cost 1.} In this case the only mobile agent for $(\vec{v_1},\vec{v_5})$ is $b$ as indicated in row $\vec{v_5}$ (case 1) of Table~\ref{tab4}.

Table~\ref{tab4} shows that $(\vec{v_5},\vec{v_4})$ has switching cost 2 where $a$ is $(s_4,s_5)$-mobile and $b$ is $(s_5,s_2)$-mobile. Thus, with respect to $\vec{v_5}$, task $s_2$ is of type 2 with intermediate task $s_5$. 

Similarly, Table~\ref{tab4} shows that $(\vec{v_5},\vec{v_3})$ has switching cost 2 where $b$ is $(s_5,s_4)$-mobile and $a$ is $(s_4,s_1)$-mobile. Thus, with respect to $\vec{v_5}$, task $s_1$ is of type 2 with intermediate task $s_4$. 

We observe that the combination of the previous two paragraphs violates Lemma~\ref{lem:ss}. Recall that $s_3$ is a task that is non-empty for $\vec{v}$, and thus also $\vec{v_5}$. We apply Lemma~\ref{lem:ss} with parameters $(\vec{v}, s, t_1, t_2)=(\vec{v_5},s_3,s_2,s_1)$. Then, from the previous two paragraphs, the intermediate task for task $s_2$ is task $s_5$ and the intermediate task for task $s_1$ is task $s_4$, so $i_1$ and $i_2$ are tasks $s_5$ and $s_4$, respectively. Then $s$, $t_1$, $t_2$, $i_1$, and $i_2$ are all distinct, which violates Lemma~\ref{lem:ss}.

\subparagraph{Case 2: $(\vec{v_1},\vec{v_5})$ has switching cost 2.} In this case for $(\vec{v_1},\vec{v_5})$, $b$ is $(s_2,s_4)$-mobile and $a$ is $(s_4,s_5)$-mobile, as indicated in row $\vec{v_5}$ (case 2) of Table~\ref{tab4}.

Taking the reverse, $(\vec{v_5},\vec{v_1})$ has switching cost 2 where $a$ is $(s_5,s_4)$-mobile and $b$ is $(s_4,s_2)$-mobile. Thus, with respect to $\vec{v_5}$, task $s_2$ is of type 2 with intermediate task $s_4$.

Similarly, Table~\ref{tab4} shows that $(\vec{v_5},\vec{v_2})$ has switching cost 2 where $b$ is $(s_4,s_5)$-mobile and $a$ is $(s_5,s_1)$-mobile. Thus, with respect to $\vec{v_5}$, task $s_1$ is of type 2 with intermediate task $s_5$. 

We observe that the combination of the previous two paragraphs violates Lemma~\ref{lem:ss}. We apply Lemma~\ref{lem:ss} with parameters $(\vec{v}, s, t_1, t_2)=(\vec{v_5},s_3,s_2,s_1)$. Then, from the previous two paragraphs, the intermediate task for task $s_2$ is task $s_4$ and the intermediate task for task $s_1$ is task $s_5$, so $i_1$ and $i_2$ are tasks $s_4$ and $s_5$ respectively. Then $s$, $t_1$, $t_2$, $i_1$, and $i_2$ are all distinct, which violates Lemma~\ref{lem:ss}.
\end{proof}

We are now ready to prove Lemma~\ref{lem:all2int}.

\begin{lemma}[Restatement of Lemma~\ref{lem:all2int}]\label{lem:all2}
For any demand vector $\vec{v}$ with at least four non-zero entries, there is at most one task of type 1.
\end{lemma}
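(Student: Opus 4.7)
My plan is to prove Lemma~\ref{lem:all2} by contradiction: suppose two distinct tasks $t_1$ and $t_2$ are both of type 1 with respect to $\vec v$. The approach is to adapt the proof of Lemma~\ref{lem:one2}, using $t_1, t_2$ in the roles of $s_4, s_5$. Inspecting that proof, the only use of the hypothesis ``every task is type 1'' is to force switching cost $1$ for moves landing at $s_4$ or $s_5$, and the final contradiction is produced by applying Lemma~\ref{lem:ss} to two $(s_3, \cdot)$-adjacent pairs obtained from a common non-empty source $s_3$ distinct from $s_1, s_2, s_4, s_5$. Thus the identical construction goes through whenever $\vec v$ has three non-empty tasks distinct from $\{t_1, t_2\}$, which (since $\vec v$ has at least four non-zero entries) holds automatically whenever at most one of $t_1, t_2$ is non-empty.

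The remaining subcase is both $t_1, t_2$ non-empty, so $\vec v$ has exactly two non-empty tasks $s_1, s_2$ outside $\{t_1, t_2\}$ together with at least one empty task (since $k \geq 5$). Here I apply Lemma~\ref{lem:ss} directly. Because every task is either type 1 or type 2 by the dichotomy implicit in Lemma~\ref{lem:dest2}, and every type 2 task has a type-1 intermediate by Lemma~\ref{lem:tech}, our contradiction hypothesis forces every task outside $\{t_1, t_2\}$ to be of type 2 with intermediate in $\{t_1, t_2\}$. If some type 2 task has intermediate $t_1$ and another has intermediate $t_2$, I choose them as destinations $t$ and $t'$ and pick a non-empty source $s \in \{s_1, s_2\} \setminus \{t, t'\}$ so that $\{s, t, t', t_1, t_2\}$ are pairwise distinct (a simple check using $k \geq 5$ and the existence of at least one empty type 2 task shows this choice is possible); Lemma~\ref{lem:ss} then applies to the $(s, t)$- and $(s, t')$-adjacent moves, which have switching cost $2$ with intermediates $t_1$ and $t_2$, to give the contradiction.

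The main obstacle is the degenerate sub-case where every task outside $\{t_1, t_2\}$ has the same intermediate, say $t_1$, so $t_2$ is a type-1 task that never serves as any task's intermediate. Here Lemma~\ref{lem:ss} cannot be directly invoked since both moves from a common source share the intermediate $t_1$. For this sub-case, my plan is a direct agent-tracking argument. By comparing the Lemma~\ref{lem:sw1}-guaranteed switching-cost-$1$ move $t_1 \to t$ against the switching-cost-$2$ move $s \to t$ from a different source, one shows that the switching-cost-$1$ mobile agent coincides with the intermediate agent $a_t$; consequently the collection of agents $\{a_t\}_{t \neq t_1}$ together with the $(t_1, t_2)$-mobile agent are pairwise distinct and all positioned at $t_1$ in $\vec v$. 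Combining this with analogous agent identifications for moves sourced at $t_2$, $s_1$, $s_2$ (each forced by the respective switching cost bound through the same triangle-inequality-style comparison), I expect to reach two demand vectors adjacent to a common third whose induced Hamming distance is at least $3$, delivering the contradiction. I expect this agent-tracking step to be the main difficulty, since it requires the same kind of detailed case analysis on agent identities as appears in the proofs of Lemmas~\ref{lem:sw1} and \ref{lem:tech}.
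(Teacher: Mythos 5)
Your sub-case 1 (at most one of $t_1, t_2$ non-empty, so $\vec v$ has three non-empty tasks outside $\{t_1, t_2\}$) looks sound: the proof of Lemma~\ref{lem:one2} only invokes type-1-ness of the two target tasks $s_4, s_5$ (to force switching cost $1$ on the four moves into them), and the $\vec v_5$ analysis and final appeal to Lemma~\ref{lem:ss} go through unchanged when $s_4, s_5 = t_1, t_2$.

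However, sub-case 2 contains a genuine logical error. You assert that the contradiction hypothesis ``forces every task outside $\{t_1, t_2\}$ to be of type 2 with intermediate in $\{t_1, t_2\}$.'' That does not follow. The hypothesis is only that $t_1, t_2$ are type 1, not that they are the \emph{only} type 1 tasks — you are in the middle of proving that at most one type 1 task exists, so you cannot yet assume there are only two. The dichotomy from Lemma~\ref{lem:dest2} plus Lemma~\ref{lem:tech} only gives that each type 2 task has \emph{some} type 1 intermediate; that intermediate could be a third type 1 task $t_3 \notin \{t_1, t_2\}$, and tasks outside $\{t_1, t_2\}$ could themselves be type 1. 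Your subsequent case split (``one intermediate is $t_1$ and another is $t_2$'' versus ``all intermediates are $t_1$'') is therefore not exhaustive; it silently excludes configurations with additional type 1 tasks or with type 2 tasks whose intermediate lies outside $\{t_1, t_2\}$. The paper avoids this trap by never claiming that $i \in \{t_1, t_1'\}$: it only observes that $i$ is distinct from at least one of $t_1, t_1'$ and then uses that one type 1 task.

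Moreover, the ``degenerate sub-case'' at the end is a plan, not a proof: you write that you ``expect'' the agent-tracking to deliver the contradiction and that you ``expect this to be the main difficulty.'' This is exactly where the paper does its real work — the construction tracking agents $a, b, c$ through $\vec v, \vec v_1, \dots, \vec v_4$ (Table~5), together with a separate construction (Table~6) for the edge case where $\vec v$ has exactly four non-empty tasks so the extra source $s_2$ need not exist. That technicality (which matters precisely when $n = 4$) is not anticipated in your outline. Without the classification error fixed and without the final agent-tracking carried out (including the $n = 4$ technicality), the argument is incomplete.
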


\begin{proof}
 Suppose by way of contradiction that there are two type 1 tasks $t_1$, $t_1'$ with respect to $\vec{v}$. By Lemma~\ref{lem:one2}, there must be at least one type 2 task with respect to $\vec{v}$. Let $t_2$ be a type 2 task with respect to $\vec{v}$, choosing an empty task if possible. Let $i$ be the intermediate task for $t_2$ with respect to $\vec{v}$. Either $i\not=t_1$ or $i\not=t_1'$. Without loss of generality, suppose $i\not=t_1$. Let $s_1$ be a non-empty task with $s_1\not=i,t,t_2$. 

For the construction, we require an additional non-empty task $s_2$ with $s_2\not=s_1,i,t,t_2$. However, such a task might not exist since we are only assuming that $\vec{v}$ has at least four non-zero entries. We note that we cannot assume that $\vec{v}$ has at least five non-zero entries because Theorem~\ref{thm:3rr} only assumes that the total number of agents is at least four. Thus, the existence of $s_2$ is a technicality that is only important when the number of agents is exactly four.  We will first assume that such a task $s_2$ exists and later we will show that $s_2$ must indeed exist. Table~\ref{tab5} depicts the proof. 

\begin{table}[h]
\centering
\begin{tabular}{c|c|c|c|c|c}
&$\bm{s_1}$&$\bm{i}$&$\bm{t_1}$&$\bm{t_2}$&$\bm{s_2}$\\
\hline
$\vec{\bm{v}}$&$a$&$b$&&&$c$\\
\hline
$\vec{\bm{v_1}}$&&$a$&&$b$&$c$\\
\hline
$\vec{\bm{v_2}}$&&$b$&$a$&&$c$\\
\hline
$\vec{\bm{v_3}}$&&$c$&$a$&$b$&\\
\hline
$\vec{\bm{v_4}}$&$a$&$b$, not $c$&&&\\
\end{tabular}
\caption{Demand vectors and the corresponding assignment of agents.}
\label{tab5}
\end{table}
Let $\vec{v_1}$ be such that $(\vec{v},\vec{v_1})$ are $(s_1,t_2)$-adjacent. Since $t_2$ is of type 2 with intermediate task $i$ with respect to $\vec{v}$, $(\vec{v},\vec{v_1})$ has switching cost 2 and intermediate task $i$. Let $a$ be the $(s,i)$-mobile agent and let $b$ be the $(i,t_2)$-mobile agent, as shown in Table~\ref{tab5}.  

Let $\vec{v_2}$ be such that $(\vec{v},\vec{v_2})$ are $(s_1,t_1)$-adjacent. Since $t_1$ is of type 1 with respect to $\vec{v}$, $(\vec{v},\vec{v_2})$ has switching cost 1. Thus, agents $b$ and $c$ are not mobile for $(\vec{v},\vec{v_3})$ as shown in Table~\ref{tab5}. If $a$ is the mobile agent for $(\vec{v},\vec{v_2})$ then $a$ is assigned to task $t_1$ for $\vec{v_2}$ and otherwise $a$ is assigned to task $s$. In either case, the assignment of both $a$ and $b$ differs between $\vec{v_2}$ and $\vec{v_1}$. Since $(\vec{v_1},\vec{v_2})$ are adjacent, they cannot differ on the assignment of any other agents besides $a$ and $b$. Thus, $(\vec{v_1},\vec{v_2})$ has $b$ as its $(t_2,i)$-mobile agent and $a$ as its $(i,t_1)$ mobile agent, as shown in Table~\ref{tab5}. Thus, with respect to $\vec{v_1}$, task $t_1$ is of type 2 with intermediate task $i$ and intermediate agent $a$. We have also shown that the mobile agent for $(\vec{v},\vec{v_2})$ is $a$.

Let $\vec{v_3}$ be such that $(\vec{v_1},\vec{v_3})$ are $(s_2,t_1)$-adjacent. Since $t_1$ is of type 2 with intermediate task $i$ and intermediate agent $a$, $(\vec{v_1},\vec{v_3})$ has switching cost 2 and $a$ is the $(i,t_1)$-mobile agent. Let $c$ be the $(s_2,i)$-mobile agent for $(\vec{v_1},\vec{v_3})$. Then, $\vec{v_3}$ is as is in Table~\ref{tab5}.

Let $\vec{v_4}$ be such that $(\vec{v},\vec{v_4})$ are $(s_2,t_1)$-adjacent. Since $t_1$ is of type 1 with respect to $\vec{v}$, $(\vec{v},\vec{v_4})$ has switching cost 1. Thus, neither $a$ nor $b$ are mobile agents for $(\vec{v},\vec{v_4})$, as shown in Table~\ref{tab5}. If $c$ is the mobile agent for $(\vec{v},\vec{v_4})$ then $\vec{v_4}$ assigns $c$ to task $t_1$ and otherwise $\vec{v_4}$ assigns $c$ to $s_2$. In either case, $\vec{v_4}$ does not assign $c$ to task $i$, as shown in Table~\ref{tab5}. We note that $(\vec{v_3},\vec{v_4})$ are $(t_2,s_1)$-adjacent, however according to Table~\ref{tab5}, $\vec{v_3}$ and $\vec{v_4}$ disagree on the assignment of $a$, $b$, and $c$, a contradiction.

In the above construction, we assumed the existence of task $s_2$. It remains to show that there indeed exists a non-empty task $s_2$ with $s_2\not=s,i,t,t_2$. As mentioned previously, the existence of $s_2$ is a technicality that is only important when the number of agents is exactly four. Thus, the remainder of the proof merely addresses a technicality. Table~\ref{tab6} depicts the remainder of the proof. 

Suppose by way of contradiction that $s_1$, $i$, $t$, and $t_2$ are the only non-empty tasks in $\vec{v}$. Let $s'$ be an empty task. The task $t_2$ was initially chosen to be an \emph{empty} type 2 task if one exists. Since $t_2$ is non-empty, we know that $s'$ is of type 1. 

Let $\vec{v_1}$ be such that $(\vec{v},\vec{v_1})$ are $(s_1,t_2)$-adjacent. Since $t_2$ is of type 2 with intermediate task $i$ with respect to $\vec{v}$, $(\vec{v},\vec{v_1})$ has switching cost 2 and intermediate task $i$. For $(\vec{v},\vec{v_1})$, let $a$ be the $(s_1,i)$-mobile agent and let $b$ be the $(i,t_2)$-mobile agent, as shown in Table~\ref{tab6}. 

\begin{table}[h]
\centering
\begin{tabular}{c|c|c|c|c|c}
&$\bm{s_1}$&$\bm{i}$&$\bm{t_1}$&$\bm{t_2}$&$\bm{s'}$\\
\hline
$\vec{\bm{v}}$&$a$&$b$&$c$&&\\
\hline
$\vec{\bm{v_1}}$&&$a$&$c$&$b$&\\
\hline
$\vec{\bm{v_2}}$&$a$&$b$&&&$c$\\
\hline
$\vec{\bm{v_3}}$&&$b$&$c$&&$a$\\
\hline
$\vec{\bm{v_4}}$&&&&$b$&$a$, not $c$\\
\end{tabular}
\caption{Demand vectors and the corresponding assignment of agents.}
\label{tab6}
\end{table} 

Let $\vec{v_2}$ be such that $(\vec{v},\vec{v_2})$ are $(t,s')$-adjacent. Since $s'$ is of type 1 with respect to $\vec{v}$, $(\vec{v},\vec{v_2})$ has switching cost 1. Let $c$ be the mobile agent for $(\vec{v},\vec{v_2})$ as shown in Table~\ref{tab6}. 

Let $\vec{v_3}$ be such that $(\vec{v},\vec{v_3})$ are $(s_1,s')$-adjacent. Since $s'$ is of type 1 with respect to $\vec{v}$, $(\vec{v},\vec{v_3})$ has switching cost 1. Thus, agents $b$ and $c$ are not mobile for $(\vec{v},\vec{v_3})$ as shown in Table~\ref{tab6}. If $a$ is the mobile agent for $(\vec{v},\vec{v_3})$ then $a$ is assigned to task $s'$ for $\vec{v_3}$ and otherwise $a$ is assigned to task $s$. In either case, the assignment of both $a$ and $b$ differs between $\vec{v_3}$ and $\vec{v_1}$. Since $(\vec{v_1},\vec{v_3})$ are adjacent, they cannot differ on the assignment of any other agents besides $a$ and $b$. Thus, $(\vec{v_1},\vec{v_3})$ has $b$ as its $(t_2,i)$-mobile agent and $a$ as its $(i,s')$ mobile agent, as shown in Table~\ref{tab6}. Thus, with respect to $\vec{v_1}$, task $s'$ is of type 2 with intermediate task $i$ and intermediate agent $a$. We have also shown that the mobile agent for $(\vec{v},\vec{v_3})$ is $a$.

Let $\vec{v_4}$ be such that $(\vec{v_1},\vec{v_4})$ are $(t_1,s')$-adjacent. Since task $s'$ is of type 2 with intermediate task $i$ and intermediate agent $a$ with respect to $\vec{v_1}$, $(\vec{v_1},\vec{v_4})$ have switching cost 2 and intermediate task $i$ and $(i,s')$-mobile agent $a$. Regardless of the $(t_1,i)$-mobile agent for $(\vec{v_1},\vec{v_4})$, agent $c$ is not assigned to task $s'$ for $\vec{v_4}$, as shown in Table~\ref{tab6}.

We note that $(\vec{v_2},\vec{v_4})$ are $(s_1,t_2)$-adjacent, however according to Table~\ref{tab6} they disagree on the assignment of $a$, $b$, and $c$, a contradiction.
\end{proof}

We are now ready to prove Lemma~\ref{lem:ind}.

\begin{lemma}[restatement of Lemma~\ref{lem:ind}]\label{lem:indre}
Let $\vec{v}$ be a demand vector with at least four non-zero entries. Then there exists a task $t$ such that $t$ is of type 2 with respect to $\vec{v}$ and $\vec{v}$ has at least four non-empty tasks distinct from $t$. Let $i$ be the intermediate task of $t$ with respect to $\vec{v}$. Let $\vec{v'}$ be such that $(\vec{v},\vec{v'})$ are $(i,t)$-adjacent. Then, $t$ is a type 2 task with intermediate task $i$ with respect to $\vec{v'}$. 
\end{lemma}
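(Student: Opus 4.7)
The plan is to establish existence of $t$ using the structural lemmas already proved and then to verify the dynamic claim by tracking agents through the auxiliary demand vectors $\vec{v_s}=\vec{v}+(s,t)\text{-move}$. First I would pin down the structure at $\vec{v}$: Lemma~\ref{lem:one2} gives at least one type 2 task, Lemma~\ref{lem:all2int} gives at most one type 1 task, and Lemma~\ref{lem:tech} forces every type 2 task to have that unique type 1 task (call it $i$) as its intermediate task. In particular $i$ must have positive demand at $\vec{v}$ (otherwise no type 2 task could use it as an intermediate, contradicting Lemma~\ref{lem:one2}). A short case analysis on whether $\vec{v}$ has at least five or exactly four non-empty tasks then produces a type 2 task $t$ (chosen empty when $\vec{v}$ has only four non-empty entries) with at least four non-empty tasks distinct from $t$.

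Next, Lemma~\ref{lem:sw1} gives that $(\vec{v},\vec{v'})$ has switching cost exactly $1$, and I would argue that its unique mobile agent must be the intermediate agent $b$ of $t$ at $\vec{v}$. If instead some other agent $m$ were mobile (necessarily at $i$ in $\vec{v}$ and at $t$ in $\vec{v'}$), I would pick any non-empty $s\ne i,t$, set $\vec{v_s}=\vec{v}+(s,t)\text{-move}$, and show that the adjacent pair $(\vec{v'},\vec{v_s})$ would differ on the three agents $b$, $m$, and the $(s,i)$-mobile agent $c_s$ for $(\vec{v},\vec{v_s})$, contradicting switching cost at most $2$.

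The heart of the proof is to identify a candidate intermediate agent $a^*$ for $t$ at $\vec{v'}$, defined as the intermediate agent of $t$ at $\vec{v_s}$ (shown to be independent of $s$). For each non-empty $s\ne i,t$, the vector $\vec{v_s}$ satisfies the hypotheses of Lemmas~\ref{lem:tech} and~\ref{lem:all2int}, so $t$ is type 2 at $\vec{v_s}$ with intermediate task $i$; let $a^*_s$ denote its intermediate agent, lying at task $i$ in $\vec{v_s}$. To show $a^*_{s_1}=a^*_{s_2}$ for distinct choices $s_1,s_2$, I would compute the set of agents at task $t$ in the double-moved vector $\vec{V}=\vec{v}+(s_1,t)+(s_2,t)$ along both paths $\vec{v}\to\vec{v_{s_1}}\to\vec{V}$ and $\vec{v}\to\vec{v_{s_2}}\to\vec{V}$; each path identifies this set with the agents at $t$ in $\vec{v}$ together with $b$ and $a^*_{s_j}$. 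Since $a^*_{s_j}$ is at $i$ (not $t$) in $\vec{v_{s_j}}$, each candidate must be a new addition to task $t$ in $\vec{V}$, forcing $a^*_{s_1}=a^*_{s_2}=:a^*$.

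Finally, to conclude that $t$ is type 2 at $\vec{v'}$ with intermediate task $i$ and intermediate agent $a^*$: for any non-empty $s\ne i$, let $\vec{w}=\vec{v'}+(s,t)\text{-move}$. Then $(\vec{v_s},\vec{w})$ is $(i,t)$-adjacent, and Lemma~\ref{lem:sw1} applied at $\vec{v_s}$, combined with the same "mobile agent must be the intermediate agent" reasoning as in step two, gives switching cost $1$ with mobile agent $a^*$. Meanwhile $(\vec{v'},\vec{v_s})$ is $(i,s)$-adjacent of switching cost $1$ with mobile agent $c_s$. Chaining these movements, $(\vec{v'},\vec{w})$ has $c_s$ moving $s\to i$ and $a^*$ moving $i\to t$, yielding switching cost exactly $2$ with intermediate task $i$ and $(i,t)$-mobile agent $a^*$, exactly the desired type 2 structure. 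The principal obstacle will be the technical bookkeeping that guarantees the side conditions of Lemmas~\ref{lem:sw1},~\ref{lem:tech}, and~\ref{lem:all2int} continue to hold at the auxiliary $\vec{v_s}$ when some task demand drops to zero after the move, together with the care required in the initial case analysis producing $t$.
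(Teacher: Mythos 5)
Your plan diverges from the paper's proof at the point where you construct the candidate intermediate agent $a^*$, and it is precisely there that the gap appears. You assert that ``for each non-empty $s\ne i,t$, the vector $\vec{v_s}$ satisfies the hypotheses of Lemmas~\ref{lem:tech} and~\ref{lem:all2int}, so $t$ is type 2 at $\vec{v_s}$ with intermediate task $i$.'' This does not follow from those two lemmas. Lemma~\ref{lem:tech} already \emph{assumes} that $t$ is type 2 with intermediate task $i$ and concludes that $i$ is type 1; it cannot be used to establish that $t$ is type 2 at $\vec{v_s}$. Lemma~\ref{lem:all2int} says there is at most one type 1 task, but says nothing about whether that unique type 1 task at $\vec{v_s}$ is $i$ rather than some other task, nor whether $t$ might itself be the type 1 task at $\vec{v_s}$. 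To fill this in you would need to show, for every such $s$, that $i$ is type 1 at $\vec{v_s}$ (or some equivalent), and this is a new claim about a vector $\vec{v}_s$ obtained by moving from $s\ne i$ into $t$ --- a vector the paper never needs to reason about. Proving it would require essentially the same kind of two-adjacency bookkeeping as the target lemma itself, and is not obviously easier. The later ``double-move'' argument inherits the same problem: in tracking agents along the paths $\vec{v}\to\vec{v_{s_j}}\to\vec{V}$ you already need to know the type 2 structure of $t$ at $\vec{v_{s_j}}$, which is the unproved assertion.

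The paper avoids all of this. Its step is much shorter: it shows directly that $i$ is type 1 at $\vec{v'}$ by exhibiting, for each source $s$, an adjacent pair $(\vec{v'},\cdot)$ of switching cost~$1$ moving demand into $i$ --- the pair $(\vec{v'},\vec{v})$ handles source $t$, and the pair $(\vec{v'},\vec{v_1})$ with $\vec{v_1}=\vec{v}+(s,t)$ handles each $s\ne i,t$, using only that $b$ (the mobile agent of $(\vec{v},\vec{v'})$) is the $(i,t)$-mobile agent of $(\vec{v},\vec{v_1})$, a three-agent contradiction argument you also carry out correctly. Once $i$ is known to be type 1 at $\vec{v'}$, Lemma~\ref{lem:all2int} (at most one type 1 task) and Lemma~\ref{lem:tech} (intermediate task of a type 2 task is type 1) immediately force $t$ to be type 2 at $\vec{v'}$ with intermediate task $i$, with no need to pin down a candidate intermediate agent. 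Your Part 1 (existence of $t$) and your identification of $b$ as the unique mobile agent of $(\vec{v},\vec{v'})$ follow the paper closely and are fine; it is the detour through $\vec{v_s}$ and $a^*$ that both adds unnecessary work and contains the unjustified step.
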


\begin{proof}
First we will show that there exists a task $t$ such that $t$ is of type 2 with respect to $\vec{v}$ and $\vec{v}$ has at least four non-empty tasks distinct from $t$.
By Lemma~\ref{lem:one2}, there exists a task of type 2 with respect to $\vec{v}$. Suppose for contradiction that every such task $t'$ is such that $\vec{v}$ has less than four non-empty tasks distinct from $t'$. Then, every empty task is of type 1, because if there were an empty task $t''$ of type 2 then $\vec{v}$ would have at least four non-empty tasks distinct from $t''$ since $\vec{v}$ has at least four non-zero entries. By Lemma~\ref{lem:tech}, task $i$ is of type 1 with respect to $\vec{v}$. Since $i$ is an intermediate task with respect to $\vec{v}$, $i$ is non-empty. Then, by Lemma~\ref{lem:all2int} $i$ is the only type 1 task with respect to $\vec{v}$. Thus, we have shown that every empty task is of type 1 and there are no empty type 1 tasks, so \emph{every} task is non-empty with respect to $\vec{v}$. Then since there are at least 5 tasks total, there are at least four non-empty tasks distinct from $t'$, a contradiction.

Now, we will show that $t$ is a type 2 task with intermediate task $i$, with respect to $\vec{v'}$. Since $\vec{v}$ has at least four non-empty tasks excluding task $t$,  $\vec{v'}$ has at least four non-zero entries. Thus, we can apply Lemma~\ref{lem:all2} to $\vec{v'}$.

We claim that task $i$ is of type 1 with respect to $\vec{v'}$. The proof of this claim is simple and is depicted in Table~\ref{tab7}. 

\begin{table}[h]
\centering
\begin{tabular}{c|c|c|c}
&$\bm{s}$&$\bm{i}$&$\bm{t}$\\
\hline
$\vec{\bm{v}}$&$a$&$b$\\
\hline
$\vec{\bm{v'}}$&$a$&&$b$\\
\hline
$\vec{\bm{v_1}}$&&$a$&$b$\\
\end{tabular}
\caption{Demand vectors and the corresponding assignment of agents.}
\label{tab7}
\end{table} 
By Lemma~\ref{lem:sw1}, $(\vec{v},\vec{v'})$ have switching cost 1. Let $b$ be the mobile agent for $(\vec{v},\vec{v'})$, as shown in Table~\ref{tab7}. Let $s$ be a task with $s\not=i,t$. Let $\vec{v_1}$ be such that $(\vec{v},\vec{v_1})$ are $(s,t)$-adjacent. Since task $t$ is of type 2 with respect to $\vec{v}$, $(\vec{v},\vec{v_1})$ has switching cost 2 and intermediate task $i$. Let $a$ be the $(s,i)$-mobile agent, as shown in Table~\ref{tab7}. 

If the $(i,t)$-mobile agent for $(\vec{v},\vec{v_1})$ is some agent $c\not=b$, then $\vec{v'}$ and $\vec{v_1}$ disagree on the position of $a$, $b$, and $c$, which is impossible since $(\vec{v'},\vec{v_1})$ are adjacent. Thus, $b$ is the $(i,t)$-mobile agent for $(\vec{v},\vec{v_1})$ as shown in Table~\ref{tab7}. From Table~\ref{tab7}, it is clear that $(\vec{v'},\vec{v_1})$ are $(s,i)$-adjacent and have switching cost 1, and $(\vec{v'},\vec{v})$ are $(t,i)$-adjacent and have switching cost 1. Thus, $i$ is of type 1 with respect to $\vec{v'}$.

By Lemma~\ref{lem:all2}, task $i$ is the only task of type 1 with respect to $\vec{v'}$, so task $t$ is of type 2. It remains to show that task $t$ has intermediate task $i$ with respect to $\vec{v'}$. If task $t$ has a different intermediate task $i'$ with respect to $\vec{v'}$, then by Lemma~\ref{lem:tech}, task $i'$ is of type 1 with respect to $\vec{v'}$, but we already know that task $i$ is the only task of type 1 with respect to $\vec{v'}$. Thus, task $t$ has intermediate task $i$ with respect to $\vec{v'}$.
\end{proof}

\section{Acknowledgments} We would like to thank Yufei Zhao for a discussion.

\bibliography{proj}

\appendix

\end{document}